\documentclass[a4paper,USenglish,numberwithinsect,cleveref]{lipics-v2021}

\usepackage{booktabs}
\usepackage{mathtools}
\usepackage{tikz}
\usetikzlibrary{arrows.meta,decorations.pathreplacing,positioning}
\usepackage[paper,cleveref=compatibility]{knowledge}
\knowledgeconfigure{notion,diagnose line=true,label scope=false}
\AddToHook{cmd/intro/before}{\phantomsection}
\AtBeginDocument{\let\label\LaTeXlabel}
\knowledge{conditional membership}{notion}
\knowledge{promise}{notion}
\knowledge{target}{notion}
\knowledge{letter query}{notion}
\knowledge{length}{notion}
\knowledge{promised word}{notion}
\knowledge{target value}{notion}
\knowledge{decision tree}{notion}
\knowledge{output}{notion}
\knowledge{depth}{notion}
\knowledge{correct}{notion}
\knowledge{query complexity}{notion}
\knowledge{partial word}{notion}
\knowledge{completion}{notion}
\knowledge{partial word shatters}{notion}
\knowledge{shattered pair}{notion}
\knowledge{transcript}{notion}
\knowledge{cost}{notion}

\knowledge{constant regime}{notion}
\knowledge{logarithmic regime}{notion}
\knowledge{linear regime}{notion}
\knowledge{regular language}{notion}
\knowledge{semigroup}{notion}
\knowledge{monoid}{notion}
\knowledge{monoid morphism}{notion}
\knowledge{identity}{notion}
\knowledge{idempotent element}{notion}
\knowledge{idempotent power}{notion}
\knowledge{recognizes}{notion}
\knowledge{mutual presentation}{notion}
\knowledge{compatible with multiplication}{notion}
\knowledge{contextual guard relation}{notion}
\knowledge{$s$-contextual matched-guard relation}{notion}
\knowledge{contextual matched-guard relation}{notion}

\knowledge{tolerance}{notion}
\knowledge{tolerance closure}{notion}
\knowledge{k-equal-length equivalence}{notion}
\knowledge{equal-length relation}{notion}
\knowledge{relation shatters}{notion}
\knowledge{guard relation}{notion}
\knowledge{guard tolerance}{notion}
\knowledge{matched-guard relation}{notion}
\knowledge{matched-guard tolerance}{notion}
\knowledge{active interval}{notion}
\knowledge{block query}{notion}
\knowledge{interval}{notion}
\knowledge{big interval}{notion}

\knowledge{zero}{notion}
\knowledge{diagonal relation}{notion}
\knowledge{diagonal pair}{notion}

\knowledge{nondeterministic finite automaton}{notion}
\knowledge{deterministic finite automaton}{notion}
\knowledge{epsilon-transition}{notion}
\knowledge{transition-relation morphism}{notion}
\knowledge{Boolean matrix}{notion}

\newcommand{\apxmode}{append}
\usepackage[appendix=\apxmode,bibliography=common]{apxproof}
\renewcommand{\appendixproofname}[2]{Proof of #1~#2}

\newtheoremrep{theorem}{Theorem}[section]
\newtheoremrep{lemma}[theorem]{Lemma}
\newtheoremrep{claim}[theorem]{Claim}
\newtheoremrep{proposition}[theorem]{Proposition}
\newtheoremrep{corollary}[theorem]{Corollary}

\crefname{example}{Example}{Examples}
\Crefname{example}{Example}{Examples}
\crefname{claim}{Claim}{Claims}
\Crefname{claim}{Claim}{Claims}
\crefname{appendix}{Appendix}{Appendices}
\Crefname{appendix}{Appendix}{Appendices}

\newcommand{\N}{\mathbb{N}}

\newcommand{\Idem}{\operatorname{Idem}}
\newcommand{\cmgrel}{\mathrm{cmg}}
\newcommand{\cgrel}{\mathrm{cg}}
\newcommand{\grel}{\mathrm{g}}
\newcommand{\mgrel}{\mathrm{mg}}
\newcommand{\blank}{\square}
\newcommand{\taulen}{\mathrel{\tau_{\mathrm{len}}}}
\newcommand{\taulenk}{\mathrel{\tau_{k,\mathrm{len}}}}
\newcommand{\taulenx}[1]{\mathrel{\tau_{#1,\mathrm{len}}}}
\newcommand{\taug}{\mathrel{\tau_{\grel}}}
\newcommand{\taumg}{\mathrel{\tau_{\mgrel}}}
\newcommand{\calT}{\mathcal{T}}
\newcommand{\const}{\mathrm{const}}
\newcommand{\lin}{\mathrm{lin}}

\title{Conditional Membership in Regular Languages}
\titlerunning{Conditional Membership in Regular Languages}
\author{Antoine Amarilli}
  {Univ. Lille, INRIA, CNRS, Centrale Lille, UMR 9189 CRIStAL, F-59000 Lille, France}
  {antoine.a.amarilli@inria.fr}
  {https://orcid.org/0000-0002-7977-4441}
  {}
\author{Sebastien Felipe Labbe}
  {Univ. Lille, CNRS, INRIA, Centrale Lille, UMR 9189 CRIStAL, F-59000 Lille, France}
  {sebastien.labbe@univ-lille.fr}
  {https://orcid.org/0009-0007-1790-7820}
  {}
\author{Charles Paperman}
  {Univ. Lille, CNRS, INRIA, Centrale Lille, UMR 9189 CRIStAL, F-59000 Lille, France}
  {charles.paperman@univ-lille.fr}
  {https://orcid.org/0000-0002-6658-5238}
  {}
\authorrunning{A. Amarilli, S. F. Labbe, and C. Paperman}
\Copyright{Antoine Amarilli, Sebastien Felipe Labbe, and Charles Paperman}
\category{Track B: Automata, Logic, Semantics, and Theory of Programming}
\ccsdesc[500]{Theory of computation~Regular languages}
\keywords{Query complexity, regular languages, promise problems, finite monoids}
\nolinenumbers
\hideLIPIcs

\begin{document}
\maketitle

\begin{abstract}
  We study a formulation of regular language membership as a promise problem, which we dub \emph{conditional membership}: we fix a \emph{promise} language $P$ and a \emph{target} language $T \subseteq P$, we are given a word $w$ which is promised to belong to~$P$, and we must determine whether $w$ belongs to~$T$.
  We study the deterministic worst-case query complexity of this task, when the length $n$ of~$w$ is known in advance and where each query examines the $i$-th position of~$w$ for some $1 \leq i \leq n$ and returns the $i$-th letter $w_i$ of~$w$. Our goal is to characterize the asymptotic complexity 
  as a function of the word length~$n$, depending on the fixed languages $P$ and $T$.

  Our main result is to show that the complexity of this task admits a trichotomy between three regimes: $O(1)$ queries, $\Theta(\log n)$ queries, and $\Theta(n)$ queries. We further show that the trichotomy criterion is decidable, namely, 
  it is PSPACE-complete to decide which regime applies, given nondeterministic automata $A_P$ and $A_T$ respectively representing~$P$ and~$T$. %
\end{abstract}

\vskip\topmattervskip\baselineskip\noindent\noindent\textcolor{lipicsGray}{\fontsize{9}{12}\sffamily\bfseries Usage of AI\enskip} {\fontsize{9}{12}\selectfont 
 AI tools assisted with drafting, editing, proofreading, background research, 
 proposing some technical arguments in discussions with the authors (in particular the lower bound of \cref{thm:effective-classification}),
 and reviewing the ideas and proofs, including checks for gaps, missing assumptions, edge cases and counterexamples, and suggestions for simpler proofs. The authors retain full responsibility for the final manuscript.

\clearpage

\section{Introduction}\label{sec:introduction}

We study the standard question of deciding whether an input word $w$ belongs to a
fixed regular language~$T$. The task is well-known to admit a linear-time algorithm,
by feeding~$w$ to an automaton representing~$T$. Further,
this bound is generally optimal
if we must read the entire input~$w$. However, in settings where $w$ is stored in an array and
we can examine any cell in constant time, then there are cases where we can
decide membership more efficiently. For instance, with the target language $T =
\Sigma^* a$, we can decide membership in constant time, simply by examining the
last cell. Likewise, for
the language $T = a(\Sigma\Sigma)^*b ~|~ c(\Sigma\Sigma)^*\Sigma d$, we can decide
membership simply by examining the first and last letter of the word, if we
assume that the length of the word (and hence its parity) is known in advance. Thus, there
are regular languages for which we can solve membership with just $O(1)$
\emph{queries}. A characterization of 
such languages is given
in~\cite{AaronsonGrierSchaeffer2019}\footnote{The main focus
of~\cite{AaronsonGrierSchaeffer2019} is
regular language membership in a quantum model, which we do not investigate.}, together with a proof
that all other regular languages require linear query complexity, i.e.,
$\Theta(n)$ queries on a word of length~$n$. 

However, in some cases we may be able to test membership to such languages more
efficiently if we can leverage 
some hypotheses on the word~$w$. For instance, we may already know that $w$ is a
valid Unicode string; or more generally we may have been given the information that $w$ belongs
to some language and want to test membership to another. This motivates us to
study regular membership posed as a promise problem~\cite{goldreich2006promise}, which we dub
\emph{conditional membership}. Specifically, we fix a target language $T$
together with a promise language $P$, we are given a word $w$ which is promised
to lie in~$P$, and we wish to determine whether $w$ belongs to~$T$ (the
algorithm can return an arbitrary answer if we try to run it on a word
outside of~$P$).

To our knowledge, this natural conditional membership problem has
not been studied for regular languages, despite its proximity to many similar
questions. The one
exception is the conclusion of~\cite{AaronsonGrierSchaeffer2019} which mentions
it as an open problem, and
already highlights an intriguing phenomenon: with the 
target language $T_0 \coloneq (\Sigma\Sigma)^* (01) \Sigma^*$ asking for an
occurrence of~$01$ at an odd position, and the 
language $P_0 \coloneq 0^*1^*$ asserting that the input is sorted, they notice
that we can decide membership to~$T_0$ conditional to~$P_0$ using binary search,
i.e., with an intermediate regime of $O(\log n)$ queries.
This seems  not to have been investigated further %
since~2019, leaving open the question of whether other regimes are possible (beyond
constant, logarithmic, and linear), and how to characterize the language pairs
that achieve them.

\subparagraph*{Contributions.}
In this paper, we give an effective trichotomy on the query complexity of
conditional membership for arbitrary pairs of a regular target language~$T$ and a
regular promise language~$P$.
We assume for simplicity that $T \subseteq P$ throughout, we assume that the
length~$n$ of the input word $w$ is known in advance, and we consider the model
where $w$ can be accessed by queries that determine the $i$-th letter $w_i$ for
each $1 \leq i \leq n$ (i.e., random-access to the cells of an array that stores~$w$).
We study the
\emph{worst-case query complexity}, i.e., the complexity for length~$n$ is the
minimum height of a decision tree that correctly classifies membership to~$T$ on
input words of length~$n$ that belong to~$P$. To avoid periodic phenomena, we
regularize the complexity by making it monotone, i.e., we study the asymptotic
regime as~$n$ grows of the query complexity obtained by taking the maximum
optimal decision tree height across all lengths  $0 \leq n' \leq n$.

Our trichotomy result is then the following, with the three cases exemplified
in \cref{tab:examples}:

\begin{theorem}
  For any fixed regular language pair $T \subseteq P$, the query complexity of
  conditional membership to~$T$ assuming~$P$ follows one of three regimes:
  \begin{itemize}
    \item \textbf{Constant}, i.e., $O(1)$ queries
    \item \textbf{Logarithmic}, i.e., $\Theta(\log n)$ queries
    \item \textbf{Linear}, i.e., $\Theta(n)$ queries
  \end{itemize}
  Further, the trichotomy criterion is effective: given nondeterministic
  automata $A_P$ and $A_T$ that accept $P$ and~$T$ respectively, it is
  PSPACE-complete to decide
  which case applies.
\end{theorem}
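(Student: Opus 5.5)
We plan to prove the trichotomy by going through a finite monoid $M$ and a morphism $\mu\colon\Sigma^*\to M$ that recognizes both $P$ and $T$, for instance the transition monoid of the product of automata for $P$ and $T$. Every question is then phrased in terms of words that agree on their queried positions and that can be told apart only by unqueried positions. We fix the length $n$ and consider \emph{partial words}: words over $\Sigma\cup\{\blank\}$ that record what has been queried so far. A decision tree is correct exactly when every leaf corresponds to a partial word whose completions in $P$ are either all in $T$ or all outside $T$. Call a partial word \emph{ambiguous} if it has completions $u\in T$ and $v\in P\setminus T$. Both bounds will then be proved through combinatorial witnesses, namely \emph{shattered pairs}, whose existence is expressed by relations on $M$.

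The linear regime comes first. We will define a relation on $M$: elements $s,t$ are related when there are factors $x\in\mu^{-1}(s)$ and $y\in\mu^{-1}(t)$ of the same length that differ in their letters. We then take the closure of this relation under multiplication and idempotent powers. This gives the \emph{tolerance} / \emph{guard tolerance} objects mentioned in the notions list. The linear lower bound will be an adversary argument. If some context $\alpha\,e\cdots e\,\beta$ with an idempotent $e$ admits a "pumpable" pair, meaning two equal-length words $u_1,u_2$ with $\mu(u_1)$ and $\mu(u_2)$ both in the $e$-context, one of which completes into $T$ and the other into $P\setminus T$, then we can build $\Theta(n)$ disjoint blocks. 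In each block the adversary keeps the choice between $u_1$ and $u_2$ open, so the algorithm must touch every block. Conversely, if no such pattern exists, we will give an $O(\log n)$ algorithm. It repeatedly uses binary search to locate the unique "switch" position, in the spirit of the $0^*1^*$ example, over a bounded number of phases. This relies on showing that, inside any big interval, all promised completions are determined by $O(1)$ boundary positions, up to a monotone ambiguity that can be resolved by binary search.

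The constant/logarithmic boundary uses the same mechanism one level lower. The logarithmic lower bound is again an adversary argument. Suppose there is a "matched-guard" pattern: an idempotent $e$ with two letters (or factors) $a,b$ such that words of the form $x\,a^i b^{k-i}\,y$ lie in $P$ for every $i$, and membership in $T$ depends on $i$ (for example, on its parity). Then each query reveals only which side of the switch point it falls on, so at least $\log k$ queries are needed, and we can take $k=\Theta(n)$. If no such pattern exists, we will show that every ambiguous partial word is disambiguated by querying a constant number of positions near its ends, which gives $O(1)$ queries, as in the Aaronson--Grier--Schaeffer characterization specialized to $P$. The monotone regularization (taking the maximum over $n'\le n$) lets us handle lengths in arithmetic progressions through pumping.

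For effectiveness, each criterion asks for the existence of monoid elements and idempotents with certain properties. Over the transition monoid of an NFA, these are Boolean matrices of polynomial size. So we can guess a witness and verify it on the fly in NPSPACE $=$ PSPACE, guessing words letter by letter and keeping only the current matrices. PSPACE-hardness will come from NFA universality. Given $A$, we build $P=\Sigma^*$ and a $T$ such that, for instance, $(\#^*\!\cdot$ stuff$)$ gives linear complexity when $L(A)\neq\Sigma^*$ and constant complexity otherwise; a similar gadget separates log from constant. The main obstacle is the upper bound in the non-linear case: proving that the absence of the linear pattern really forces a structure searchable by $O(1)$ binary searches. We would first try an induction on the $\mathcal J$-depth of $M$, using a Ramsey-type factorization (Simon's factorization forest) to reduce each level to a single idempotent context in which the only remaining freedom is monotone.
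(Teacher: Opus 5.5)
Your overall architecture (a common monoid, shattered pairs, prefix/suffix queries for $O(1)$, binary search for $O(\log n)$, adversary lower bounds, implicit Boolean matrices, NFA universality) matches the paper's. However, both criteria are misidentified, and the logarithmic upper bound is missing its key idea.

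\textbf{The criteria.} For the constant/logarithmic boundary, your ``switch'' pattern $x\,a^ib^{k-i}\,y$, with $T$-membership depending on $i$, is not the complement of the constant case. Take $P=a^*(c\mid d)b^*$ and $T=a^*cb^*$. No such pattern exists: for $k\ge 2$ neither repeated factor may contain $c$ or $d$, so membership is fixed by $x,y$. Yet $\Omega(\log n)$ queries are needed.

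What the prefix/suffix argument actually extracts from two completions $s\alpha t$, $s\beta t$ is a pair $(exf,eyf)$ with two possibly \emph{different} idempotents (taken from $s$ and $t$) and arbitrary equal-length middles. Non-shattering must be imposed on the tolerance closure of such pairs. The lower bound then needs a normal form for that closure: one context, with $e,f,x,y$ all having preimages of a single common length, obtained via a conjugation trick. This lets the middle slide along a grid of guard blocks.

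For the logarithmic/linear boundary, asking only $|u_1|=|u_2|$ is too weak. The middles must have the same length as a preimage of the guard $e$. For $P=(ab)^*(c\mid d)(ab)^*$ and $T=(ab)^*c(ab)^*$, your pattern as stated fires with $e=\mu((ab)^\omega)$, $u_1=c$, $u_2=d$. But querying an odd position returns $a$ or $b$ according to whether the special letter lies to its right or left, so binary search gives $O(\log n)$.

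A single guarded context also does not suffice. For $T=a^*ca^*b^*cb^*$ and $P=T\cup a^*da^*b^*db^*$, no single $(\alpha eu_1e\beta,\alpha eu_2e\beta)$ shatters: both special letters would have to lie in $u_1$, so a nonempty preimage of $e$ would have to be both in $a^+$ and in $b^+$. Only a product of two such pairs shatters, and the complexity is linear. So the criterion must be a tolerance closure, and the lower bound must handle up to $|M|^2$ coordinated substitutions at once. Your ``each block keeps the choice open'' adversary does not do this, since the promise generally forbids independent choices across blocks. The paper needs a concatenation lemma proved by a cost-preservation adversary.

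\textbf{The $O(\log n)$ upper bound.} You flag this as the main obstacle, and there is no argument yet. The picture you describe (a unique switch, a monotone ambiguity) is not the general one. Several boundaries may have to be located, and the direction of one search may only become known once another has finished, as in $e^*(a^*cb^*\mid b^*ca^*)f^*$.

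The missing idea is elementary and needs no $\mathcal J$-depth induction or factorization forests. Read the word in blocks of a fixed length $K$, chosen so that every block contains a nonempty factor with idempotent image and every idempotent has a preimage of length $K$. Then for every block $v$ and all block words $X',Y'$, the images $\mu(vX'v)$ and $\mu(vY'v)$ are related by the matched-guard tolerance; to see this, pad the shorter middle with copies of the idempotent to equalize lengths.

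Consequently, two completions of a transcript that agree on its \emph{active} gaps have related images. The active gaps are the first gap and every gap whose left neighbouring queried block value does not occur again among the queried blocks further right. There are at most $|\Sigma^K|+1$ active gaps, and always querying the middle block of a largest active gap halves the largest active size every $|\Sigma^K|+1$ queries.

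Two smaller points on effectiveness. PSPACE membership for closures needs polynomial bounds on witness lengths: fewer than $|M|^2$ factors in a closure product, and equal-length witnesses of length at most $|M|^2$, so that counters stay polynomial. Your hardness outline is workable, and no separate log-versus-constant gadget is needed.
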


To prove this result, we use the framework of algebraic automata theory. Specifically,
the
characterization is based on a so-called \emph{mutual presentation} of~$P$
and~$T$, i.e., an arbitrary finite monoid $M$ that recognizes them both. Because
the regime only depends on the choice of language pair, any choice of mutual
presentation will give the same regime. The main challenge that we face is
that conditional membership is a promise problem parameterized not by a single
language, but by a \emph{language pair}. For this reason, usual tools such as
language varieties are not immediately applicable, and new methods are needed: the
present paper can also be seen as first steps towards the development of such
algebraic techniques in a new context, which may be relevant beyond the specific
problem that we study.

At a high level, our proof proceeds by defining two \emph{tolerances} on the
mutual presentation~$M$: tolerances are similar to congruences on monoids but
forego the transitivity requirement~\cite{Radelecki1991}. The tolerances
intuitively relate pairs of elements that may be hard to distinguish, i.e.,
which would require at least some asymptotic number of queries to be told apart. The first
tolerance is the \emph{guard tolerance} which relates pairs of the form $(exf,
eyf)$, where $e$ and $f$ are two idempotents with nonempty preimages and where
$x$ and $y$ are two elements with equal-length preimages. The
\emph{matched-guard tolerance} is a refinement of this relation obtained by
relating pairs of the form $(exe, eye)$ with \emph{one common} surrounding
idempotent $e$ (still having 
a nonempty preimage), and with different preimage length requirements.
We are interested in whether these tolerances \emph{shatter}
the promise-target pair, i.e., whether they relate two elements whose 
preimages are in~$P$ but where one of them has preimages in~$T$ and the other
does not. Intuitively, when a tolerance is not shattered, then it justifies that
a conditional membership algorithm can safely avoid querying some parts of the
word. And whenever a tolerance is shattered, then the witnessing
pair can be used to design a lower bound justifying that some number of queries
is necessary to distinguish between the two elements (i.e., to locate the
equal-length middle part that can achieve either $x$ or $y$ depending on its
contents, and which may be ``hidden'' in a long
sequence of idempotent factors).

Our specific technical upper bounds are then the following. First, when the guard
tolerance is not shattered, we show that the language pair has constant query
complexity. This is not too difficult and is simply done by querying a constant-length
prefix and suffix of the input word, similarly to the constant query complexity
case of regular language membership without a promise. Second, when the
matched-guard tolerance is not shattered, we show the language pair has query
complexity $O(\log n)$. The idea for this upper bound is a binary search,
similar to the example from~\cite{AaronsonGrierSchaeffer2019} mentioned earlier, but made much
more complicated by the fact that we need to perform constantly many binary
searches with some elements playing different roles in different parts of the
word.
In both upper bounds, we rely on the fact that sufficiently
long word infixes are guaranteed to contain idempotent factors as a consequence
of Ramsey's theorem (see~\cite[Section~1.2]{Jecker2021}), so we always query
word blocks that are long enough to contain idempotents (to be used for pairs in
our tolerances).

As for lower bounds, our methodology is the same for both: when one of the
tolerances shatters the promise-target pair $P,T$, we use a shattering pair in
the tolerance as a building block to design a reduction. We always reduce from
two concrete promise-target pairs, given as the two bottom rows of \cref{tab:examples}. The
first one,
with $T = a^* c b^*$ and $P = a^* (c|d) b^*$,
has a logarithmic query complexity lower bound, intuitively because we
must locate the middle part.
The second one,
with $T = a^* c a^*$ and $P = a^* (c|d) a^*$,
has a linear lower bound.
To reduce from these languages, we design general-purpose
algebraic lemmas showing that intractability is intuitively preserved under some
operations: moving to a more specific promise-target specification, taking
quotients, taking inverse image by length-multiplying morphisms, and performing
a form of concatenation.

We last show that all our algebraic techniques can be applied in polynomial time on
the mutual presentation of the languages, so that the trichotomy criterion can
be decided in PSPACE given automata representing the promise and target
languages, going via an implicit representation of the mutual presentation;
and show PSPACE-hardness of telling constant and linear languages apart. We next review some related work, and give the outline of the paper.

\subparagraph*{Related work.}
Our result uses algebraic automata theory, which is a
common methodology to classify the complexity of problems
parameterized by regular languages, provided that the tractable or intractable
cases are closed under some algebraic operations (e.g., form a variety).
Examples of the methodology include \cite{BCST92} (for circuit complexity), \cite{TessonTherien2005} (for
\emph{communication complexity}),
\cite{AmarilliJachietPaperman2021} (for
\emph{dynamic membership}),
\cite{martens2023trichotomy} (for \emph{trail queries on graph databases}),
\cite{DBLP:journals/theoretics/GanardiHLMS25} (for \emph{sliding window
membership})
and \cite{AmarilliLabbePaperman2026} (for
\emph{out-of-order membership}). However, in all these works, the studied
problem is
parameterized by one single language, unlike ours. 
We also note that our work classifies arbitrary regular language pairs and not
monoids, i.e., we do not assume the presence of a \emph{neutral letter}: this is
achieved by some of the works
above~\cite{AmarilliJachietPaperman2021,martens2023trichotomy,DBLP:journals/theoretics/GanardiHLMS25}
but not others~\cite{TessonTherien2005,AmarilliLabbePaperman2026}.

Other works have investigated the complexity of membership to regular languages
in the so-called \emph{property testing} model. This is somewhat reminiscent of
our promise problem: the access model is the same, and the tester must accept
for words in the target and reject words that are $\epsilon$-far from
the target; otherwise the tester can answer arbitrarily.
This problem was introduced in~\cite{AlonKrivelevichNewmanSzegedy2001}, with
recent
work showing a trichotomy on the query complexity of
property testing depending on the target
language~\cite{BathieFijalkowMascle2025}. However, they do not cover our
setting where the target and promise language are arbitrary regular languages
(rather than being defined in terms of the distance to a single language).

Our problem is also reminiscent of 
the question of regular language \emph{separation}~\cite{PlaceZeitoun2016}: given two disjoint languages, the objective is
to find a separator (often in a restricted language class) that accepts the
first and rejects the second. However, the goal of finding a separator
is different from our task of efficient conditional membership testing.
Furthermore, conditional membership
does not boil down to simply rewriting the promise language $P$ and target
language $T$ to a
single separator regular language $T'$ on which to test membership (unconditionally).
Indeed, if this were the case, then conditional membership would only admit constant and
linear complexity regimes, but the example of~\cite{AaronsonGrierSchaeffer2019}
already shows that conditional membership also has a logarithmic regime.

One setting in which conditional membership for regular languages has been
significantly studied
is through the lens of 
\emph{state complexity}, i.e., determining the minimum number of states
needed in an automaton that must separate one language from
another~\cite{geffert2015classical}. 
The best known instance of this setting is
perhaps the \emph{separating words problem}~\cite{goralvcik1986discerning}, 
where the two languages to be separated are singleton languages.
This relates to works that have investigated three-valued
automata variants (which may accept, reject, or return a ``don't know'' answer)
in the context of verification: these are called \emph{separating DFAs}
in~\cite{chen2009learning} and \emph{3DFAs} in~\cite{dell2024dfaminer}.
In all these works, however, the focus is on optimizing the state complexity of separating
automata, unlike our work which focuses on query complexity and is
independent of the representation of the languages.

Perhaps the most closely related work to ours is the work 
of Moshkov~\cite{Moshkov2000}. He fixes a promise language~$P$, and studies
the following problem: given an unknown
word~$w$ which is promised to belong to~$P$, study how many queries are needed
to determine $w$ \emph{exactly} in the worst case. His work also shows a trichotomy
on~$P$ with three different regimes 
(constant, logarithmic, and linear), with a criterion defined in terms of an
automaton for~$P$. While the problems are different, the upper bounds
of~\cite{Moshkov2000} transfer to our setting in the following sense:
whenever a promise language $P$ enjoys a sublinear
upper bound thanks to~\cite{Moshkov2000}, then conditional membership for
\emph{any} target language~$T$ relative to $P$ enjoys the same upper bound. This is
because determining $w$ exactly is sufficient to determine whether
$w$ belongs to any target language. However, we are not aware of a reduction in
the other direction, because the focus of~\cite{Moshkov2000} is on (non-fixed)
single-word target languages. Further, and more importantly,
the classification of~\cite{Moshkov2000} only applies to the promise language,
whereas our result characterizes the complexity of conditional membership as a
function of both languages.

\begin{table}[t]
\centering
\small
\caption{Examples of conditional membership problems.}
\label{tab:examples}
\begin{tabular}{@{}p{0.18\textwidth}p{0.27\textwidth}p{0.23\textwidth}p{0.20\textwidth}@{}}
\toprule
\raggedright Query complexity & Promise $P$ & Target $T$ & Defined in\\
\midrule
$O(1)$ & $(c\mid d)a^*$ & $ca^*$ & \Cref{exa:constant}\\
$\Theta(\log n)$ & $a^*(c\mid d)b^*$ & $a^*cb^*$ & \Cref{exa:promise}\\
$\Theta(n)$ & $a^*(c\mid d)a^*$ & $a^*ca^*$ & \Cref{exa:linear}\\
\bottomrule
\end{tabular}
\end{table}

\subparagraph*{Paper outline.}
We give prerequisites, formally define our problem, and state our main result in
\cref{sec:preliminaries}. The proof then spans
\cref{sec:constant,sec:logarithmic}. In 
\cref{sec:constant}, we define the \emph{guard tolerance}, show the constant
upper bound when the guard tolerance is not shattering, and show the logarithmic
lower bound when it is shattering. In 
\cref{sec:logarithmic}, we define the \emph{matched-guard tolerance}, show the
logarithmic
upper bound when the matched-guard tolerance is not shattering, and show the
linear
lower bound when it is shattering. 
Last, in \cref{sec:effective}, we show that the trichotomy criterion is
effective, by giving our PSPACE-completeness result. We conclude and propose
future work directions in \cref{sec:conclusion}. For lack of space, most
detailed proofs are deferred to the appendix.

\section{Preliminaries and Main Result}\label{sec:preliminaries}

Write $\N=\{0,1,2,\ldots\}$, $\N_+\coloneq\N\setminus\{0\}$, and $[n]=\{1,\ldots,n\}$ for $n\in\N$, with $[0]=\varnothing$.
All logarithms are base two unless otherwise specified.
We assume the finite alphabet $\Sigma$ has at least two letters, because for $|\Sigma|=1$ knowing the length suffices to determine the word without making queries. Write $\Sigma^*$ for all words over~$\Sigma$, $\varepsilon$ for the empty word, $\Sigma^+\coloneq\Sigma^*\setminus\{\varepsilon\}$, 
$\Sigma^n$ for words of length $n \in \N$. For $w\in\Sigma^n$, its \intro{length} is $|w|=n$, and $w_i\in\Sigma$ denotes its letter at position $i\in[n]$.

We use the usual definitions and notation for \intro[regular language]{regular
languages}, regular expressions, union, concatenation, and Kleene star
as in~\cite{HopcroftMotwaniUllman2001}.
For 
$u\in\Sigma^*$, we write $u^+=uu^*$. For $L\subseteq\Sigma^*$ and $u,v\in\Sigma^*$, the \intro{quotient}
$u^{-1}Lv^{-1}$ is $\{w\in\Sigma^*\mid uwv\in L\}$.

Our \kl{conditional membership} problem will be defined by two \kl[regular language]{regular languages} $T\subseteq P\subseteq\Sigma^*$, with \intro{promise} $P$ and \intro{target} $T$. The unknown \intro{promised word} $w$ belongs to $P$, 
its length $n\coloneq|w|$ is known, and
its \intro{target value} is $1$ if $w\in T$ and $0$ otherwise: we will have to determine it by making \intro[letter query]{letter queries}, where a query at $i\in[n]$ reveals the letter $w_i$. We will introduce \kl{decision trees} to formally define solutions to this problem and thus to measure complexity.

\subparagraph*{Decision trees and partial words.}
For $n\in\N$ a word length, a \intro{decision tree} $\calT$ is a finite rooted tree whose internal nodes are labeled by positions in $[n]$ and have one child edge per letter in $\Sigma$, and whose leaves are labeled $0$ or $1$. 
We require that any two distinct internal nodes on the same root-to-leaf path have different position labels.
Thus, on every root-to-leaf path (called a \intro{branch}), no position is queried twice.
The semantics of~$\calT$ is that it maps every input word $w \in \Sigma^n$ to a value in
$\{0,1\}$ obtained by following a \kl{branch} defined as follows: we start at the root, whenever we reach
a node labeled $i$ then we query $w_i$ and follow the child edge
labeled with the letter $w_i \in \Sigma$ that we receive; and we eventually
reach a leaf whose label is in $\{0,1\}$.
We say that $\calT$ \intro{accepts} $w$ if the label is~$1$, and that it \intro{rejects}~$w$ if the label is~$0$.
We say that~$\calT$
is \intro{correct} for $(P,T)$ at length $n$ if the language $L$ of the words that it accepts is such that $(L \cap P) \cap \Sigma^n = T \cap \Sigma^n$, i.e., the tree must
accept the words of $T$ and reject the words of $P\setminus T$; the answer for words outside of~$P$ is unrestricted.
We simply call $\calT$ \kl{correct} and omit the pair $(P,T)$ and the length~$n$ when clear from context.

The \intro{depth} of~$\calT$
is the maximum number of queries that can be made on a \kl{branch}; the depth of a node~$d$ is the number of queries made before reaching~$d$. Note that in particular, for $n=0$, the tree must always consist of a single leaf at depth zero. 
The \intro{query complexity} $Q_{P,T}\colon\N\to\N$ at $n$ is the maximum across all lengths $0\le k\le n$ of the minimum \kl{depth} of a \kl{correct} \kl{decision tree} at length $k$. 
The reason why we take the maximum over $0 \le k \le n$ is to make the function nondecreasing:
this regularization avoids periodic phenomena that would complicate the study of asymptotic complexity (e.g., if $P$ contains no words of even length, then a correct decision tree of
\kl{depth} 0 exists at every even length).
Of course, we always have $Q_{P,T}(n)\le n$, as witnessed by the tree that always queries every position (i.e., every branch has length~$n$ and nodes at depth $i$ are labeled by position~$i+1$ for each $0\le i<n$).
Note that our definitions focus on depth, not tree size (see~\cite{BuhrmanDeWolf2002} for a broader landscape of complexity measures).
Throughout the paper, 
we will study the semantics of decision
trees in terms of the \emph{partial words} specified by their branches.
Fix a fresh symbol $\blank\notin\Sigma$. For $n\in\N$, a \intro{partial word} of length $n$ is a word $w\in(\Sigma\cup\{\blank\})^n$. A \intro{completion} of~$w$ is a word of~$\Sigma^n$ obtained by replacing every $\blank$ of~$w$ with a letter of $\Sigma$. The \intro{transcript} of a node $d$ in a decision tree~$\calT$ is the partial word containing the answers to queries along the path from the root of~$\calT$ to~$d$ (not including the query that labels~$d$), and $\blank$ elsewhere. 
Note that $\calT$ reaches node~$d$ while reading a word $w \in \Sigma^n$ precisely when $w$ is a completion of the transcript of~$d$.
We say that a partial word $w$ \intro[partial word shatters]{shatters} $(P,T)$
if $w$ has two completions $u,v\in\Sigma^n$ forming a \intro{shattered pair}:
one lies in $T$ and the other in $P\setminus T$. The following easy lemma shows that we can in fact define the correctness of trees simply via shattered pairs:

\begin{lemmarep}
\label[lemma]{lem:insensitive-leaf}
If a \kl{decision tree} $\calT$ for words of length $n\in\N$ is \kl{correct}  for $(P,T)$ at length $n$, then for any leaf $\ell$ of~$\calT$ the
\kl{transcript} of~$\ell$ does not \kl[partial word shatters]{shatter} $(P,T)$. Conversely, if a \kl{decision tree}
$\calT$ ensures that all its leaves have a \kl{transcript} that does not \kl[partial word shatters]{shatter}
$(P,T)$, then we can make $\calT$ \kl{correct} for $(P,T)$ at length $n$
by assigning labels to its leaves.
\end{lemmarep}

\begin{proof}
  We first prove the first part of the claim by establishing the
  contrapositive. Assume 
  that there is a leaf $\ell$ of~$\calT$ whose \kl{transcript} \kl[partial word shatters]{shatters}~$(P, T)$. This means that
  there are two \kl[completion]{completions} $u$ and $v$ of the \kl{transcript} of~$\ell$ such
  that precisely one of them is in $T$ and the other is in $P \setminus T$.
  As both $u$ and $v$ have length~$n$ and both reach $\ell$ when read
  by~$\calT$, the tree gives the same answer on both so it errs on one of them,
  i.e., $\calT$ is not \kl{correct}.

  For the second part of the claim, assume that all leaves of $\calT$ have a
  \kl{transcript} that does not \kl[partial word shatters]{shatter} $(P,T)$. Thus, for each leaf $\ell$, when
  considering the set $L_\ell$ of words that reach $\ell$ (which are precisely the
  \kl[completion]{completions} of the \kl{transcript} of~$\ell$), when restricting to the words of~$P$, there
  are two possibilities: either all belong to~$T$, or none of them do. We label
  $\ell$ by $0$ or $1$ accordingly. If the restriction of~$L_\ell$ to $P$ is empty, then we 
arbitrarily label the leaf with $0$. Letting $\calT'$ be the result of labeling all leaves in this way,
we obtain that $\calT'$ correctly accepts or rejects on any word of~$P$, so $\calT'$ is correct.
\end{proof}

Thus, in subsequent constructions that define decision trees, we can always leave the leaf labels of the tree unspecified:
by \cref{lem:insensitive-leaf}, it suffices to show that no leaf
\kl{transcript} \kl[partial word shatters]{shatters} $(P,T)$ to show that the leaves can be labeled to make the tree correct.

\subparagraph*{Algebraic prerequisites.}
We use \intro[semigroup]{semigroups}, \intro[monoid]{monoids}, \intro[monoid morphism]{monoid morphisms}, \intro[identity]{identity elements}, \intro[idempotent element]{idempotents}, \intro[subsemigroup]{subsemigroups}, \intro[submonoid]{submonoids}, 
\intro[direct product]{direct products},
and \intro{generated submonoids}~\cite[Section~1.1]{Steinberg2016}.
All \kl[semigroup]{semigroups} and \kl[monoid]{monoids} considered in this paper are finite; we omit the qualifier ``finite'' from now on.
For $M$ a monoid,
we write $1_M$ for its \kl{identity}, so $1_Mx=x1_M=x$ for $x\in M$. 
A \intro{zero} of~$M$ is an element $0_M\in M$ such that $0_Mx=x0_M=0_M$ for all $x\in M$.
For $r\in\N$, we write $x^r$ for the $r$-fold product, with $x^0=1_M$.
We define the \intro{diagonal relation} of~$M$ to be $\Delta_M\coloneq\{(x,x)\mid x\in M\}$, and its elements are called \intro[diagonal pair]{diagonal pairs}.
For a semigroup $S$, let $\Idem(S)\coloneq\{e\in S:e^2=e\}$ be the set of \kl[idempotent element]{idempotent elements}
of~$S$.
For $S$ a semigroup,
an \intro{idempotent power} $\omega$ is an integer in $\N_+$ such that $x^\omega$ is idempotent for every $x\in S$; for instance $\omega=|S|!$~\cite[Remark~1.3]{Steinberg2016}.

For a \kl[monoid morphism]{morphism} $\eta\colon \Sigma^*\to M$,
we say that $\eta$ \intro{recognizes}
$L\subseteq\Sigma^*$ if $L=\eta^{-1}(F)$ for some $F\subseteq M$. A language is
\kl[regular language]{regular} if and only if it is recognized by a
surjective \kl[monoid morphism]{morphism} into a
\kl{monoid}~\cite[Chapter~1, Sections~2.2 and~2.4]{Pin1986}.

When considering morphisms to a monoid,
we know by Ramsey's theorem that every sufficiently
long word contains a nonempty contiguous factor whose morphic image
is idempotent~\cite[Section~1.2]{Jecker2021}:

\begin{lemmarep}[\cite{Jecker2021}]
\label[lemma]{lem:idempotent-subfactor}
Let $\eta\colon\Sigma^*\to M$ be a morphism to a monoid of size $m$.
There exists $q\in\N_+$, depending only on $m$,
such that every
$w\in\Sigma^*$ with $|w|\ge q$ admits a factorization $w=uzv$ with
$u,v\in\Sigma^*$ and $z\in\Sigma^+$ such that $\eta(z)$ is idempotent.
\end{lemmarep}

\begin{proof}
By~\cite[Section~1.2]{Jecker2021}, there exists $q\in\N_+$, depending only on $m$, such that every word of length at least $q$ contains a nonempty contiguous factor whose image under $\eta$ is idempotent. Writing such a word as $w=uzv$, with $u,v\in\Sigma^*$ and $z\in\Sigma^+$, establishes the claim.
\end{proof}

We will also rely on the fact that, for a surjective
morphism $\eta\colon\Sigma^*\to M$, every element of $M$
has a preimage of length strictly less than $|M|$:

\begin{lemmarep}
\label[lemma]{lem:bounded-preimage}
Let $\eta\colon\Sigma^*\to M$ be a surjective morphism to a monoid of size $m$.
For every $s\in M$, there is a preimage
$w\in\Sigma^*$ such that $\eta(w)=s$ and $|w|< m$.
\end{lemmarep}

\begin{proof}
  Let $s \in M$. As $\eta$ is surjective, there is a preimage $w \in \Sigma^*$
  such that $\eta(w)=s$. Take $w$ to have the minimal length among such
  preimages. If $|w| < m$, then we are done, so assume by way of contradiction
  that $|w| \geq m$. Let us focus on the prefix of~$w$ of length $m$ by
  writing $w=a_1\cdots a_{m} w'$ with $a_1,\ldots,a_m\in\Sigma$ and $w' \in \Sigma^*$. Now,
  let us consider the 
$m+1$ prefix images
$\eta(\varepsilon), \eta(a_1),\ldots,\eta(a_1\cdots a_{m})$: each of these is an element in
the monoid~$M$ which has size $m$. By the
pigeonhole principle, we can 
choose $0\le i<j\le m$ such that $\eta(a_1 \cdots a_i) = \eta(a_1 \cdots
a_j)$. Thus, 
write $u\coloneq a_1\cdots a_i$,
$z \coloneq a_{i+1}\cdots a_j$, and
$v \coloneq a_{j+1}\cdots a_{m}$, and let us
write $w'' \coloneq u v w'$.  We have $\eta(w'') = \eta(u) \eta(v) \eta(w')$ and
$\eta(u) = \eta(uz)$ so $\eta(w'') = \eta(u) \eta(v) \eta(w') = \eta(uzv)
\eta(w') = \eta(w)$. Thus, $w''$ is also a preimage of $s$ and has strictly
smaller length, a contradiction. This concludes the proof.
\end{proof}

\subparagraph*{Algebraic tools for conditional membership.}
As we work with a language pair $T \subseteq P \subseteq \Sigma^*$, we introduce
the following definition that we use throughout the paper:

\begin{definition}[Mutual presentation]
A \intro{mutual presentation} of
$T\subseteq P\subseteq\Sigma^*$ consists of a surjective
morphism $\eta\colon\Sigma^*\to M$ into a monoid~$M$, and subsets
$M_T\subseteq M_P\subseteq M$ such that
$P=\eta^{-1}(M_P)$ and $T=\eta^{-1}(M_T)$.  We write it as
$(M,\eta,M_P,M_T)$.
\end{definition}

Such a \kl[mutual presentation]{presentation} always exists: given morphisms $\eta_P\colon\Sigma^*\to N_P$ and $\eta_T\colon\Sigma^*\to N_T$ recognizing $P$ and $T$, respectively, where $N_P,N_T$ are monoids, we combine them
into the morphism
$\eta \colon \Sigma^*\to N_P\times N_T$ 
to the \kl{direct product} of the monoids $N_P$ and~$N_T$
by mapping $w\in\Sigma^*$ to $(\eta_P(w),\eta_T(w))$, and then restrict the codomain of~$\eta$ to its image to ensure that $\eta$ is surjective.

Our criteria for \kl{conditional membership} will be phrased using \emph{tolerances}.
This is a standard notion~\cite{Radelecki1991} which is the analogue of \emph{congruences} but without requiring transitivity:

\begin{definition}[Tolerance]
\label[definition]{def:tolerance}
A \intro{tolerance} on a semigroup $S$ is a reflexive and symmetric relation $R\subseteq S\times S$ that is \intro{compatible with multiplication}: $(x,y),(x',y')\in R$ implies $(xx',yy')\in R$. For a monoid $M$, the \intro{tolerance closure} of $V\subseteq M\times M$ is the least tolerance on $M$ containing $V$.
\end{definition}

Equivalently, the \kl{tolerance closure} of $V$ is the submonoid $V_2$ of
$M\times M$ generated by the pairs of $V$, the converse pairs
$\{(y,x)\mid(x,y)\in V\}$, and the \kl[diagonal pair]{diagonal pairs}. Indeed,
$V_2$ is a tolerance that contains $V$ and that must be contained in every tolerance that contains~$V$.
Seeing the generating pairs as letters, the coordinatewise product
defines a surjective monoid morphism from the free monoid on these letters
onto~$V_2$. By \cref{lem:bounded-preimage}, every pair
in the \kl{tolerance closure} is a product of fewer than $|M|^2$
generators, since $V_2$ has at most $|M|^2$ elements.

When considering
\kl[tolerance]{tolerances}, we will want to know if they
\emph{\kl[relation shatters]{shatter}} the
sets $(M_P,M_T)$:

\begin{definition}
For $M_T\subseteq M_P\subseteq M$, a pair $(p,q)\in M\times M$
is a \kl{shattered pair} for $(M_P,M_T)$ if one of $p,q$ belongs
to~$M_T$ and the other belongs to~$M_P\setminus M_T$.
A relation $R\subseteq M\times M$
\intro[relation shatters]{shatters} $(M_P,M_T)$ if it contains
a \kl{shattered pair} for $(M_P,M_T)$.
\end{definition}

\subparagraph*{Result statement:}
Our main result in this paper is then the following trichotomy, which hinges on
tolerances $\taug$ and $\taumg$ that will be defined in the next sections:

\begin{theorem} 
\label[theorem]{thm:target-trichotomy}

Let $T\subseteq P\subseteq\Sigma^*$ be \kl[regular language]{regular
languages}, and let
$(M,\eta,M_P,M_T)$ be any \kl{mutual presentation} of $T$ and $P$.
We can compute \kl[tolerance]{tolerances} $\taug$ and $\taumg$ on $M$ such that
${\taumg}\subseteq{\taug}$ and such that exactly one of the following three cases occurs:
\begin{enumerate}[(1)]
    \item \textup{\bfseries \intro{constant regime}.}
      The \kl{tolerance} $\taug$ does not
    \kl[relation shatters]{shatter} $(M_P,M_T)$, and $Q_{P,T}(n)=O(1)$.

    \item \textup{\bfseries \intro{logarithmic regime}.}
      The \kl{tolerance} $\taug$ \kl[relation shatters]{shatters}
      $(M_P,M_T)$, but the \kl{tolerance} $\taumg$ does
    not \kl[relation shatters]{shatter} $(M_P,M_T)$, and
    $Q_{P,T}(n)=\Theta(\log n)$.

    \item \textup{\bfseries \intro{linear regime}.}
      The \kl{tolerance} $\taumg$ \kl[relation shatters]{shatters} $(M_P,M_T)$, and
    $Q_{P,T}(n)=\Theta(n)$.
\end{enumerate}
\end{theorem}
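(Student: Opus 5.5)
We will define the two tolerances, prove one upper bound and one lower bound for each, and then read off the three cases. Let $E$ be the set of idempotents of $M$ with a nonempty preimage, that is, idempotents $e$ with $e=\eta(z)$ for some $z\in\Sigma^+$. We take $\taug$ to be the \kl{tolerance closure} of the pairs $(exf,eyf)$ with $e,f\in E$ and $x,y$ images of two words of equal length. We take $\taumg$ to be the tolerance closure of the pairs $(exe,eye)$ with $e\in E$ and $x,y$ satisfying a suitable length condition; this condition is chosen so that the generators of $\taumg$ are among those of $\taug$, which gives ${\taumg}\subseteq{\taug}$. Both relations are computable from $M$, since generating a submonoid of $M\times M$ is finite. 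Each case of the trichotomy is then an upper bound together with a lower bound:
\begin{enumerate}[(a)]
\item if $\taug$ does not shatter, then $Q=O(1)$;
\item if $\taug$ shatters, then $Q=\Omega(\log n)$;
\item if $\taumg$ does not shatter, then $Q=O(\log n)$;
\item if $\taumg$ shatters, then $Q=\Omega(n)$.
\end{enumerate}
The bound $Q\le n$ always holds. Since ${\taumg}\subseteq{\taug}$, exactly one of the three cases occurs: if $\taug$ does not shatter, neither does $\taumg$, so we are in case (1); otherwise we are in case (2) or (3) according to whether $\taumg$ shatters.

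For (a), we query a prefix and a suffix of length $q$, where $q$ comes from \cref{lem:idempotent-subfactor}. This produces a factorization $w=u\,z\,m\,z'\,v$ with $\eta(z)=e$ and $\eta(z')=f$ idempotent. Any two completions of the transcript have the form $u\,z\,x\,z'\,v$ and $u\,z\,y\,z'\,v$ with $|x|=|y|$. Their images are $\eta(u)\,(e\eta(x)f)\,\eta(v)$ and $\eta(u)\,(e\eta(y)f)\,\eta(v)$. This is a pair of the tolerance, because a tolerance is reflexive and compatible with multiplication, so we may multiply on both sides by diagonal pairs. Since $\taug$ does not shatter, this pair is not a shattered pair, so by \cref{lem:insensitive-leaf} the leaves can be labeled correctly. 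Words shorter than $2q$ are handled by querying every position.

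For the lower bounds (b) and (d), take a shattered pair in the tolerance and write it as a product of fewer than $|M|^2$ generators, using \cref{lem:bounded-preimage} on the closure. We then build a reduction from the reference pairs $a^*cb^*\subseteq a^*(c|d)b^*$ (logarithmic) and $a^*ca^*\subseteq a^*(c|d)a^*$ (linear). We substitute $a\mapsto z_e$, $b\mapsto z_f$, and $c,d\mapsto$ equal-length preimages of $x$ and $y$, padded to a common length by powers of idempotent preimages. For concatenated generators we pad every coordinate simultaneously. We must first prove the reference lower bounds directly. For the logarithmic pair this is an adversary argument: each query of an $a$ or a $b$ halves the interval of possible positions of the middle letter. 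For the linear pair, the adversary answers $a$ at every queried position until $n-1$ positions have been queried. We then need transfer lemmas showing that the complexity does not decrease when we restrict to a more specific promise, take quotients, apply a length-multiplying morphism (one query to the image simulates at most a constant number of queries), or concatenate.

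For (c), we use an $O(\log n)$ algorithm that runs a constant number of binary searches over a decomposition of the word into long idempotent blocks, locating the boundaries where the matched idempotent context changes. Each pair of completions is then related through generators of the form $(exe,eye)$. This is the main obstacle. Here we have to choose the exact length condition in the definition of $\taumg$ so that both (c) and (d) hold, and we have to show that after the searches, the remaining unqueried segments lie between equal idempotents. The first thing to try is an induction on the number of distinct idempotent contexts, searching for the first block where the context changes.
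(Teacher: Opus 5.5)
Your architecture matches the paper's: the four bounds (a)--(d), the definition of $\taug$, the two reference pairs, and the transfer lemmas including concatenation. Your argument for (a) is also the paper's. But you leave open exactly what separates the logarithmic from the linear regime. The length condition in $\taumg$ is never fixed and the $O(\log n)$ algorithm is never given, so (c) and (d) are not established. This choice is not a formality. Taking the condition from $\taug$ (matched guards with $x\taulen y$) breaks (d). The pair $P=(ab)^*(c\mid d)(ab)^*$, $T=(ab)^*c(ab)^*$ is in the logarithmic regime, since the letter at a position together with that position's parity tells you on which side of the decisive letter you are. Yet take $e_0=\eta((ba)^\omega)$, $x=\eta(b(ab)^kca)$ and $y=\eta(b(ab)^kda)$. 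Multiplying $(e_0xe_0,e_0ye_0)$ on the left by $\eta(a)$ and on the right by $\eta(b)$ gives the images of $(ab)^{\omega+k+1}c(ab)^{\omega+1}$ and $(ab)^{\omega+k+1}d(ab)^{\omega+1}$, a shattered pair. The paper instead requires each middle element to share a preimage length with the guard, $e\taulen x$ and $e\taulen y$. This fails here, since $e_0$ has only even-length preimages and $x$ only odd-length ones. It is this condition that makes both directions work:
\begin{itemize}
\item For (d), witnesses $U,V\in\Sigma^r$ of $e\taulen x$ and $U',W\in\Sigma^{r'}$ of $e\taulen y$ give $E=UU'$, $X=VU'$, $Y=UW$ of a common length with $(e\,\eta(X)\,e,\,e\,\eta(Y)\,e)=(exe,eye)$. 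This is exactly what the inverse length-multiplying reduction from $a^*ca^*$ needs (\cref{lem:matched-guard-closure}).
\item For (c), it is weak enough that $\eta(vX'v)\taumg\eta(vY'v)$ for every block $v\in\Sigma^K$ with $K=q\pi$ and all $X',Y'\in(\Sigma^K)^*$ (\cref{lem:log-matched-blocks}). Writing $v=hzg$ with $\eta(z)$ idempotent, the middle $gX'hz$ has length $K(|X'|+1)$. After padding with powers of a length-$\pi$ preimage of $\eta(z)$, both middles are $\taulen$-related to $\eta(z)$.
\end{itemize}

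For (c), the missing construction is the paper's:
\begin{itemize}
\item Query whole blocks of $\Sigma^K$.
\item Call an unqueried interval active if it is the first one, or if the block value just before it does not occur later in the transcript.
\item Show that completions agreeing on all active intervals have $\taumg$-related images (\cref{lem:log-active-completions}). This goes by jumping from the first queried block to its last occurrence and applying \cref{lem:log-matched-blocks}.
\item There are at most $|\Sigma^K|+1$ active intervals, so always bisecting the largest one halves the maximal active size every $|\Sigma^K|+1$ queries.
\end{itemize}
Two features of your sketch would go wrong. Suppose, as your wording suggests, the certificate is that unqueried segments lie between \emph{equal idempotents}. That is unsound: in the example above, $\eta((ba)^\omega)$ occurs on both sides of the decisive letter (starting at even positions on the left, odd ones on the right). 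You need identical block values, which carry the required length alignment. Also, ``searching for the first block where the context changes'' presupposes that an answer tells you which way to go, which fails in general (\cref{exa:log-nested}); the paper's search is direction-free.

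There is also a smaller gap in (b): powers of idempotent preimages cannot always bring guards and middles to a common length. Take $\Sigma=\{a,z\}$ with $\eta(a)=g$, $\eta(z)=0$ onto $\{1,g,0\}$, where $g^2=1$ and $0$ is a zero, and let $M_T=\{g\}$, $M_P=\{g,0\}$. The generator $(g,0)=(1\cdot g\cdot 1,1\cdot 0\cdot 1)$ of $\taug$ is a shattered pair. It can only be written with guards $1$, whose nonempty preimages have even length, around $x=g$, whose preimages have odd length. You must move part of a guard into the context, as in the conjugation trick of \cref{clm:guard-common-length}. With that repair, your concatenation-based route for (b) is a valid alternative to the paper's single-substitution normal form ${\taug}=R_{\cgrel}$.
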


We prove \cref{thm:target-trichotomy} in \cref{sec:constant,sec:logarithmic},
before discussing the effectiveness of the trichotomy in \cref{sec:effective}.
\Cref{sec:constant} constructs $\taug$, proves the constant upper bound (\cref{prop:lang-constant-ub}), and then proves the logarithmic lower bound (\cref{prop:lang-guard-lb}). \Cref{sec:logarithmic} constructs $\taumg$, proves the logarithmic upper bound (\cref{prop:lang-log-ub}), then the linear lower bound (\cref{prop:lang-linear-lb}). 
Note that the 
classification is an intrinsic property of the language so it is independent of the chosen \kl{mutual presentation}, even if
the tolerances depend on our choice of presentation.

Throughout the proof in \cref{sec:constant,sec:logarithmic}, for notational convenience, we fix the alphabet $\Sigma$ and
the regular languages
$T\subseteq P\subseteq\Sigma^*$ with a \kl{mutual presentation}
$(M,\eta,M_P,M_T)$, and write $m\coloneq |M|$.
We abuse notation by omitting the pairs $(M_P,M_T)$ and $(P,T)$ when there is no ambiguity.
Thus, we call a \kl{tolerance} or a \kl{partial word} \emph{shattering} or \emph{non-shattering} according to whether it shatters the corresponding pair.
We write $S_\eta \coloneq \eta(\Sigma^+)$ for the subsemigroup of images of
nonempty words.
Note that $S_\eta \subseteq M$, with equality if~$1_M$ has a nonempty preimage
by~$\eta$, and $S_\eta = M \setminus \{1_M\}$ otherwise.
We will often consider the idempotents of~$S_\eta$, and use the following
elementary result:

\begin{lemmarep}
\label[lemma]{lem:common-idempotent-preimage}
There exists a fixed integer $\pi\in\N_+$ such that
$\Idem(S_\eta)\subseteq\eta(\Sigma^\pi)$.
\end{lemmarep}

\begin{proof}
We first choose a nonempty preimage of each idempotent.
The set $\Idem(S_\eta)$ is finite
and nonempty, since $S_\eta$ is a nonempty finite semigroup and
every element has an \kl{idempotent power}.
For each $e\in\Idem(S_\eta)$, choose a word
$u_e\in\Sigma^+$ with $\eta(u_e)=e$.

We then take $\pi\in\N_+$ to be the
least common multiple of the lengths $|u_e|$.
Then $u_e^{\pi/|u_e|}$ has length $\pi$ and image $e$,
by idempotence, so indeed every idempotent of $\Idem(S_\eta)$ has a preimage of
length~$\pi$.
\end{proof}

\section{The Constant Regime}\label{sec:constant}\label{sec:fixed-pair-proof}

We start by establishing the constant/logarithmic boundary of \cref{thm:target-trichotomy}:
the \kl{guard tolerance}~$\taug$ gives a constant upper bound when non-shattering
and a logarithmic lower bound when shattering.
We first define $\taug$. Next, we prove the constant upper bound when it is non-shattering.
We then introduce the \kl{\emph{contextual guard relation}} and prove that it equals $\taug$.
This gives a representation of the pairs of~$\taug$ that we use to prove the logarithmic
lower bound when $\taug$ is shattering.

\subsection{The Guard Tolerance}
\begin{toappendix}
  \subsection{The Guard Tolerance}
\end{toappendix}

To define~$\taug$ we first need the following auxiliary relation (also used
in the next section):

\begin{definition}
\label[definition]{def:k-equal-length-tolerance}
For each $k\in\N_+$, the
\intro[k-equal-length equivalence]{$k$-equal-length equivalence}
$\taulenk$ for~$\eta$ is the relation
$\eta(\Sigma^k)\times\eta(\Sigma^k)$.
Equivalently, for $x,y\in M$, we write
$x\taulenk y$ if there are words $u,v\in\Sigma^k$
with $\eta(u)=x$ and $\eta(v)=y$.
Note that $\taulenk$ is an equivalence relation
on $\eta(\Sigma^k)$.

\label[definition]{def:equal-length-tolerance}
The \intro{equal-length relation} $\taulen$ for~$\eta$
is the union of these equivalence relations over all $k\in\N_+$.
Equivalently, $x\taulen y$ if $x$ and $y$
have nonempty preimages of the same length.
\end{definition}

We show that $\taulen$ is a \kl{tolerance} on $S_\eta$ and is generated by the
first $m^2$ lengths.
It need not be a \kl{tolerance} on $M$: indeed, if $1_M$ has no nonempty
preimage by~$\eta$, then the pair $(1_M,1_M)$ is missing.

\begin{lemmarep}
\label[lemma]{lem:equal-length-tolerance}
The relation $\taulen$ for~$\eta$ is a \kl{tolerance} on $S_\eta$.
Moreover, $\taulen=\bigcup_{i=1}^{m^2}\taulenx{i}$.
\end{lemmarep}

\begin{proofsketch}
  The first claim is routine, and the union
  equality follows by \cref{lem:bounded-preimage}.
\end{proofsketch}

\begin{proof}
We first prove that $\taulen$ is a \kl{tolerance} on $S_\eta$ by checking
reflexivity, symmetry, and compatibility with multiplication.

By definition, we have ${\taulen}\subseteq S_\eta\times S_\eta$.
For reflexivity, fix $x\in S_\eta$ and choose a word $u\in\Sigma^+$ with $\eta(u)=x$.
The pair $u,u$ witnesses that $x\taulen x$.

For symmetry, fix $x,y\in S_\eta$ with $x\taulen y$.
Choose nonempty equal-length preimages $u,v\in\Sigma^+$ of $x,y$.
The pair $v,u$ witnesses that $y\taulen x$.

For compatibility with multiplication, fix $x,y,x',y'\in S_\eta$ with $x\taulen y$ and $x'\taulen y'$. Choose equal-length preimages $u,v\in\Sigma^+$ of $x,y$, and equal-length preimages $u',v'\in\Sigma^+$ of $x',y'$. Then $uu'$ and $vv'$ are nonempty, have equal length and images $xx'$ and $yy'$, so we have $xx'\taulen yy'$.

We now prove the length bound described by the union by showing that every pair
in $\taulen$ is witnessed by a pair of nonempty preimages having a common length
$\leq m^2$, where we
recall that $m$ was defined as $m \coloneq |M|$.
Let $A\coloneq\eta(\Sigma)\times\eta(\Sigma)\subseteq M\times M$ be the alphabet.
Let $N\coloneq{\taulen}\cup\{1_N\}$ be the monoid obtained by adjoining a fresh
identity $1_N\notin M\times M$ to the semigroup $\taulen$.
Define the morphism $\theta\colon A^*\to N$ by $\theta(\varepsilon) \coloneq 1_N$
and $\theta((p,q)) \coloneq (p,q)$ for each letter $(p,q)\in A$.
Thus $\theta$ 
does componentwise multiplication on pairs of elements of $M$.
It is surjective because $\taulen$ consists precisely of images of pairs
of nonempty equal-length words.

Fix $(x,y)\in{\taulen}$.
By \cref{lem:bounded-preimage} applied to $\theta$, there is a preimage
$z\in A^*$ with $\theta(z)=(x,y)$ and $|z|<|N|\le m^2+1$.
Since $(x,y)\ne1_N$, the length $k\coloneq|z|$ of the words in this preimage
satisfies $1\le k\le m^2$.
For each letter of $z\in A^*$, choose a pair of letters of $\Sigma$ with suitable image by $\eta$.
This gives words $u,v\in\Sigma^k$ with $\eta(u)=x$ and $\eta(v)=y$.
Hence $x\taulenx{k}y$, proving ${\taulen}\subseteq\bigcup_{i=1}^{m^2}\taulenx{i}$.
The reverse inclusion follows by definition of $\taulen$, so the claimed equality holds.
\end{proof}

Note that the equal-length relation need not be transitive:

\begin{example}
  \label[example]{exa:nontrans}
  Consider the monoid $M_0=\{0,1,a,a^2\}$, where $1$ is the identity,
  $0$ is a \kl{zero},
  and $a^3=1$.
  Let $\Sigma=\{a,z\}$ and define $\eta(a)\coloneq a$ and
  $\eta(z)\coloneq 0$.
  Then $a\taulenx{1}0$, witnessed by the words $a$ and $z$, and $0\taulenx{2}a^2$,
  witnessed by the words $zz$ and $aa$.
  However, $a$ and $a^2$ are not related by $\taulen$: every word with
  nonzero image consists entirely of $a$'s, and its image is determined
  by its length modulo $3$. Thus $\taulen$ is not transitive.
\end{example}

We use $\taulen$ to relate equal-length factors, and
define the \kl{guard relation} by surrounding them by two (possibly
different) \kl[idempotent element]{idempotents} with nonempty preimages, finally
yielding~$\taug$:

\begin{definition} 
\label[definition]{def:guard-tolerance}
The \intro{guard relation}
$R_{\grel}\subseteq M\times M$ consists of the pairs $(exf,eyf)$
for $e,f\in\Idem(S_\eta)$ and $x,y\in M$ with $x\taulen y$.
We define the \intro{guard tolerance}~$\taug$ as the \kl{tolerance closure}
of~$R_{\grel}$.
\end{definition}

\begin{example}
Continuing \cref{exa:nontrans}, 
let us show that the \kl{guard tolerance} $\taug$ of~$M_0$ is not transitive,
by showing that
$R_{\grel}={\taulen}$ so that ${\taug}={\taulen}$.
Here $S_\eta=M_0$, since even the identity has a nonempty preimage,
$1=\eta(a^3)$. In particular, $1\in\Idem(S_\eta)$.
Taking both guards to be $1$ in the definition of $R_{\grel}$ gives
${\taulen}\subseteq R_{\grel}$.
Conversely, for $e,f\in\Idem(S_\eta)$ and $x,y\in M_0$ with $x\taulen y$,
reflexivity gives $e\taulen e$ and $f\taulen f$, and compatibility with
multiplication gives $exf\taulen eyf$. Thus $R_{\grel}\subseteq{\taulen}$.
By \cref{lem:equal-length-tolerance}, $\taulen$ is already a \kl{tolerance}
on $M_0$,
so taking the \kl{tolerance closure} of
$R_{\grel}={\taulen}$ adds no pairs. Hence ${\taug}={\taulen}$, and indeed
$\taug$ is not transitive.
\end{example}

\subsection{Constant Upper Bound}
\begin{toappendix}
  \subsection{Constant Upper Bound}
\end{toappendix}

We now show that if the \kl{guard tolerance}~$\taug$ is
\kl[relation shatters]{non-shattering}, then the \kl{query complexity} is
constant, because 
we can solve \kl{conditional membership} 
by querying a
prefix and a suffix of bounded length.

\begin{propositionrep} %
\label[proposition]{prop:lang-constant-ub}
If the \kl{guard tolerance} $\taug$ is \kl[relation shatters]{non-shattering}, then
$Q_{P,T}(n)$ is in $O(1)$.
\end{propositionrep}

\begin{proofsketch}
Let $q\in\N$ be the bound from \cref{lem:idempotent-subfactor}.
For each length $n\in\N$, the decision tree queries all letters if $n\le2q$,
and the first and last $q$ letters otherwise, so it has depth at most~$2q$.
A shattering leaf transcript would give a shattered pair of completions
sharing the queried prefix $s$ and suffix $t$, giving a shattered pair in~$\taug$ when
taking an idempotent factor in~$s$ and~$t$.
\end{proofsketch}

\begin{proof}
  We describe how to construct a \kl{correct} \kl{decision tree} for each
  $n \in \N$. Let $q$ be the bound given by \cref{lem:idempotent-subfactor}  for
  the morphism $\eta$; this is a constant independent from~$n$.
  If $n\le 2q$, we simply query all positions. Otherwise, the tree
  $\calT$ queries the first and last $q$ positions.
By \cref{lem:insensitive-leaf}, it suffices to show that no leaf \kl{transcript}
of~$\calT$ 
is \kl[partial word shatters]{shattering}.
By contradiction, assume that $\calT$ has a leaf $\ell$ whose \kl{transcript} has
two \kl[completion]{completions} $a,b\in\Sigma^n$, one in $T$ and one in $P\setminus T$. Write $a = s
\alpha t$
and $b = s \beta t$ with $s, t \in
\Sigma^q$ the prefix and suffix queried in the \kl{transcript},
and $\alpha,\beta\in\Sigma^{n-2q}$.
Applying \cref{lem:idempotent-subfactor} to $s$ and $t$, write
$s=u_\ell u_eu_s$ and $t=u_tu_fu_\rho$, where $u_e,u_f\in\Sigma^+$, the other factors belong to $\Sigma^*$, and $\eta(u_e),\eta(u_f)$ are idempotent.

Since $n>2q$, the words $\alpha$ and $\beta$ are nonempty.
Thus $u_s\alpha u_t$ and $u_s\beta u_t$ are nonempty words
of the same length, so
$\eta(u_s\alpha u_t)\taulen\eta(u_s\beta u_t)$.
As $u_e,u_f$ are nonempty and their images
by~$\eta$ are idempotent, we have
$\eta(u_e) \eta(u_s \alpha u_t) \eta(u_f) \mathrel{R_\grel} \eta(u_e) \eta(u_s \beta u_t)
\eta(u_f)$. 
As $\taug$ is the tolerance closure of~$R_\grel$, we also have 
$\eta(u_e) \eta(u_s \alpha u_t) \eta(u_f) \mathrel{\taug} \eta(u_e) \eta(u_s \beta u_t)
\eta(u_f)$, and as $\taug$ is reflexive we have 
$\eta(u_\ell) \mathrel{\taug} \eta(u_\ell)$ and 
$\eta(u_\rho) \mathrel{\taug} \eta(u_\rho)$. Now, as $\taug$ is a \kl{tolerance}, 
it is \kl{closed under multiplication}, so we
deduce
$\eta(a)\taug\eta(b)$. However, one of $\eta(a),\eta(b)$ belongs
to $M_T$ and the other belongs to $M_P\setminus M_T$, so $\taug$
is \kl[relation shatters]{shattering}, contradicting our assumption.

Thus, at every length, there is a \kl{correct} \kl{decision tree} of depth
at most $2q$, which establishes that $Q_{P,T}(n)=O(1)$.
\end{proof}

\begin{example}
\label[example]{exa:constant}
Let $\Sigma=\{a,c,d\}$, $T_{\const}\coloneq ca^*$, and $P_{\const}\coloneq(c\mid
d)a^*$
as in \cref{tab:examples}.
Querying the first position decides membership in~$T$ conditional to~$P$.
To apply \cref{prop:lang-constant-ub}, we show directly that $\taug$ is
\kl[relation shatters]{non-shattering} under any \kl{mutual presentation}.
Each pair in the \kl{guard relation} $R_{\grel}$ has nonempty preimages sharing their
first letter, since the left idempotent has a nonempty preimage.
Each \kl{diagonal pair} has two identical preimages.
Swapping such pairs and multiplying them preserves the property that their
preimages are either both empty or share their first letter, so every pair in
the \kl{tolerance closure} $\taug$ of~$R_{\grel}$ has preimages 
that are either empty or share their first letter.
If both preimages belong to $P_{\const}$, they are nonempty and their common
first letter determines the same \kl{target value}; hence $\taug$ is
\kl[relation shatters]{non-shattering}.
\end{example}

\subsection{The Contextual Guard Relation}
\begin{toappendix}
\subsection{The Contextual Guard Relation}
\end{toappendix}

Having proved the constant upper bound when the \kl{guard tolerance}~$\taug$
is \kl[relation shatters]{non-shattering}, we now turn to the logarithmic
lower bound when $\taug$ contains a \kl{shattered pair}.
However, 
it is not convenient to take the shattered
pair directly from~$\taug$,
because $\taug$
is defined only implicitly as a
tolerance closure so we have no explicit description of its pairs. We will now show that the \kl{guard tolerance} can be
equivalently expressed 
in a more explicit way, called the \emph{\kl{contextual guard relation}}:

\begin{definition}
\label[definition]{def:cgrel}
The \intro{contextual guard relation}
$R_{\cgrel}\subseteq M\times M$ is defined by
\[
R_{\cgrel}\coloneq
\bigcup_{k\in\N_+}
\bigl\{(\ell exf\rho,\ell eyf\rho)\mid
\ell,\rho\in M,\;
e,f\in\Idem(S_\eta),\;
e,f,x,y\in\eta(\Sigma^k)
\bigr\}
\cup\{(z,z)\mid z\in M\}.
\]
\end{definition}

To understand this definition, note that the non-diagonal pairs in~$R_{\cgrel}$ consist of a
pair $(exf, eyf)$ defined like in the \kl{guard relation} $R_{\grel}$, with
$e,f$ idempotents with a nonempty preimage. However, there are
two changes.
First, we require that both idempotents and middle elements must admit
preimages of the same length~$k \in \N_+$, which strengthens the requirement
$x\taulen y$ from the definition of $R_{\grel}$.
Second, we extend the pair with an arbitrary left and right context $\ell$
and~$\rho$.

We now prove that the \kl{contextual guard relation}~$R_{\cgrel}$
equals the \kl{guard tolerance}~$\taug$ used in the constant upper bound, so we
can take our \kl{shattered pair} in the explicit form of the definition
of~$R_{\cgrel}$:

\begin{lemmarep}
\label[lemma]{lem:guard-closure}
We have ${\taug}={R_{\cgrel}}$.
\end{lemmarep}
\begin{toappendix}
  To show \cref{lem:guard-closure}, we first need a preliminary claim to handle
  the common length requirement added in the definition of the \kl{contextual
  guard relation} relative to the \kl{guard relation}. This length requirement
  can always be handled, and we can always freely add a 
  prefix and suffix:
\begin{claim}[Common-length representation]\label{clm:guard-common-length}
For every $(p,q)\in R_{\grel}$ and all $\ell,\rho\in M$,
we have $(\ell p\rho,\ell q\rho)\in R_{\cgrel}$.
\end{claim}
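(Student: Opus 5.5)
Fix $(p,q)\in R_{\grel}$, so $p=exf$ and $q=eyf$ with $e,f\in\Idem(S_\eta)$ and $x\taulen y$, and fix $\ell,\rho\in M$. The plan is to rewrite $exf$ and $eyf$ so that the guards and the middle elements all have preimages of one common length $k$. The surplus factors will be absorbed into the contexts $\ell$ and $\rho$, which $R_{\cgrel}$ allows to be arbitrary.

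First, by definition of $\taulen$ we pick nonempty words $u,v$ of a common length $j\ge1$ with $\eta(u)=x$ and $\eta(v)=y$. By \cref{lem:common-idempotent-preimage}, $e$ and $f$ both have preimages $w_e,w_f$ of the common length $\pi$.

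The key step is to equalize $\pi$ and $j$ by padding with copies of idempotents, using $e=e^r$ and $f=f^r$ for all $r\ge1$. I would set $k\coloneq \pi j$. Then $w_e^{j}$ and $w_f^{j}$ are preimages of $e$ and $f$ of length $k$. For the middle I would write
\[
exf \;=\; e\,\bigl(e^{\pi-1}\,x\bigr)\,f \quad\text{hmm,}
\]
but this padding changes the length by a multiple of $\pi$, not reaching $k$ exactly. So instead I would pad $x$ with $f$'s: $x f = x f^{r}$. I would choose $x'\coloneq \eta(u\,w_f^{r})$ and $y'\coloneq\eta(v\,w_f^{r})$, which satisfy $ex'f=exf$ and $ey'f=eyf$ by idempotence of $f$, and have preimages of length $j+r\pi$.

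I then need a common $k$ with $k=j+r\pi$ for some $r\ge0$, and with $e,f\in\eta(\Sigma^k)$. For the guards this requires $k$ to be a positive multiple of $\pi$, but $j+r\pi\equiv j \pmod\pi$, which may fail. The fix is to also pad the guards flexibly. We have $e=e\cdot e$, so a length-$k$ preimage of $e$ can be any concatenation of preimages of $e$, and $e$ has preimages of every length in $\pi\N_+$. This still gives only multiples of $\pi$, so instead I would move part of the middle word into the contexts. Write $x'=\eta(u)$ and take the guard preimage lengths $j\pi$. The middle becomes $\eta(u^{\,}\!)$ padded as $u\,w_f^{r}$ with length $j+r\pi$. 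Choosing $r$ so that $j+r\pi=j\pi$ works when $\pi\mid j(\pi-1)$, which fails in general.

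The cleaner route, which I would commit to, is to replace $x,y$ by $x^{\pi}$-like repetitions. Alternatively, I would use the factorization $exf=e\cdot(e x f)\cdot f$, with new guards $e,f$ and new middles $x''=\eta(w_e^{a}\,u\,w_f^{b})$ and $y''=\eta(w_e^{a}\,v\,w_f^{b})$, of length $j+(a+b)\pi$.

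This is the main obstacle: the residue of $j$ modulo $\pi$ is fixed. The way out is to change the idempotent preimage length instead. Both $e$ and $f$ also have preimages of length $\pi\cdot t$ for all $t\ge1$, and I need guard length $\equiv j \pmod\pi$. So I would instead pick $k\coloneq j\pi$ and aim to realize the middle with length exactly $j\pi$: take $x''=\eta(u^{\,}w_f^{\,b})$ only if the lengths match, and otherwise replace the middle word by $u$ repeated appropriately. That last option is not available, since $x^{r}\ne x$.

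The genuinely robust fix is to absorb length into $\ell,\rho$. Write $exf=e\cdot e\,x\,f\cdot f$ and use the guard $e'=e$ with preimage $w_e^{j}$ (length $k=j\pi$), and the middle $\eta(w_e^{\,s}u w_f^{\,t})$, where lengths can only change by multiples of $\pi$. Since $j\pi\equiv 0$, I need $j\equiv0\pmod\pi$. To force this, I would first replace $u,v$ by $u\,w_f^{\,c}$ for suitable $c$ only when $\gcd$ permits, and otherwise re-choose $\pi$. \Cref{lem:common-idempotent-preimage} gives $\pi$ as an lcm, and any multiple $\pi'=N\pi$ also works. So I would first choose $u,v$ of length $j$, then set the guard length to $k\coloneq j\pi$ and the middle $\eta(u\,w_f^{(j\pi-j)/\pi})$. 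This requires $\pi\mid j$, which I would arrange by replacing $u,v$ with $u w_f^{\,?}$; that does not change the residue. So I would in fact choose the idempotent preimages at length $j\pi$ and the middle $u\,w_f^{\,j-1}$, whose length is $j+(j-1)\pi$. I expect this length bookkeeping to be exactly where care is needed.

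Once a common $k$ is achieved, $(\ell ex''f\rho,\ell ey''f\rho)$ is by definition in $R_{\cgrel}$, and it equals $(\ell p\rho,\ell q\rho)$ by idempotence.
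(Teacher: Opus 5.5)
Your proposal correctly identifies the obstacle: padding the middle words with preimages of $e$ or $f$ never changes the residue of their length modulo $\pi$. But it never gets past this obstacle. The bookkeeping you finally commit to is inconsistent: guards of length $j\pi$ and a middle $u\,w_f^{j-1}$ of length $j+(j-1)\pi$ agree only when $j=\pi$.

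More fundamentally, no padding can succeed if you keep the guards $e,f$ and the contexts $\ell,\rho$ unchanged, as your final sentence does. Take $\Sigma=\{a,c\}$ and let $\eta(w)$ record $|w| \bmod 3$ together with whether $c$ occurs in $w$. Every idempotent then has only preimages of length divisible by $3$. Choose $e=f=\eta(aaa)=1_M$ and $\ell=\rho=1_M$, and the pair $(\eta(a),\eta(c))\in R_{\grel}$. Any middle sharing a length with the guards has first component $0$, whereas $p=e\,\eta(a)\,f$ has first component $1$.

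So some length has to move into a context. However, if you cut a preimage of $f$ and push its suffix into $\rho$, the factor left between the middle and the context is in general no longer $f$. The guard itself must change.

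The missing idea, which is how the paper proceeds, is to conjugate the right guard:
\begin{itemize}
\item Choose preimages $X_0,Y_0\in\Sigma^r$ of $x,y$ and $E_0,F_0\in\Sigma^\pi$ of $e,f$.
\item Factor $F_0^r=UV$ with $|V|=r$, and set $H=VU$.
\item The words $E_0^{2r}$, $H^2$, $X_0UH$ and $Y_0UH$ all have length $2r\pi$. Take $\eta(V)\rho$ as the new right context.
\end{itemize}
Write $a=\eta(U)$ and $b=\eta(V)$, so that $ab=f$. The new right guard $(ba)^2$ is idempotent, because $(ba)^3=b(ab)^2a=(ba)^2$. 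It lies in $S_\eta$ as the image of the nonempty word $H^2$. The product is unchanged, since $x\,a(ba)\cdot(ba)^2\cdot b=x(ab)^4=xf$.

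Your remark about absorbing length into the contexts points the right way. It only works once $f$ is replaced by such a conjugate idempotent.
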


\begin{proof}
Fix $(p,q)\in R_{\grel}$ and $\ell,\rho\in M$.
By the definition of $R_{\grel}$, there are
$e,f_0\in\Idem(S_\eta)$ and $x_0,y_0\in M$ such that
$x_0\taulen y_0$, $p=ex_0f_0$, and $q=ey_0f_0$.
We construct elements $f,x,y,\rho'\in M$ 
such that $f\in\Idem(S_\eta)$,
$e,f,x,y\in\eta(\Sigma^k)$
for some length $k\in\N_+$, and such that
$(\ell exf\rho',\ell eyf\rho')=(\ell p\rho,\ell q\rho)$.
Together with $e\in\Idem(S_\eta)$, these conditions give a
representation of $(\ell p\rho,\ell q\rho)$ as required by the definition
of $R_{\cgrel}$.

Since $x_0\taulen y_0$, choose $r\in\N_+$ and
$X_0,Y_0\in\Sigma^r$ with $\eta(X_0)=x_0$ and $\eta(Y_0)=y_0$.
By \cref{lem:common-idempotent-preimage}, fix
$\pi\in\N_+$ such that
$\Idem(S_\eta)\subseteq\eta(\Sigma^\pi)$.
Since $e,f_0\in\Idem(S_\eta)$, choose
$E_0,F_0\in\Sigma^\pi$ with
$\eta(E_0)=e$ and $\eta(F_0)=f_0$.

The word $F_0^r$ has length $r\pi\geq r$.
Factor it as $F_0^r=UV$, where $V$ is its suffix of length $r$
and $U$ is its remaining (possibly empty) prefix. In particular,
$U,V\in\Sigma^*$, $|U|=r\pi-r$, and $|V|=r$.
Set $H\coloneq VU$; hence $|H|=r\pi$.
Now set $k\coloneq 2r\pi\in\N_+$ and define
$E\coloneq E_0^{2r}$, $F\coloneq H^2$,
$X\coloneq X_0UH$, and $Y\coloneq Y_0UH$.
We have $|E|=2r\pi=k$ and $|F|=2r\pi=k$.
Also, $|X|=|Y|=r+(r\pi-r)+r\pi=k$.
Thus $E,F,X,Y\in\Sigma^k$.

Define $f\coloneq\eta(F)$, $x\coloneq\eta(X)$,
$y\coloneq\eta(Y)$, and $\rho'\coloneq\eta(V)\rho$.
All four elements belong to $M$.
Since $e$ is idempotent, $\eta(E)=\eta(E_0)^{2r}=e$.
Thus $e,f,x,y\in\eta(\Sigma^k)$.
It remains to prove that $f$ is idempotent and that
$(\ell exf\rho',\ell eyf\rho')=(\ell p\rho,\ell q\rho)$.

Set $a\coloneq\eta(U)$ and $b\coloneq\eta(V)$.
Then $a,b\in M$ and
$ab=\eta(UV)=\eta(F_0^r)=f_0$, since $r\geq1$.
By idempotence of $ab$, we have
$(ba)^3=b(ab)^2a=b(ab)a=(ba)^2$.
Multiplying by $ba$ gives $(ba)^4=(ba)^3=(ba)^2$.
As $H=VU$, it follows that $f=\eta(H^2)=(ba)^2$ and
$f^2=(ba)^4=(ba)^2=f$.
Further, $F\in\Sigma^k$ and $k\geq1$, so $f\in S_\eta$.
Therefore $f\in\Idem(S_\eta)$.

The identity $H=VU$ and the factorization $UV=F_0^r$ give
$UH^3V=(UV)^4=F_0^{4r}$.
Thus $\eta(UH^3V)=\eta(F_0)^{4r}=f_0$,
using $r\geq1$ and idempotence of $f_0$.
From the definitions of $x,y,f,\rho'$ we obtain
$\ell exf\rho'=
 \ell ex_0\eta(UH^3V)\rho=\ell ex_0f_0\rho=\ell p\rho$
and, likewise,
$\ell eyf\rho'=
 \ell ey_0\eta(UH^3V)\rho=\ell ey_0f_0\rho=\ell q\rho$.
The elements $\ell,\rho',e,f,x,y$ and the integer $k$
satisfy all conditions in the definition of $R_{\cgrel}$.
Hence $(\ell p\rho,\ell q\rho)\in R_{\cgrel}$.
\end{proof}
We are now ready to prove \cref{lem:guard-closure}:
\end{toappendix}

\begin{proofsketch}
  We show that $R_{\cgrel}\subseteq{\taug}$ because $\taug$ is reflexive (so it
  contains the diagonal pairs) and it contains the non-diagonal pairs
  because we can extend the pairs of $R_{\grel}$  (which impose weaker length requirements)
  with contexts $\ell$ and $\rho$ (using reflexivity of~$\taug$ and
  compatibility with multiplication).

  Then, we show that pairs of $R_{\grel}$ are also in~$R_{\cgrel}$, even when
  extended with arbitrary contexts $\ell$ and~$\rho$. Here we must enforce the
  stronger length requirements of~$R_{\cgrel}$, which we do in an auxiliary
  lemma where we bring the idempotents to a sufficiently high power and use
  conjugation to synchronize them with the middle parts (possibly up to changing
  the right context).

  Last, we show that $R_{\cgrel}$ is closed under the tolerance axioms: as we
  have established $R_{\grel} \subseteq R_{\cgrel} \subseteq {\taug}$ then the
  minimality of the tolerance closure~$\taug$ concludes. The challenging part is
  to check compatibility with multiplication for two non-diagonal pairs: we can
  simply use the left context and idempotent of the first pair, and the right
  idempotent and context of the second pair.
\end{proofsketch}

\begin{proof}
  We first establish $R_{\cgrel}\subseteq{\taug}$ (step 1), then
$R_{\grel}\subseteq R_{\cgrel}$ (step 2), and finally that
$R_{\cgrel}$ is a tolerance (step 3). The last two facts will yield
${\taug} \subseteq R_{\cgrel}$.

For step~1, we show $R_{\cgrel}\subseteq{\taug}$.
Take $(p,q)\in R_{\cgrel}$.
If $p=q$, this is a \kl{diagonal pair}
and it belongs to $\taug$ by reflexivity.
Otherwise, the definition of $R_{\cgrel}$ gives
$\ell,\rho,x,y\in M$, $e,f\in\Idem(S_\eta)$, and
$k\in\N_+$ such that
$p=\ell exf\rho$, $q=\ell eyf\rho$, and
$e,f,x,y\in\eta(\Sigma^k)$.
In particular, $x\taulen y$, so
$(exf,eyf)\in R_{\grel}\subseteq{\taug}$.
The reflexivity of $\taug$ gives
$(\ell,\ell),(\rho,\rho)\in\taug$.
Since $\taug$ is \kl{compatible with multiplication},
$(p,q)=(\ell exf\rho,\ell eyf\rho)\in{\taug}$.

For step~2, we show $R_{\grel}\subseteq R_{\cgrel}$.
This follows immediately from \cref{clm:guard-common-length} by taking
$\ell\coloneq 1_M$ and $\rho\coloneq 1_M$.

For step~3, we prove that $R_{\cgrel}$ is a tolerance on $M$.
It is reflexive because it contains all \kl[diagonal pair]{diagonal pairs}.
It is symmetric: this is immediate
for \kl[diagonal pair]{diagonal pairs}, and
for non-diagonal pairs exchanging $x$ and $y$ in the definition
preserves every condition and exchanges the two coordinates.

We check \kl[compatible with multiplication]{compatibility with multiplication}.
There are three cases depending on the kind of elements we are multiplying.
First, the product of two
\kl[diagonal pair]{diagonal pairs}
is again a \kl{diagonal pair}.
Second, if exactly one factor is a \kl{diagonal pair} $(d,d)$ with $d \in M$, then
multiplying a pair of the form
$(\ell exf\rho,\ell eyf\rho)$ on the left replaces $\ell$
by $d\ell$, and multiplying it on the right replaces $\rho$
by $\rho d$. In either case the defining conditions remain
satisfied. Hence these products belong to $R_{\cgrel}$.

Now consider the third case of multiplying two non-diagonal pairs
(i.e., pairs $(p,q)$ with $p\ne q$) of $R_{\cgrel}$.
For each $i\in\{1,2\}$, write the $i$-th pair as
$(\ell_i e_i x_i f_i\rho_i,\ell_i e_i y_i f_i\rho_i)$,
where $\ell_i,\rho_i,x_i,y_i\in M$ and
$e_i,f_i\in\Idem(S_\eta)$.
The definition of $R_{\cgrel}$ provides $k_i\in\N_+$ such that
$e_i,f_i,x_i,y_i\in\eta(\Sigma^{k_i})$.
Consequently, $x_i\taulen y_i$ for $i\in\{1,2\}$;
in particular, $x_i,y_i\in S_\eta$.

Set $z\coloneq f_1\rho_1\ell_2e_2\in M$.
Since $f_1\in S_\eta$, it has a nonempty preimage.
Surjectivity of $\eta$ gives preimages in $\Sigma^*$ for
$\rho_1,\ell_2,e_2$. Concatenating them after a nonempty
preimage of $f_1$ gives a nonempty preimage of $z$.
Thus $z\in S_\eta$, and reflexivity of $\taulen$ on $S_\eta$
gives $z\taulen z$.
Compatibility of $\taulen$ with multiplication on $S_\eta$
now gives $x_1zx_2\taulen y_1zy_2$.

The coordinatewise product of the two given pairs is
$(\ell_1e_1(x_1zx_2)f_2\rho_2,
  \ell_1e_1(y_1zy_2)f_2\rho_2)$.
Apply Claim~\ref{clm:guard-common-length} to the pair
$(e_1(x_1zx_2)f_2,e_1(y_1zy_2)f_2)\in R_{\grel}$,
with contexts $\ell_1,\rho_2\in M$.
The claim applies because $e_1,f_2\in\Idem(S_\eta)$
and $x_1zx_2\taulen y_1zy_2$.
The product therefore belongs to $R_{\cgrel}$.
This proves compatibility with multiplication, and hence
$R_{\cgrel}$ is a tolerance on $M$.

Since $R_{\cgrel}$ is a tolerance (by step~3) and since $R_{\cgrel}$ contains $R_{\grel}$
(by step~2), the minimality of $\taug$ gives
${\taug}\subseteq R_{\cgrel}$. Combined with step~1,
this establishes ${\taug}=R_{\cgrel}$.
\end{proof}

\subsection{Logarithmic Lower Bound}
\begin{toappendix}
  \subsection{Logarithmic Lower Bound}
\end{toappendix}

We now turn to the logarithmic lower bound of \cref{thm:target-trichotomy},
showing that a \kl[relation shatters]{shattering} \kl{guard tolerance}
forces at least logarithmic \kl{query complexity}:

\begin{propositionrep}[Logarithmic lower bound]
\label[proposition]{prop:lang-guard-lb}
If the \kl{guard tolerance} $\taug$ is \kl[relation shatters]{shattering}, then $Q_{P,T}(n)$ is in~$\Omega(\log n)$.
\end{propositionrep}

We prove this proposition in the rest of the section. We first illustrate the
intuition of the proof on the specific pair of languages $(P_{\log}, T_{\log})$
from \cref{tab:examples}.
We next formally show
the lower bound on $(P_{\log}, T_{\log})$.
Then, we show general-purpose closure properties which will also be used in the
next section. This will finally allow us to 
lift the lower bound from $(P_{\log}, T_{\log})$ to the general case.

\begin{example}
  \label[example]{exa:promise}
  Let $\Sigma=\{a,b,c,d\}$, $T_{\log}\coloneq a^*cb^*$, and
  $P_{\log}\coloneq a^*(c\mid d)b^*$.
Let $(M,\eta,M_P,M_T)$ be any \kl{mutual presentation} of this pair.
Take the integer $\omega\coloneq m!$ as an \kl{idempotent power} of $M$; for a
word $u\in\Sigma^*$, we write $u^\omega$ to mean $\omega$ concatenated copies of $u$.
Set $e\coloneq\eta(a^\omega)$ and $f\coloneq\eta(b^\omega)$.
Then $e,f\in\Idem(S_\eta)$, and $\eta(c)\taulen\eta(d)$ because $c$ and $d$
have the same length. Hence the images by~$\eta$ of $a^\omega cb^\omega$ and
$a^\omega db^\omega$
are related by~$R_{\grel}$, and therefore by~$\taug$.
These two images form a \kl{shattered pair}, because $a^\omega cb^\omega\in T_{\log}$ and
$a^\omega db^\omega\in P_{\log}\setminus T_{\log}$. This witnesses that $\taug$
 is \kl[relation shatters]{shattering}.

The intuition for why this example does not have constant query complexity is that
the position containing $c$ or $d$, which decides whether a word of $P_{\log}$
belongs to~$T_{\log}$, can be at an arbitrary place in a factor of unbounded length.
To see how its position can move, recall that the blocks $a^\omega$ and $b^\omega$ on its two sides have idempotent images $e$ and $f$.
Repeating one block more times and the other fewer times moves this position and
keeps the total length unchanged.
At the same time, idempotence ensures that the images of the two sides remain $e$ and $f$.
Thus, a correct decision tree must query this position to distinguish $c$ from
$d$, which intuitively requires at least logarithmically many bits of
information on a word of linear size.
\end{example}

\begin{toappendix}
  To show \cref{prop:lang-guard-lb}, we will first show the following lemma:
\end{toappendix}

We now formalize the intuition of \cref{exa:promise} by showing the lower bound
on $(P_{\log},T_{\log})$:

\begin{lemmarep}[Logarithmic lower bound]
\label[lemma]{lem:logarithmic-block-words}
The \kl{conditional membership} problem for
$P_{\log}=a^*(c\mid d)b^*$ and $T_{\log}=a^*cb^*$
has \kl{query complexity} at least logarithmic,
that is, $Q_{P_{\log},T_{\log}}(n)$ is in~$\Omega(\log n)$.
\end{lemmarep}

In fact the query complexity of 
$(P_{\log},T_{\log})$ is precisely $\Theta(\log n)$, as will follow from the
logarithmic upper bound of the next section. For now, let us
sketch the lower bound:

\begin{proofsketch}
Fix $n\in\N_+$ and consider the $n$ \kl{target} words $u_i=a^icb^{n-i-1}$,
for $i\in\{0,\ldots,n-1\}$.
If two such words with different~$i$ reach the same leaf, its \kl{transcript}
cannot contain $c$. Replacing the $c$ in one of these words by
$d$ therefore gives a \kl{completion} in $P_{\log}\setminus T_{\log}$ of the same \kl{transcript},
contradicting \kl[correct]{correctness}. Hence the tree has at least $n$ leaves
and depth at least $\log_{|\Sigma|}n$.
\end{proofsketch}

\begin{proof}
Fix a word length $n\in\N_+$, and let $\calT$ be a \kl{correct} \kl{decision tree} for $(P_{\log},T_{\log})$
at length $n$. We show that the depth of $\calT$ is at least logarithmic in $n$.

We will use words of~$P_{\log} \cap \Sigma^n$ of a specific form.
For every $i\in\{0,\ldots,n-1\}$, let $u_i\coloneq a^icb^{n-i-1}$ and $v_i\coloneq a^idb^{n-i-1}$. The word $u_i$ belongs to $T_{\log}$, whereas the word $v_i$ belongs to $P_{\log}\setminus T_{\log}$. Thus, for every $i$, the pair $(u_i,v_i)$ is a \kl{shattered pair}.

We make the following claim (*): two words $u_i$ and $u_{i'}$ for two
different indices $i \neq
i'$ cannot reach the same leaf $\ell$ of $\calT$. Suppose towards a contradiction that
there exist $i\neq i'$ for which this happens. Equivalently, the \kl{transcript}
$\tau$ of~$\ell$  admits both $u_i$ and $u_{i'}$ as \kl[completion]{completions}.

The \kl{transcript}~$\tau$ cannot contain the letter $c$. Indeed, the word $u_i$ contains $c$ only at position $i+1$, while the word $u_{i'}$ contains $c$ only at position $i'+1$. Since $i\neq i'$, there is no position at which both words contain $c$. Hence, as $\tau$ admits both words as \kl[completion]{completions}, we know that $\tau$ cannot contain $c$.

In particular, position $i+1$, which contains $c$ in $u_i$, is not queried on
the path to~$\ell$. The words $u_i$ and $v_i$ differ only at this position:
$u_i$ contains $c$ there and $v_i$ contains $d$ there. They agree at every other
position. Thus, $v_i$ is also a \kl{completion} of~$\tau$.

Consequently, $\tau$ has a \kl{completion} $u_i\in T_{\log}$ and a \kl{completion} $v_i\in P_{\log}\setminus T_{\log}$, so it
is \kl[partial word shatters]{shattering}. As $\tau$ is the
\kl{transcript} of $\ell$ which is a leaf of~$\calT$, by \cref{lem:insensitive-leaf} this contradicts our assumption that $\calT$ was \kl{correct}.
We conclude that, whenever $i\neq i'$, the words $u_i$ and $u_{i'}$ must reach different leaves, i.e., we have established (*).

Now, there are $n$ words $u_0,\ldots,u_{n-1}$, so $\calT$ must have at least $n$ different leaves. Let $d$ be the depth of $\calT$. Since every internal node of a \kl{decision tree} has at most $|\Sigma|$ outgoing edges, a \kl{decision tree} of depth $d$ has at most $|\Sigma|^d$ leaves. We therefore have $n\leq |\Sigma|^d$, and hence $d\geq\log_{|\Sigma|}n$.

As $\Sigma$ is fixed together with $P_{\log}$ and $T_{\log}$, this shows that
$d$ is in $\Omega(\log n)$.
The \kl{decision tree} $\calT$ was an arbitrary \kl{correct} \kl{decision tree} at length $n$,
so the minimum depth of a \kl{correct} \kl{decision tree} at length $n$ is in $\Omega(\log
n)$. By the definition of \kl{query complexity}, we conclude that
$Q_{P_{\log},T_{\log}}(n)$ is in~$\Omega(\log n)$.
\end{proof}

We transfer the lower bound for $(P_{\log},T_{\log})$ to the
general case using three reductions: inclusion (\cref{lem:query-inclusion}),
quotients (\cref{lem:query-fixed-concatenation}), and inverse length-multiplying
morphisms (\cref{lem:query-inverse-morphism}). Each reduction constructs
a decision tree of no greater depth, using a tree at an input length
that is either unchanged, increased by a fixed constant, or multiplied
by a fixed positive integer, respectively.
We note that these reductions are reminiscent of the closure properties for
varieties of regular languages~\cite{Eilenberg1976,Pin1986}, except that in our
context we are working with language pairs:

\begin{toappendix}
We now state and prove the three reductions used to transfer the logarithmic lower bound: inclusion, quotients, and inverse length-multiplying morphisms.
\end{toappendix}

\begin{lemmarep}[Inclusion]
\label[lemma]{lem:query-inclusion}
Let $ T_1\subseteq P_1\subseteq\Sigma^* 
\text{~and~} 
	T_2\subseteq P_2\subseteq\Sigma^*$
be such that
$
T_1\subseteq T_2 
\text{~and~} 
P_1\setminus T_1\subseteq P_2\setminus T_2.
$
Then, for every $n\in\N$,
$
Q_{P_1,T_1}(n)\le Q_{P_2,T_2}(n).
$
\end{lemmarep}

\begin{proofsketch}
This is simply because, at each length, a \kl{correct} tree for $(P_2,T_2)$ must
also be correct for $(P_1,T_1)$. Indeed, intuitively, the prescribed positive and
negative answers for $(P_1, T_1)$ are a subset of those of $(P_2, T_2)$.
\end{proofsketch}

\begin{proof}
Fix a length $k\in\N$, and let $\calT$ be a \kl{decision tree} that is \kl{correct}
for $(P_2,T_2)$ at length $k$. We claim that the same \kl{decision tree} is
\kl{correct} for $(P_1,T_1)$ at length $k$.

Indeed, every word of $T_1\cap\Sigma^k$ belongs to
$T_2\cap\Sigma^k$, so it is accepted by $\calT$. Likewise, every word of
$(P_1\setminus T_1)\cap\Sigma^k$ belongs to
$(P_2\setminus T_2)\cap\Sigma^k$, so it is rejected by $\calT$.
Hence $\calT$ is \kl{correct} for $(P_1,T_1)$.

Thus, at every length $k$, the minimum depth of a \kl{correct} \kl{decision tree}
for $(P_1,T_1)$ is at most the corresponding minimum depth for
$(P_2,T_2)$. Taking the maximum over $0\le k\le n$ yields
$
Q_{P_1,T_1}(n)\le Q_{P_2,T_2}(n),
$
as required.
\end{proof}

\begin{lemmarep}[Quotient]
\label[lemma]{lem:query-fixed-concatenation}
Let $T\subseteq P\subseteq\Sigma^*$ and let $u,v\in\Sigma^*$. Then, for every $n\in\N$, we have $Q_{u^{-1}Pv^{-1},u^{-1}Tv^{-1}}(n)\le Q_{P,T}(n+|u|+|v|)$.
\end{lemmarep}

\begin{proofsketch}
  For an input $w\in u^{-1}Pv^{-1}$, simulate a \kl{correct} tree for $(P,T)$ on $uwv$. Queries in the fixed prefix $u$ and suffix $v$ are answered for free, and the others query $w$. Since $w\in u^{-1}Tv^{-1}$ exactly when $uwv\in T$, this decides the quotient problem with no greater \kl{depth}.
\end{proofsketch}

\begin{proof}
Fix $k\in\N$, and let $\calT$ be a \kl{decision tree} that is \kl{correct} for
$(P,T)$ at length $|u|+k+|v|$. We construct a \kl{decision tree} $\calT'$ for $(u^{-1}Pv^{-1},u^{-1}Tv^{-1})$ at length $k$ by simulating $\calT$ on $uwv$, where $w\in\Sigma^k$ is the input.

Whenever $\calT$ queries a position belonging to $w$, we query the corresponding
position of $w$ in~$\calT'$. Whenever it queries a position belonging to $u$ or $v$, the answer is already known, so we follow the corresponding edge without making a query in~$\calT'$.

If $w\in u^{-1}Tv^{-1}$, then $uwv\in T$, so $\calT'$ accepts. If $w\in u^{-1}Pv^{-1}\setminus u^{-1}Tv^{-1}$, then $uwv\in P\setminus T$, so $\calT'$ rejects. Hence $\calT'$ is \kl{correct} for $(u^{-1}Pv^{-1},u^{-1}Tv^{-1})$ and has depth at most that of $\calT$.

Thus, for every $k\in\N$, the minimum depth at length $k$ for $(u^{-1}Pv^{-1},u^{-1}Tv^{-1})$ is at most the minimum depth at length $|u|+k+|v|$ for $(P,T)$. Taking the maximum over $0\le k\le n$ yields $Q_{u^{-1}Pv^{-1},u^{-1}Tv^{-1}}(n)\le Q_{P,T}(n+|u|+|v|)$.
\end{proof}

\begin{lemmarep}[Inverse lm-morphism]
\label[lemma]{lem:query-inverse-morphism}
Let $\Gamma$ and $\Sigma$ be finite alphabets, and let $h\colon\Gamma^*\to\Sigma^*$ be a morphism. Assume that there exists an integer $\ell\in\N_+$ such that $|h(a)|=\ell$ for every $a\in\Gamma$. Let $T\subseteq P\subseteq\Sigma^*$, and let $P'=h^{-1}(P)$ and $T'=h^{-1}(T)$. Then, for every $n\in\N$, $Q_{P',T'}(n)\le Q_{P,T}(\ell n)$.
\end{lemmarep}

\begin{proofsketch}
For $w\in P'\cap\Gamma^k$ with $k\in\N$, simulate a \kl{correct} tree for $(P,T)$ on $h(w)$. At the first query in block $h(w_i)$ for $i\in[k]$, query $w_i$ and hence learn the whole block; later queries there are free. The inverse images preserve correctness, depth does not increase, and $|h(w)|=\ell|w|$ yields the stated length scaling.
\end{proofsketch}

\begin{proof}
Fix $k\in\N$, and let $\calT$ be a \kl{correct} \kl{decision tree} for $(P,T)$ at length $\ell k$. We construct a \kl{correct} \kl{decision tree} $\calT'$ for $(P',T')$ at length $k$ whose depth is at most that of $\calT$.

On an input $u=a_1\cdots a_k\in\Gamma^k$, the tree $\calT'$ simulates $\calT$ on $h(u)=h(a_1)\cdots h(a_k)$ in the following way. Whenever $\calT$ queries position $j\in[\ell k]$, then there are unique $i\in[k]$ and $r\in[\ell]$ such that $j=\ell(i-1)+r$, i.e., position $j$ of $h(u)$ is the $r$-th position in the block $h(a_i)$.

If position $i$ of $u$ has not yet been queried, then $\calT'$ queries it. The answer determines $a_i$, and therefore determines the whole block $h(a_i)$ and, in particular, the letter at position $j$ of $h(u)$. If position $i$ has already been queried in an ancestor node in~$\calT'$ (i.e., a position in the same block has already been queried in an ancestor node in~$\calT$), then this is already known. Thus $\calT'$ can in either case follow the same edge as $\calT$.

Therefore $\calT'$ follows the same path that $\calT$ follows on $h(u)$. Moreover, 
our construction ensures that each position of $u$ is queried at most once, and a query is made only when $\calT$ queries a position in the corresponding block. Hence the depth of $\calT'$ is at most the depth of $\calT$.

It remains to verify that $\calT'$ is \kl{correct}. If $u\in P'$, then $h(u)\in P$, and by definition $u\in T'$ if and only if $h(u)\in T$. Since $\calT$ is \kl{correct} for $(P,T)$, the tree $\calT'$ is therefore \kl{correct} for $(P',T')$.

We have thus shown that, for every $k\in\N$, the minimum depth of a \kl{correct} \kl{decision tree} for $(P',T')$ at length $k$ is at most the minimum depth of a \kl{correct} \kl{decision tree} for $(P,T)$ at length $\ell k$. Taking the maximum over $0\le k\le n$ yields $Q_{P',T'}(n)\le Q_{P,T}(\ell n)$, as required.
\end{proof}

With these three reductions we can now show \cref{prop:lang-guard-lb}
from \cref{lem:logarithmic-block-words}:

\begin{proofsketch}
Since $\taug$ is shattering, \cref{lem:guard-closure} gives a shattered pair in $R_{\cgrel}$.
Choose equal-length preimages $E,F,X,Y\in\Sigma^k$, for some $k\in\N_+$,
for its guards and middle elements.
By idempotence, we can choose fixed words $u,v\in\Sigma^*$ so that
$T'\coloneq uE^*XF^*v\subseteq T$ and
$P'\coloneq uE^*(X\mid Y)F^*v$ satisfies $P'\setminus T'\subseteq P\setminus T$.
By inclusion (\cref{lem:query-inclusion}), $(P,T)$ is at least as hard as $(P',T')$.
We use quotients (\cref{lem:query-fixed-concatenation}) to remove $u$ and $v$.
The morphism sending $a,b,c,d$ to $E,F,X,Y$, respectively, multiplies lengths by
$k$, so we use \cref{lem:query-inverse-morphism} and use again
\cref{lem:query-inclusion} to
transfer the logarithmic lower bound
for $(P_{\log},T_{\log})$ from \cref{lem:logarithmic-block-words} to $(P',T')$, and hence to $(P,T)$.
\end{proofsketch}

\begin{toappendix}
  We now have what is needed to prove \cref{prop:lang-guard-lb}:
  \begin{proof}[Proof of \cref{prop:lang-guard-lb}]
    By \cref{lem:guard-closure}, ${\taug}=R_{\cgrel}$. Since $\taug$ is
\kl[relation shatters]{shattering}, we may choose a
\kl{shattered pair} $(p,q)\in R_{\cgrel}$ with $p\in M_T$ and $q\in M_P\setminus M_T$. In particular, $p\neq q$.

By the definition of $R_{\cgrel}$,
there exist $\ell,\rho,x,y\in M$, idempotents
$e,f\in\Idem(S_\eta)$, and an integer $k\in\N_+$ such that
$p=\ell exf\rho$, $q=\ell eyf\rho$, and
$e,f,x,y\in\eta(\Sigma^k)$.
Choose $E,F,X,Y\in\Sigma^k$ representing $e,f,x,y$,
respectively, so $\eta(E)=e$, $\eta(F)=f$, $\eta(X)=x$, and $\eta(Y)=y$,
and $L,R\in\Sigma^*$ with
$\eta(L)=\ell$ and $\eta(R)=\rho$.
Set $u\coloneq LE$ and $v\coloneq FR$.
Since $e,f$ are idempotent, for all $i,j\in\N$ we have
$\eta(uE^iXF^jv)=\ell e^{i+1}xf^{j+1}\rho=p$ and
$\eta(uE^iYF^jv)=\ell e^{i+1}yf^{j+1}\rho=q$.

We now proceed through three consecutive reductions to show that the lower bound for $(P_{\log},T_{\log})$ holds for this problem.

First, we use the inclusion lemma, \cref{lem:query-inclusion}, to show that it
suffices to show our lower bound for $T'\coloneq uE^*XF^*v$ and $P'\coloneq uE^*YF^*v\cup uE^*XF^*v$.
Indeed, $p\in M_T$ and $q\in M_P\setminus M_T$, so $T'\subseteq T$ and $P'\subseteq P$.
The two languages $uE^*XF^*v$ and $uE^*YF^*v$ are disjoint because their words have distinct images $p$ and $q$ under $\eta$. Thus $P'\setminus T'=uE^*YF^*v\subseteq P\setminus T$, so the inclusion lemma applies.

We now remove the fixed prefix $u$ and suffix $v$ using the quotient lemma,
\cref{lem:query-fixed-concatenation}. This establishes that it suffices to show
the lower bound for
$T''\coloneq u^{-1}T'v^{-1}=E^*XF^*$ and $P''\coloneq u^{-1}P'v^{-1}=E^*YF^*\cup E^*XF^*$.

Finally, using the inverse length-multiplying morphism lemma, \cref{lem:query-inverse-morphism}, we reduce this problem to the example languages $T_{\log}$ and $P_{\log}$ from \cref{exa:promise}. The words $E,F,X,Y$ all have length $k$, so define a length-multiplying morphism $h\colon\Gamma^*\to\Sigma^*$, where $\Gamma=\{a,b,c,d\}$, by $h(a)=E$, $h(b)=F$, $h(c)=X$, and $h(d)=Y$. This allows us to apply the inverse length-multiplying morphism lemma.
Every word of $T_{\log}$ maps into $T''$, and every word of $P_{\log}\setminus T_{\log}$ maps into $E^*YF^*$. 
The languages $E^*XF^*$ and $E^*YF^*$ are disjoint: otherwise, adding the prefix $u$ and suffix $v$ 
to a word in the intersection would give a word whose image under~$\eta$ would
have to be both $p$ and $q$, which is impossible because $p \neq q$.
Hence $T_{\log}\subseteq h^{-1}(T'')$ and $P_{\log}\setminus T_{\log}\subseteq h^{-1}(P'')\setminus h^{-1}(T'')$, so the inclusion lemma 
applies to the inverse-image problem.
This inclusion step is needed because in general we do not have $T_{\log} = h^{-1}(T'')$ and 
$P_{\log}\setminus T_{\log} = h^{-1}(P'')\setminus h^{-1}(T'')$, because
the inverse images may contain additional words: for instance, if $E=F=X$, then $h(a)=X\in T''$ although $a\notin T_{\log}$.

Chaining the inequalities given by the reductions, this gives
$Q_{P,T}(kn+|u|+|v|)\ge Q_{P_{\log},T_{\log}}(n)$ for every $n\in\N$. By
\cref{lem:logarithmic-block-words}, the right-hand side is $\Omega(\log n)$.
Since $k>0$, $u$, and $v$ are fixed, monotonicity of $Q_{P,T}$ yields that
$Q_{P,T}(n)$ is in~$\Omega(\log n)$.
  \end{proof}
\end{toappendix}

\section{The Logarithmic Regime}\label{sec:logarithmic}

We continue by establishing the logarithmic/linear boundary of \cref{thm:target-trichotomy}.
For this, we first introduce the \emph{\kl{matched-guard tolerance}}~$\taumg$.
We then prove a logarithmic upper bound when $\taumg$ is
\kl[relation shatters]{non-shattering}.
Next, like in the previous section, we introduce the
\kl[$s$-contextual matched-guard relation]{contextual matched-guard relations}
as a more convenient rephrasing of~$\taumg$ to show a lower bound.
We conclude by showing the linear lower bound when $\taumg$ is shattering.

\subsection{The Matched-Guard Tolerance}
\label{sec:mgt}

We now define the \kl{matched-guard tolerance} $\taumg$. It is similar
to~$\taug$, but requires the same idempotent on both sides
and imposes different length conditions:

\begin{definition}
\label[definition]{def:matched-guard-tolerance}
The \intro{matched-guard relation}
$R_{\mgrel}\subseteq M\times M$ consists of the pairs
$(exe,eye)$ for $e\in\Idem(S_\eta)$ and $x,y\in M$ with $e\taulen x$ and
$e\taulen y$.
We define the \intro{matched-guard tolerance} $\taumg$
as the \kl{tolerance closure} of~$R_{\mgrel}$.
\end{definition}

As claimed in \cref{thm:target-trichotomy}, the matched-guard tolerance 
refines the guard tolerance, so the matched-guard tolerance can only be
shattering if the guard tolerance is:

\begin{lemma}
The \kl{matched-guard relation} is contained in the \kl{guard relation}, and
consequently the \kl{matched-guard tolerance} is contained in the \kl{guard
tolerance}: $R_{\mgrel}\subseteq R_{\grel}$ and ${\taumg}\subseteq{\taug}$.
\end{lemma}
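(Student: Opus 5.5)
The plan is to prove the inclusion $R_{\mgrel}\subseteq R_{\grel}$ directly from the definitions. The inclusion of tolerances then follows by minimality of the \kl{tolerance closure}.

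Take a pair $(exe,eye)\in R_{\mgrel}$, with $e\in\Idem(S_\eta)$ and $x,y\in M$ such that $e\taulen x$ and $e\taulen y$. We show it has the form required by \cref{def:guard-tolerance}, taking both guards to be $e$, that is, with $f\coloneq e$. It remains to establish $x\taulen y$.

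This is the only point needing care, because $\taulen$ is not transitive in general (\cref{exa:nontrans}). So we cannot simply chain $x\taulen e\taulen y$. Instead we argue as follows.
\begin{itemize}
\item From $e\taulen x$ we get some $k\in\N_+$ with nonempty words $E,X\in\Sigma^k$ such that $\eta(E)=e$ and $\eta(X)=x$.
\item From $e\taulen y$ we get some $k'\in\N_+$ with words $E',Y\in\Sigma^{k'}$ such that $\eta(E')=e$ and $\eta(Y)=y$.
\end{itemize}
We synchronize lengths using idempotence of $e$. Consider the words $X E'$ and $E Y$. Both have length $k+k'$, and their images are $xe$ and $ey$ respectively. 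This gives $xe\taulen ey$ rather than $x\taulen y$. That is still enough: the relevant pair can be rewritten as
\[
(exe,eye)=\bigl(e(xe)e,\,e(ey)e\bigr),
\]
since $ee=e$. So set $x'\coloneq xe$ and $y'\coloneq ey$. Then $x'\taulen y'$, and the pair $(ex'e,ey'e)$ with guards $e,e\in\Idem(S_\eta)$ belongs to $R_{\grel}$, and it equals $(exe,eye)$.

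For the second claim, observe that $\taug$ is a tolerance containing $R_{\grel}\supseteq R_{\mgrel}$. Since $\taumg$ is the least tolerance containing $R_{\mgrel}$, we get ${\taumg}\subseteq{\taug}$.

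The main obstacle was the non-transitivity of $\taulen$. The idempotent absorption trick above circumvents it, so no use of \cref{lem:common-idempotent-preimage} is needed.
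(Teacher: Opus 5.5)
Your proof is correct and follows the paper's argument. The paper likewise takes both guards equal to $e$ and rewrites the pair as $(e(xe)e,e(ey)e)$, gets $xe\taulen ey$ from symmetry and compatibility with multiplication of $\taulen$ (your concatenation of the witnesses $XE'$ and $EY$ is exactly that compatibility argument), and concludes ${\taumg}\subseteq{\taug}$ by minimality of the tolerance closure.
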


\begin{proof}
Let $(p,q)\in R_{\mgrel}$. By definition, there exist an idempotent
$e\in\Idem(S_\eta)$ and elements $x,y\in M$ such that $(p,q)=(exe,eye)$ and we
have $e\taulen x$ and $e\taulen y$. By symmetry, $x\taulen e$, so compatibility
with multiplication gives $xe\taulen ey$. Since $e(xe)e=exe$ and $e(ey)e=eye$ by
idempotence, the definition of the \kl{guard relation} 
on $(e(xe)e,e(ey)e)$ witnesses that $(p,q)\in
R_{\grel}$. Hence $R_{\mgrel}\subseteq R_{\grel}$. The \kl{tolerance closure} is
monotone, which concludes.
\end{proof}

\begin{example}
\label[example]{exa:log-matched-guard}
We return to $(P_{\log},T_{\log})$ from \cref{exa:promise}, where we showed that $\taug$ is \kl[relation shatters]{shattering}.
We now show that $\taumg$ is \kl[relation shatters]{non-shattering} for a
\kl{mutual presentation} of this pair, so our upcoming logarithmic upper bound will apply.
Consider the monoid $M=\{1,a,b,c,d,0\}$, with $1$ the identity and $0$ a \kl{zero}.
The only nonzero products of $a,b,c,d$ are $a^2=a$, $b^2=b$, $ac=c$, $ad=d$, $cb=c$, and $db=d$.
Let $\eta\colon\Sigma^*\to M$ map each letter to the corresponding element, and set $M_P=\{c,d\}$ and $M_T=\{c\}$.
This gives a \kl{mutual presentation} of $(P_{\log},T_{\log})$.
As before, $(c,d)=(acb,adb)\in R_{\grel}\subseteq{\taug}$, since $a,b\in\Idem(S_\eta)$ and $c\taulen d$.
For $R_{\mgrel}$, however, $\Idem(S_\eta)=\{a,b,0\}$ and $axa\in\{a,0\}$, $bxb\in\{b,0\}$, and $0x0=0$ for all $x\in M$.
Thus every pair of $R_{\mgrel}$ is \kl[diagonal pair]{diagonal} or has a zero component.
The relation $\Delta_M\cup(M\times\{0\})\cup(\{0\}\times M)$ is reflexive and symmetric.
It is also \kl{compatible with multiplication}, because $0$ is a zero.
It is therefore a \kl{tolerance} containing $R_{\mgrel}$, so it contains $\taumg$.
It excludes $(c,d)$, the only \kl{shattered pair} up to symmetry, so $\taumg$ is
\kl[relation shatters]{non-shattering} and ${\taumg}\subsetneq{\taug}$.
\end{example}

\subsection{Logarithmic Upper Bound}
\label{sec:logub}
\begin{toappendix}
\subsection{Logarithmic Upper Bound}
\end{toappendix}

We now show that if the \kl{matched-guard tolerance}~$\taumg$ is
\kl[relation shatters]{non-shattering}, then the \kl{query complexity} is at most logarithmic.
In this case, intuitively, we can solve \kl{conditional membership} with a bounded
number of binary searches over the word to find boundaries between different
regions of the word:

\begin{propositionrep}[Logarithmic upper bound]
\label[proposition]{prop:lang-log-ub}
If the \kl{matched-guard tolerance} $\taumg$ is \kl[relation shatters]{non-shattering}, then $Q_{P,T}(n)=O(\log n)$.
\end{propositionrep}

Before proving this upper bound, we first examine the following example.

\begin{example}
We return to the languages $P_{\log}=a^*(c\mid d)b^*$ and
$T_{\log}=a^*cb^*$ from \cref{exa:log-matched-guard}.
A binary search finds the unique position carrying $c$ or $d$:
an answer $a$ directs the search to the right, while an answer $b$
directs it to the left. Thus the query complexity is at most logarithmic.
As shown in \cref{exa:log-matched-guard}, $\taumg$ is
\kl[relation shatters]{non-shattering} for this pair.
Hence \cref{prop:lang-log-ub}, which we prove below, gives
$Q_{P_{\log},T_{\log}}(n)=O(\log n)$; together with 
\cref{lem:logarithmic-block-words}, this yields
$Q_{P_{\log},T_{\log}}(n)=\Theta(\log n)$.
\end{example}
We now generalize this binary-search intuition to 
the general case.
The overall challenge is that several binary searches may be needed: further, the same idempotent
may occur at different positions, so a single query does not always determine
in which direction to search next.
\Cref{exa:log-two,exa:log-nested} in \cref{app:additional-examples} illustrate
this challenge.
One more immediate problem, however, is that a single query does not always yield an idempotent
factor of the word. Indeed, as illustrated in \cref{exa:log-blocks}, \kl[idempotent element]{idempotent elements} that give us directional information need not be images of single letters and may require queries of entire blocks. This is easily addressed by 
\cref{lem:idempotent-subfactor}: there is a fixed
$q\in\N_+$ such that every factor of at least $q$ consecutive letters
contains such a subfactor. For this reason, we will always query contiguous
blocks of at least
$q$ letters. In fact, for technical reasons, we also require the block length to be a multiple
of the common positive preimage length $\pi$ of all idempotents given by 
\cref{lem:common-idempotent-preimage}.
Thus, for the rest of this section,
we define the block length $K\coloneq q\pi$:
this ensures that
every block of $K$ letters contains a
nonempty factor with idempotent image under $\eta$
(\cref{lem:idempotent-subfactor}), and every
$e\in\Idem(S_\eta)$ has a preimage of length $K$ (by
\cref{lem:common-idempotent-preimage} and idempotence).

\begin{toappendix}
  We first show the auxiliary results given in the main text:
\end{toappendix}

We now explain
how to view words and queries over the block alphabet $\Sigma^K$ which we will
use throughout the proof.
The \emph{decoding} of a word
$W\in(\Sigma^K)^*$ concatenates its blocks into a word over $\Sigma$.
For convenience we will abuse notation and see such words $W$ as words
over~$\Sigma$ (i.e., decoding them implicitly), for instance 
when applying $\eta$ or considering membership in $P$ and $T$.
For $W\in(\Sigma^K)^*$, the length $|W|$ counts blocks, and the
decoding has length $K|W|$.
Similarly, a \kl{partial word} over $\Sigma^K$ is a word over $\Sigma^K \cup
\{\blank\}$ and is (implicitly) decoded to a word over $\Sigma \cup \{\blank\}$
by replacing each blank block with $K$ blanks.

We first introduce \emph{$\taumg$-coherence}, a property of decision trees used
throughout the proof:

\begin{definition} %
\label[definition]{def:log-coherent-tree}
Let $\Sigma$ be an alphabet and let $\calT$ be a
\kl{decision tree} over $\Sigma$. We say that $\calT$ is
\emph{$\taumg$-coherent} if, for every leaf $d$ of $\calT$
with \kl{transcript} $t$ and every two \kl[completion]{completions}
$t_x,t_y$ of $t$, we have $\eta(t_x)\taumg\eta(t_y)$.
For $n\in\N$, the \emph{$\taumg$-coherent query complexity}
$C_\Sigma(n)$ is the maximum, over lengths $0\le k\le n$,
of the minimum depth of a $\taumg$-coherent decision tree
over $\Sigma$ at length $k$.
\end{definition}

The point of $\taumg$-coherence is that it is preserved when decoding decision
trees from $\Sigma^K$ to~$\Sigma$:

\begin{lemmarep} %
\label[lemma]{lem:log-block-tree-expansion}
Let $k,d\in\N$ and let $\calT$ be a $\taumg$-coherent
\kl{decision tree} over $\Sigma^K$ at length $k$ and depth at most $d$.
We can build a $\taumg$-coherent decision tree $\calT'$ over $\Sigma$
at length $Kk$ and depth at most $Kd$.
\end{lemmarep}
\begin{proofsketch}
We simply replace each block query by $K$ letter queries.
\end{proofsketch}

\begin{proof}
We replace each block query by $K$ letter queries. We first construct
$\calT'$,
then verify that $\taumg$-coherence is preserved,
before finally bounding the depth of~$\calT'$.

We first construct $\calT'$. At each internal node of $\calT$
that queries block $i\in[k]$, where $i$ is the queried block position,
replace the node by a complete tree that
queries the positions $K(i-1)+1,\ldots,Ki$ in order. The answers
form a block $A\in\Sigma^K$, the block read at position $i$.
Attach the corresponding child of the
original node at the leaf reached by $A$. Keep all original leaf
labels. Since no block is queried twice along a branch of $\calT$,
and distinct blocks occupy disjoint letter positions, it is also true that
no letter is queried twice along a branch of $\calT'$.

We now prove that coherence is preserved. Let $d'$ be a leaf of $\calT'$ with
transcript $t'$, and let $d$ be the original leaf of~$\calT$ from which it
arose, with transcript $t$. For any two completions $t_x,t_y$ of
$t'$, grouping their letters into blocks gives completions of $t$.
Since $\calT$ is $\taumg$-coherent, we have
$\eta(t_x)\taumg\eta(t_y)$. Hence $\calT'$ is $\taumg$-coherent.

Finally, we bound the depth. Each block query is replaced by
exactly $K$ letter queries. Thus each branch of length at most
$d$ becomes a branch of length at most $Kd$, as claimed.
\end{proof}

Now, as $\taumg$ is \kl[relation shatters]{non-shattering}, then
any $\taumg$-coherent decision tree over $\Sigma$ can be made
\kl{correct} for $(P,T)$ by labeling its leaves, thanks to
\cref{lem:insensitive-leaf}. 
Indeed, two completions $t_x,t_y$ of the same leaf transcript
have $\taumg$-related images, so they cannot form a
\kl{shattered pair}.
This allows us to conclude that our goal of 
building correct decision trees for $P$ and $T$ over~$\Sigma$ reduces to that of building
$\taumg$-coherent decision trees over~$\Sigma^K$:

\begin{lemmarep} %
\label[lemma]{lem:log-block-query-simulation}
If $\taumg$ is \kl[relation shatters]{non-shattering}, then
for all $n\in\N$, we have
$Q_{P,T}(n)\le K\bigl(C_{\Sigma^K}(\lfloor n/K\rfloor)+1\bigr)$,
where $C_{\Sigma^K}$ is the query complexity restricted to
$\taumg$-coherent decision trees over $\Sigma^K$.
\end{lemmarep}
\begin{proofsketch}
  We divide words over $\Sigma$ into blocks of size $K$, with a small suffix of
  length less than~$K$. We query this short suffix, then use
  a $\taumg$-coherent block tree on
the prefix (whose length is a multiple of~$K$), which we obtain by 
\cref{lem:log-block-tree-expansion}.
\end{proofsketch}

\begin{proof}
Fix $n \in \N$: we construct a correct tree $\calT$
over~$\Sigma$ for every length $m\le n$.
Fix a word length $m\in\N$ with $m\le n$,
and write $m=Kk+r$, where $k = \lfloor m/K \rfloor \in\N$ counts full blocks and
$0\le r<K$ is the number of remaining letters.
If $k=0$, query all $m$ letters and label each leaf according to
membership in $T$ for inputs in $P$ (arbitrarily otherwise).
Otherwise, for $k>0$,
let $D\coloneq C_{\Sigma^K}(k)$, the query
complexity restricted to coherent block trees up to length
$k$. 
By the definition of $D$, there exists a
$\taumg$-coherent tree $\calT_k$ over $\Sigma^K$ at length $k$
of depth at most $D$.
Use \cref{lem:log-block-tree-expansion} to translate it to a $\taumg$-coherent
decision tree $\calT'_k$ over~$\Sigma$ at length $Kk$ of depth at most~$KD$.
Then, to build~$\calT$, first query the final $r$ letters using a complete
decision tree, and then
attach a copy of $\calT_k'$ at each leaf.

We next establish correctness of the leaves of~$\calT$. The claim is immediate
for~$k=0$, so assume that $k>0$.
Fix a leaf of $\calT$, and let $u\in\Sigma^r$ be
the suffix revealed by the first $r$ queries. Let $d$ be the
corresponding leaf of $\calT_k'$, let $t$ be its transcript,
and take any two completions
$t_x,t_y$ of~$t$. By $\taumg$-coherence, we have
$\eta(t_x)\taumg\eta(t_y)$, interpreting the block words
through decoding. Since $\eta(u)\taumg\eta(u)$ by reflexivity,
compatibility with multiplication shows that the 
words $t_x u$ and $t_y u$ 
also have images related by~$\taumg$.
As $\taumg$ is non-shattering, these two words cannot lie on
opposite sides of $T$ while both belonging to $P$.
This holds for every two completions of the resulting leaf transcript,
so \cref{lem:insensitive-leaf} gives a correct leaf labeling.

Finally, we bound the depth. If $k=0$, the tree has depth
$m<K$. Otherwise, it has depth at most $r+KD$, where $r<K$.
Hence in both cases the depth is at most $K(D+1)$.
Taking the maximum over all $m\le n$ gives
$Q_{P,T}(n)\le K\bigl(C_{\Sigma^K}(\lfloor n/K\rfloor)+1\bigr)$.
\end{proof}

Thus, by \cref{lem:log-block-query-simulation},
to show the logarithmic
upper bound it suffices to show $C_{\Sigma^K}(k)=O(\log(k+1))$.
Thus, for every $k\in\N$ we must build a $\taumg$-coherent decision tree
$\calT$
over blocks (i.e., on~$\Sigma^K$) that has
logarithmic depth. 
The point of working on blocks is that
any words on $\Sigma^K$ whose first and last letters are all the same must have images
related by $\taumg$; we will use it to show $\taumg$-coherence:
\begin{lemmarep} %
\label[lemma]{lem:log-matched-blocks}
For all $v\in\Sigma^K$ and $X',Y'\in(\Sigma^K)^*$,
we have $\eta(vX'v)\taumg\eta(vY'v)$.
\end{lemmarep}

\begin{proofsketch}
  This relies on the ability to find 
  an idempotent image in the common endpoint
  (with \cref{lem:idempotent-subfactor}), and using idempotence to pad $X'$ and
  $Y'$ to a common length which is also compatible with preimages of the
  idempotents (this relies on
  \cref{lem:common-idempotent-preimage}).
\end{proofsketch}

\begin{proof}

We first fix the two middle words and their lengths. Fix
$v\in\Sigma^K$ and $X',Y'\in(\Sigma^K)^*$.
By symmetry of $\taumg$, up to exchanging $X'$ and $Y'$, assume
$|X'|\le |Y'|$ (when counting blocks, i.e., as words on~$\Sigma^K$). Set
$d\coloneq |Y'|-|X'|\in\N$, the difference in the number of blocks.
We first find an idempotent factor of $v$.
Since $K=q\pi\ge q$, \cref{lem:idempotent-subfactor} gives a
factorization $v=heg$, where $h,g\in\Sigma^*$, $e\in\Sigma^+$, and $\eta(e)$ is
idempotent. In particular, $\eta(e)\in\Idem(S_\eta)$.

Now, we pick a suitable preimage for~$e$.
By \cref{lem:common-idempotent-preimage}, choose $E\in\Sigma^\pi$
with $\eta(E)=\eta(e)$. Since $K=q\pi$ and $q>0$, the word
$E^q\in\Sigma^K$ has image $\eta(E^q)=\eta(e)$ by idempotence.
Put $X\coloneq gX'he(E^q)^d$ and $Y\coloneq gY'he$ in $\Sigma^*$.
Counting letters of $\Sigma$, we have
$|X|=K|X'|+K+dK=K|Y'|+K=|Y|$,
a positive multiple of $K$ and hence of $\pi$.

We next apply the matched-guard relation to equal-length words.
Set $\ell\coloneq |X|/\pi\in\N_+$, the number of blocks of
length $\pi$ in $X$.
The word $E^\ell$ has length $|X|=|Y|$ and image
$\eta(E^\ell)=\eta(e)$ by idempotence. Thus
$\eta(e)\taulen\eta(X)$ and $\eta(e)\taulen\eta(Y)$, so
\cref{def:matched-guard-tolerance} gives
$\eta(eXe)\mathrel{R_{\mgrel}}\eta(eYe)$.
By idempotence of $\eta(e)$, we have
$\eta(e(E^q)^de)=\eta(e)$; this also holds when $d=0$. Hence
$\eta(vX'v)=\eta(hegX'heg)=\eta(hegX'he(E^q)^deg)=\eta(h)\eta(eXe)\eta(g)$ and
$\eta(vY'v)=\eta(hegY'heg)=\eta(hegY'heeg)=\eta(h)\eta(eYe)\eta(g)$.
Since $\taumg$ contains $R_{\mgrel}$, reflexivity and
\kl[compatible with multiplication]{compatibility with multiplication}
give $\eta(vX'v)\taumg\eta(vY'v)$.
\end{proof}

The decision tree $\calT$ over blocks (i.e., letters of~$\Sigma^K$) will proceed
by performing binary search over \emph{intervals}, corresponding to contiguous
ranges of $\blank$ in the transcript of the nodes of~$\calT$, focusing on
so-called \emph{active} intervals on which 
\cref{lem:log-matched-blocks} is intuitively not yet applicable. We define
these notions on an abstract alphabet~$\Gamma$ as they are generic, but we will
always use $\Gamma \coloneq \Sigma^K$:

\begin{definition}[Intervals of a partial word]
\label[definition]{def:log-active-interval}
Let $\Gamma$ be an alphabet and $t$ a \kl{partial word} of length
$k\in\N$ over $\Gamma$. Write
$t=\blank^{k_1}v_1\blank^{k_2}v_2\cdots
\blank^{k_s}v_s\blank^{k_{s+1}}$,
where $s\in\N$, $v_i\in\Gamma$ for $i\in[s]$,
$k_i\in\N$ for $i\in[s+1]$, and
$s+\sum_{i=1}^{s+1}k_i=k$.
For $s=0$, this means $t=\blank^{k_1}$.
For each $i\in\{0,\ldots,s\}$, put
$s_i\coloneq i+\sum_{j=1}^{i}k_j$ and
$e_i\coloneq s_i+k_{i+1}$.
The \intro[interval]{$i$-th interval} of $t$ is the (possibly empty) set of blank
positions $\{s_i+1,\ldots,e_i\}$, of \emph{size} $k_{i+1}$.
The interval is \intro[active interval]{active} if $i=0$, or if
$i\in[s]$ and $v_i$ does not occur among $v_{i+1},\ldots,v_s$,
i.e., $v_i\neq v_j$ for every $j\in\{i+1,\ldots,s\}$.
\end{definition}

Our binary search strategy will focus on active intervals, of which there are
constantly many:

\begin{claim}
\label[claim]{clm:log-active-number}
For every finite alphabet $\Gamma$ and every partial word $t$ over $\Gamma$,
the number of active intervals of $t$ is at most $|\Gamma|+1$.
\end{claim}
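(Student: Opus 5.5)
The plan is to bound the active intervals by an injection into $\{0\}\cup\Gamma$. The $0$-th interval is always active and is sent to the symbol $0$. The remaining active intervals are indexed by $i\in[s]$, and each such interval is sent to the letter $v_i\in\Gamma$.

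The key step is to show that this map is injective on the active intervals with $i\ge1$. Suppose two active intervals with indices $1\le i<j\le s$ were sent to the same letter, so that $v_i=v_j$. Activity of the $i$-th interval means $v_i\neq v_{j'}$ for every $j'\in\{i+1,\ldots,s\}$. Since $j$ lies in that range, this contradicts $v_i=v_j$. Hence distinct active intervals with positive index carry distinct letters $v_i$. Equivalently, each letter $a\in\Gamma$ can make at most one interval active, namely the one following the last occurrence of $a$ among $v_1,\ldots,v_s$.

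It follows that there are at most $|\Gamma|$ active intervals with $i\in[s]$, plus the $0$-th one, for a total of at most $|\Gamma|+1$. The case $s=0$ is covered, since then only the $0$-th interval exists. Note that the argument counts intervals by index, including empty ones, which only strengthens the bound. No step is a real obstacle; the only care needed is to read ``active'' as a condition on the \emph{last} occurrence of $v_i$, which is exactly what makes the map injective.
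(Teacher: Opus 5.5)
Your proof is correct and is essentially the paper's argument. The paper likewise observes that the letter $v_i$ preceding an active interval with $i\ge 1$ cannot recur among later active intervals, because activity forbids any later occurrence of $v_i$; this gives at most $|\Gamma|$ such intervals plus the $0$-th one.
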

\begin{proof}
  It is immediate that the preceding letter $v_i$ of each active interval with
  $i>0$ is unique among active intervals.
Thus, there are at most $|\Gamma|$ such intervals, plus the first interval.
\end{proof}

The point of intervals is that we can use \cref{lem:log-matched-blocks} to show that
\kl[completion]{completions} agreeing on all \kl[active interval]{active intervals}
have images related by $\taumg$, so that intuitively only such intervals must be
queried:

\begin{lemmarep}
\label[lemma]{lem:log-active-completions}

Let $t$ be a \kl{partial word} over $\Sigma^K$ and let
$t_x,t_y$ be two \kl[completion]{completions} of $t$.
If $t_x$ and $t_y$ agree on every \kl{active interval} of $t$, then
$\eta(t_x)\taumg\eta(t_y)$.
\end{lemmarep}

\begin{proofsketch}
We do an induction on the number of non-$\blank$ positions of $t$.
Considering the first position that contains a non-$\blank$ letter $a\in\Sigma^K$, we 
apply \cref{lem:log-matched-blocks} to the factor of the completions of~$t$ from
this position to the last occurrence of~$a$ in~$t$.
The remaining suffix is handled using the induction hypothesis, and we conclude
by compatibility with multiplication.
\end{proofsketch}

\begin{proof}
We prove the invariant by induction on the number $s\in\N$ of
non-$\blank$ positions of an arbitrary partial word $t$ over $\Sigma^K$.
At each step, we compare the factor from the first non-$\blank$ letter to
its last occurrence and then apply induction to the remaining suffix.
Write $k\coloneq|t|$ for the length of $t$.

We establish the base case $s=0$. Then $t=\blank^k$.
Its only interval is active, so any two completions
$t_x$ and $t_y$ of $t$ must be identical,
and the conclusion follows by reflexivity of $\taumg$.

We next set up the inductive step. Let $s\in\N_+$, and assume
the assertion holds for all partial words with fewer than $s$
non-$\blank$ positions.
Recall the factorization
$t=\blank^{k_1}v_1\blank^{k_2}\cdots v_s\blank^{k_{s+1}}$.
For each $i\in\{0,\ldots,s\}$, let $x_i,y_i\in(\Sigma^K)^*$ be the
respective words filling the $i$-th interval of $t$ in $t_x,t_y$;
each of these words has length $k_{i+1}$ in blocks, and is empty
exactly when the $i$-th interval is empty. Thus
$t_x=x_0v_1x_1\cdots v_sx_s$ and
$t_y=y_0v_1y_1\cdots v_sy_s$.
Since the first interval is active, $x_0=y_0$.

We compare the factors with matching endpoints next.
We now focus on the first non-$\blank$ letter $v_1$ of~$t$.
Let $j\coloneq\max\{i\in[s]\mid v_i=v_1\}\in[s]$ be the index of the
last occurrence of $v_1$ in $t$.
Let $p_x,p_y\in(\Sigma^K)^*$ be the respective factors of $t_x,t_y$
from $v_1$ through $v_j$, inclusive, and let
$u_x,u_y\in(\Sigma^K)^*$ be their respective suffixes strictly after
$v_j$. Then $t_x=x_0p_xu_x$ and $t_y=x_0p_yu_y$.
If $j=1$, then we have $p_x=p_y=v_1$, so
$\eta(p_x)\taumg\eta(p_y)$ by reflexivity.
Otherwise, let $X',Y'\in(\Sigma^K)^*$ be the respective factors
strictly between $v_1$ and $v_j$ in $p_x,p_y$.
Since $p_x=v_1X'v_j$, $p_y=v_1Y'v_j$, and $v_j=v_1$,
applying \cref{lem:log-matched-blocks} gives $\eta(p_x)\taumg\eta(p_y)$.

We will now apply the induction hypothesis to the remaining suffix.
Let $t'$ be the suffix \kl{partial word} of $t$ strictly after $v_j$;
when $j=s$, this is $t'=\blank^{k_{s+1}}$.
The words $u_x,u_y$ are \kl[completion]{completions} of $t'$, which
has $s-j<s$ non-$\blank$ positions.
The first interval of $t'$ corresponds to the $j$-th interval of $t$,
which is active because $v_j$ has no later occurrence.
Every other active interval of $t'$ corresponds to an active interval
of $t$, since the occurrences later than its preceding block are
unchanged. Hence $u_x,u_y$ agree on every \kl{active interval} of $t'$.
By the induction hypothesis, $\eta(u_x)\taumg\eta(u_y)$.
Finally, note that
$x_0=y_0$, so reflexivity and
\kl[compatible with multiplication]{compatibility with multiplication}
give $\eta(t_x)\taumg\eta(t_y)$.
\end{proof}

We are now ready to define the decision tree $\calT$ over the block alphabet
$\Sigma^K$:

\begin{definition} %
\label[definition]{def:log-block-decision-tree}
Fix $k\in\N$. Define a \kl{decision tree} $\calT$ over $\Sigma^K$
for words of length $k$ as follows.
For every node $d$ of $\calT$, let $t$ be its \kl{transcript} over
$\Sigma^K$. If no \kl{active interval} of $t$ is nonempty, $d$ is a leaf.
Otherwise, let $N\in\N_+$ be the maximum size of an
\kl{active interval} of $t$.
Choose the leftmost \kl{active interval} of $t$ of size $N$, and let $i$ be the
index of this interval.
Label $d$ by the middle position
$m_i\coloneq\lfloor(s_i+e_i+1)/2\rfloor\in[k]$.
Because $s_i<m_i\le e_i$, we know that position $m_i$ carries $\blank$ in $t$.
Give $d$ one child for every block value in $\Sigma^K$.
Then repeat the same process on the children, until all nodes are leaves:
this terminates as any position queried on a branch is never queried
again on that branch.
\end{definition}

We now show that $\calT$ is 
$\taumg$-coherent and
show the logarithmic bound on its \kl{depth}:

\begin{claim}
\label[claim]{clm:log-tree-correct}
The tree $\calT$ is $\taumg$-coherent.
\end{claim}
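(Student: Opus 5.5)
The plan is to reduce $\taumg$-coherence of $\calT$ directly to \cref{lem:log-active-completions}. Fix a leaf $d$ of $\calT$ with transcript $t$, and two completions $t_x,t_y$ of $t$. By the construction in \cref{def:log-block-decision-tree}, a node becomes a leaf only when no \kl{active interval} of its transcript is nonempty. So every active interval of $t$ is empty, i.e., it contains no blank positions.

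Consequently, $t_x$ and $t_y$ vacuously agree on every active interval of $t$. There is no position in any active interval on which they could disagree. \Cref{lem:log-active-completions} then yields $\eta(t_x)\taumg\eta(t_y)$, which is exactly the coherence condition of \cref{def:log-coherent-tree} for this leaf. Since the leaf and the completions were arbitrary, $\calT$ is $\taumg$-coherent.

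Two points need a brief check.

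First, $\calT$ is finite, so its leaves are exactly the nodes where the stopping rule fires. This holds because each branch queries a fresh blank position at every step, so branches have length at most $k$.

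Second, the notion of active interval in the stopping rule must be the one of \cref{def:log-active-interval}, applied to the transcript over $\Gamma=\Sigma^K$. Completions of $t$ over $\Sigma^K$ are then exactly the words that \cref{lem:log-active-completions} quantifies over, with images understood through decoding. This is the same convention as in \cref{def:log-coherent-tree}.

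I expect no real obstacle here. All the algebraic work was done in \cref{lem:log-matched-blocks} and \cref{lem:log-active-completions}, and the claim is essentially the observation that the tree's stopping criterion matches that lemma's hypothesis. The substantive difficulty, showing that active intervals are exhausted within $O(\log k)$ queries, belongs to the separate depth bound rather than to coherence.
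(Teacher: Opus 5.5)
Your proposal is correct and matches the paper's proof: a leaf's transcript has only empty active intervals, so any two completions vacuously agree on them, and \cref{lem:log-active-completions} gives $\eta(t_x)\taumg\eta(t_y)$. Your extra checks, finiteness of branches and the block-alphabet convention, are left implicit in the paper's argument.
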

\begin{proof}
Let $t$ be
the transcript of a leaf of~$\calT$, and let $t_x,t_y$ be two completions of $t$.
Every active interval of $t$ is empty,
so $t_x$ and $t_y$ agree on all of them. By
\cref{lem:log-active-completions}, $\eta(t_x)\taumg\eta(t_y)$.
\end{proof}

\begin{claimrep}
\label[claim]{clm:log-tree-depth}
The tree $\calT$ has \kl{depth} at most
$B(\lfloor\log_2(k+1)\rfloor+1)$, where $B\coloneq |\Sigma^K|+1$.
\end{claimrep}

\begin{proofsketch}
  We divide each branch into \emph{groups} of $B$ queries and show that
each group halves the largest active-interval size.
Fix the first node $d$ of a group, let $t_d$ be its transcript,
and let $N\coloneq\max\{|I|: I\text{ is an active interval of }t_d\}$
be the largest active-interval size at $d$.
Each query of an interval larger than $N/2$ splits it into
parts of size at most $N/2$, and inactive intervals
remain inactive. As there are at most $B$ active intervals,
within $B$ queries there will be no interval left with size
greater than $N/2$. Thus, after $O(\log k)$ groups, all active intervals are empty.
\end{proofsketch}

\begin{proof}
We partition each branch into groups of at most $B$ queries. We first
show that a query cannot make an unchanged interval active. We then prove
that each group halves the largest active-interval size (measured at the first
node of the group), and iterate this bound.

We first describe how each query changes the intervals. Let $t$ be the
transcript before a query, let $s\in\N$ be its number of non-$\blank$
positions, and let $i\in\{0,\ldots,s\}$ be the index of the
interval containing the queried position. Let $v\in\Sigma^K$ be
the block returned by the query, and let $t'$ be the resulting
transcript. The intervals of $t'$ with indices less than $i$
are the corresponding intervals of $t$. Its $i$-th and $(i+1)$-th
intervals are the two parts of the old $i$-th interval, minus the
queried position. For $j>i+1$, the $j$-th interval of $t'$ is the
$(j-1)$-th interval of $t$.
For each unchanged interval that was inactive in $t$, let $v_j$
be its preceding block and choose
$\ell\in\{j+1,\ldots,s\}$ with $v_\ell=v_j$.
The original non-$\blank$ blocks occur in the same order in $t'$, so
$v_\ell$ is still later than $v_j$. Thus this interval remains
inactive. This means that the active intervals in~$t'$ correspond to active
intervals in~$t$, plus up to two newly active intervals which must be one of the
two parts of the queried interval.

We next prove that the largest active-interval size is halved on a group of $B$
consecutive queries along any branch of the decision tree $\calT$. Fix its first
node $d$ of the group,
let $t_d$ be the transcript of $d$, and let
$N\coloneq\max\{|I|: I\text{ is an active interval of }t_d\}$
be the maximum active-interval size at $d$.
This value is fixed throughout the group. If $N=0$, then $d$
is a leaf and there is nothing to show. 
Otherwise, we say that an active interval is \emph{big} if its size is greater
than~$N/2$.
Initially, at most $B$ active intervals
of~$t_d$ are big, because $t_d$ has at most $B$ active intervals by \cref{clm:log-active-number}.
Now, as we are selecting intervals of maximal size, at any node whose transcript
has at least one big active interval, then we will select a big interval to query.
Now, it suffices to observe that newly active intervals cannot be big: this is because
$d$ had no active intervals of size greater than~$N$, and when querying an interval
of size at most~$N$ then the newly active intervals are obtained by splitting
the queried interval (by the previous paragraph) so no big active interval can be
created (and in particular no active interval of size greater than~$N$ is ever
created).
We have shown that, at each node, at least one big interval is queried and
removed, and no new big intervals are ever created. Thus,
after at most $B$ queries, either the branch ends or every active interval has
size at most $N/2$.

We now iterate along an arbitrary branch. For each group index
$a\in\N$ for which the branch reaches depth $aB$, let $d_a$ be
the node at that depth, and let us define its maximum active-interval size:
$N_a\coloneq\max\{|I|: I\text{ is an active interval of the transcript of
}d_a\}$.
At the root, we have $N_0=k$.
By the preceding paragraph, if both
$d_a$ and $d_{a+1}$ exist, then $N_{a+1}\le N_a/2$. Induction gives
$N_a\le k/2^a$ whenever $d_a$ exists. If $k=0$, the root is a leaf.
Otherwise choose $a\coloneq\lfloor\log_2(k+1)\rfloor+1$. Then
$k/2^a<1$, so any node $d_a$ would have all active intervals empty
and would be a leaf. Thus no branch contains more than $aB$ queries,
and so the claimed depth bound holds.
\end{proof}

As explained after \cref{lem:log-block-query-simulation}, we have now shown
\cref{prop:lang-log-ub} with 
\cref{clm:log-tree-depth,clm:log-tree-correct}.

\begin{toappendix}
  We can now conclude the proof of \cref{prop:lang-log-ub}:
\begin{proof}[Proof of \cref{prop:lang-log-ub}]
Fix $n\in\N$ and let $k\coloneq\lfloor n/K\rfloor$.
For every $1\le j\le k$, construct the tree $\calT_j$ on $j$ blocks
as explained in the main text. By \cref{clm:log-tree-depth}, its depth is
at most $B(\lfloor\log_2(j+1)\rfloor+1)$, where
$B\coloneq|\Sigma^K|+1$ does not depend on $j$.
By \cref{clm:log-tree-correct}, every $\calT_j$ is
$\taumg$-coherent (and the single-leaf tree is coherent at length zero).
Let $d\coloneq B(\lfloor\log_2(k+1)\rfloor+1)$: this bounds the depth of all
these
trees. Since $j\le k$, we have shown that $\taumg$-coherent
query complexity on alphabet $\Sigma^K$ satisfies $C_{\Sigma^K}(k)\le d$.

We now transfer the bound to ordinary query complexity.
Since $\taumg$ is non-shattering, \cref{lem:log-block-query-simulation}
gives $Q_{P,T}(n)\le K\bigl(C_{\Sigma^K}(k)+1\bigr)\le K(d+1)$.
Since $K$ and $B$ are fixed, this is $O(1+\log(n+1))$,
and hence $O(\log n)$ as $n$ grows.
\end{proof}

\end{toappendix}

\subsection{The Contextual Matched-Guard Relations}
\begin{toappendix}
\subsection{The Contextual Matched-Guard Relations}
\end{toappendix}

Having proved the logarithmic upper bound when the \kl{matched-guard tolerance}~$\taumg$
is \kl[relation shatters]{non-shattering}, we now turn to the linear
lower bound when it is \kl[relation shatters]{shattering}.
Similarly to \cref{def:cgrel} in \cref{sec:constant}, we will need to give an
equivalent rephrasing of the
\kl{matched-guard tolerance} which provides a more explicit form for its pairs.
Like \cref{sec:constant} 
we will impose stronger length requirements, but unlike \cref{sec:constant} it
will no longer be sufficient to take one single pair and extend it by one common
left and right context. Instead, we will need to handle up to $s$ pairs at once:

\begin{definition}
\label[definition]{def:matched-guard-decomposition}
For $s\in\N$, define the \intro{$s$-contextual matched-guard relation}
$R_{\cmgrel,s}\subseteq M\times M$ as
\[
R_{\cmgrel,s}\coloneq
\bigcup_{\substack{(k_1,\ldots,k_t)\in\N_+^t\\0\le t\le s}}
\left\{
\left(\begin{array}{l}
a_0(e_1x_1e_1)a_1\cdots(e_tx_te_t)a_t,\\
a_0(e_1y_1e_1)a_1\cdots(e_ty_te_t)a_t
\end{array}\right)
\,\middle|\,
\begin{array}{l}
a_0,\ldots,a_t\in M,\\
e_i\in\Idem(S_\eta),\quad x_i,y_i\in M,\\
e_i,x_i,y_i\in\eta(\Sigma^{k_i})\text{ for all }i\in[t]
\end{array}
\right\}.
\]
\end{definition}

Note that $R_{\cmgrel,0}=\Delta_M$, consisting of pairs of the form $(a_0, a_0)$.
By definition, $R_{\cmgrel,s}\subseteq R_{\cmgrel,s+1}$ for every $s\in\N$,
i.e., the relations 
form an increasing hierarchy.
See \cref{exa:linear-two} in \cref{app:additional-examples} for an illustration
of why taking $s=1$ does not suffice because $R_{\cmgrel,1}$ is not closed under
multiplication.
For the linear lower bound, 
we need to show that the hierarchy collapses at a height independent of the
word length, namely $s\coloneq m^2=|M\times M|$, the number of ordered pairs of elements of the
monoid~$M$:

\begin{lemmarep}
\label[lemma]{lem:matched-guard-closure}
We have
${\taumg} = {\mathrel{R_{\cmgrel,m^2}}}$.
\end{lemmarep}

\begin{proofsketch}
  We prove both inclusions. For $R_{\cmgrel,m^2}\subseteq{\taumg}$, performing
$0$ substitutions gives a diagonal pair, which belongs to $\taumg$
by reflexivity. Otherwise, the common-length condition gives
$e_i\taulen x_i$ and $e_i\taulen y_i$ for each substitution.
Thus each guarded pair $(e_i x_i e_i, e_i y_i e_i)$ belongs to $R_{\mgrel}$.
We conclude via compatibility with multiplication, using reflexivity for the
contexts $a_i$.

For ${\taumg}\subseteq R_{\cmgrel,m^2}$,
\cref{lem:bounded-preimage} expresses each pair in $\taumg$ as a
product of fewer than $m^2$ factors from $R_{\mgrel}\cup\Delta_M$.
We collect the diagonal factors into common contexts.
For each remaining factor $(exe,eye)\in R_{\mgrel}$, the witnesses
for $e\taulen x$ and $e\taulen y$ may have different lengths
$r$ and $s$. Concatenating these witnesses gives preimages of
$e$, $xe$, and $ey$ of the same positive length $r+s$.
Since $e$ is idempotent, $(e(xe)e,e(ey)e)=(exe,eye)$.
We therefore obtain the required decomposition with fewer than
$m^2$ substitutions, so the pair belongs to $R_{\cmgrel,m^2}$.
\end{proofsketch}

\begin{proof}
  We prove ${\taumg}=R_{\cmgrel,m^2}$ by showing the two inclusions, where we recall
that $m$ was defined as $m\coloneq |M|$, i.e., the size of the \kl{monoid} $M$.
First, we prove $R_{\cmgrel,m^2}\subseteq{\taumg}$, which is the easy direction,
simply by establishing that
each pair in the definition of $R_{\cmgrel,m^2}$ is a product of pairs in $\taumg$.
Second, we prove ${\taumg} \subseteq R_{\cmgrel,m^2}$, using the
fact that pairs
of~$\taumg$ can be expressed as a product of at most $m^2$ pairs from $R_{\mgrel}$ and $\Delta_M$.

\medskip

We first prove $R_{\cmgrel,m^2}\subseteq{\taumg}$.
Let $(p,q)\in R_{\cmgrel,m^2}$, where $p,q\in M$.
By \cref{def:matched-guard-decomposition}, there exist an integer $t\in\N$ with $0\le t\le m^2$ and elements $a_0,\ldots,a_t\in M$ which are the contexts.
For each $i\in[t]$, that is, for each integer $1\le i\le t$, the definition gives a positive integer $k_i\in\N_+$, an idempotent $e_i\in\Idem(S_\eta)$, and two elements $x_i,y_i\in M$.
These elements satisfy $e_i,x_i,y_i\in\eta(\Sigma^{k_i})$.
The element $e_i$ is the idempotent (the ``guard''), and $x_i,y_i$ are the
middle elements of the $i$-th pair.
The definition also gives $p=a_0(e_1x_1e_1)a_1\cdots(e_tx_te_t)a_t$ and $q=a_0(e_1y_1e_1)a_1\cdots(e_ty_te_t)a_t$.

If $t=0$, these equalities give $p=q=a_0$, so $(p,q)\in{\taumg}$ by reflexivity.
Suppose now that $t\ge1$.
For each $i\in[t]$, membership of $e_i,x_i,y_i$ in $\eta(\Sigma^{k_i})$ means that each is the image of a word of length $k_i$.
Since $k_i>0$, \cref{def:equal-length-tolerance} gives $e_i\taulen x_i$ and $e_i\taulen y_i$.
Since $e_i\in\Idem(S_\eta)$ and $x_i,y_i\in M$,
\cref{def:matched-guard-tolerance} gives $(e_ix_ie_i,e_iy_ie_i)\in
R_{\mgrel}\subseteq {\taumg}$.
Also, $(a_i,a_i)\in{\taumg}$ for every $0\le i\le t$, by reflexivity of $\taumg$.
By \kl[compatible with multiplication]{compatibility with multiplication}, the product of these pairs belongs to $\taumg$.
Take the product in this order: $(a_0,a_0)$, then $(e_1x_1e_1,e_1y_1e_1)$, then $(a_1,a_1)$, and so on, ending with $(a_t,a_t)$.
The formulas for $p$ and $q$ show that this product is $(p,q)$.
Thus $R_{\cmgrel,m^2}\subseteq\taumg$.

\medskip

We now prove ${\taumg}\subseteq R_{\cmgrel,m^2}$, starting with a bound on the number of factors needed to obtain a pair in $\taumg$.
First note that the relation $R_{\mgrel}$ is symmetric by definition: whenever $(exe,eye)\in R_{\mgrel}$ for $e\in\Idem(S_\eta)$ and $x,y\in M$ with $e\taulen x$ and $e\taulen y$, exchanging $x$ and $y$ gives $(eye,exe)\in R_{\mgrel}$.
However, $R_{\mgrel}$ need not be reflexive, so 
let $\Gamma\coloneq\Delta_M\cup R_{\mgrel}\subseteq M\times M$.
We regard $\Gamma$ as an alphabet whose letters are pairs of elements of $M$.
By \cref{def:tolerance} (more specifically the equivalent rephrasing given
immediately below \cref{def:tolerance}), $\taumg$ is in fact the submonoid of
$M\times M$ generated by $\Gamma$: this uses the fact that $\Gamma$ is already
symmetric because $R_{\mgrel}$ was.

Thus, let us define the \kl[monoid morphism]{morphism}
$\mu\colon\Gamma^*\to{\taumg}$ by $\mu((u,v))\coloneq(u,v)$ for every letter $(u,v)\in\Gamma$.
For a word over $\Gamma$, the \kl[monoid morphism]{morphism} $\mu$ takes the product of its letters in $M\times M$; it maps the empty word to $(1_M,1_M)$.
Since $\Gamma$ generates $\taumg$, this morphism is surjective.
Let $(p,q)\in{\taumg}$, where $p,q\in M$.
Applying \cref{lem:bounded-preimage} to this surjective morphism $\mu$ gives an
integer $r\in\N$ with $r<|{\taumg}|\le|M\times M|=m^2$.
It also gives letters $g_1,\ldots,g_r\in\Gamma$ such that $(p,q)=g_1\cdots g_r$ in $M\times M$.
When $r=0$, this is the empty product $(p,q)=(1_M,1_M)$.

We next show how to replace the diagonal factors in $g_1\cdots g_r$ by contexts, without changing $(p,q)$.
Let $t\in\N$ be the number of factors $g_j$ that do not belong to the \kl{diagonal relation} $\Delta_M$.
Every such factor belongs to $R_{\mgrel}$ because $g_j\in\Gamma=\Delta_M\cup R_{\mgrel}$.
We have $0\le t\le r<m^2$.

If $t=0$, all the factors belong to $\Delta_M$, and their product also belongs to $\Delta_M$.
Thus $p=q$; the case $r=0$ is included here.
Taking $a_0\coloneq p\in M$ and $t=0$ in \cref{def:matched-guard-decomposition} gives $(p,q)=(a_0,a_0)\in R_{\cmgrel,m^2}$.

Suppose now that $t\ge1$.
Write the factors that are not diagonal, in the order in which they occur, as $(p_1,q_1),\ldots,(p_t,q_t)\in R_{\mgrel}$, with $p_i,q_i\in M$ for $i\in[t]$.
We define elements $a_0,\ldots,a_t\in M$ from the diagonal factors.
The element $a_0$ is the product in $M$ of the entries of the diagonal factors before $(p_1,q_1)$.
For $1\le i<t$, the element $a_i$ is the product of the entries of the diagonal factors between $(p_i,q_i)$ and $(p_{i+1},q_{i+1})$.
The element $a_t$ is the product of the entries of the diagonal factors after $(p_t,q_t)$.
These products are taken in $M$, in the original order, so an empty product is $1_M$; the corresponding empty diagonal product in $M\times M$ is $(1_M,1_M)$.
A diagonal factor has the form $(a,a)$ with $a\in M$ and contributes the same element to both coordinates.
Therefore, multiplication in $M\times M$ gives $p=a_0p_1a_1\cdots p_ta_t$ and $q=a_0q_1a_1\cdots q_ta_t$.
We keep the elements $a_0,\ldots,a_t$ as the contexts in \cref{def:matched-guard-decomposition}.

We now show how to choose preimages of the guard and middle elements for each pair $(p_i,q_i)$.
Fix $i\in[t]$.
Since $(p_i,q_i)\in R_{\mgrel}$, \cref{def:matched-guard-tolerance} gives an idempotent $e_i\in\Idem(S_\eta)$ and elements $x_i',y_i'\in M$ such that $p_i=e_ix_i'e_i$, $q_i=e_iy_i'e_i$, $e_i\taulen x_i'$, and $e_i\taulen y_i'$.
Here $e_i$ is the idempotent guard, and $x_i',y_i'$ are the original middle elements.
By \cref{def:equal-length-tolerance}, the relation $e_i\taulen x_i'$ gives a positive integer $r_i\in\N_+$ and words $U_i,V_i\in\Sigma^{r_i}$.
These words satisfy $\eta(U_i)=e_i$ and $\eta(V_i)=x_i'$.
Similarly, $e_i\taulen y_i'$ gives a positive integer $s_i\in\N_+$ and words $U_i',W_i\in\Sigma^{s_i}$.
These words satisfy $\eta(U_i')=e_i$ and $\eta(W_i)=y_i'$.
Thus $U_i$ and $U_i'$ are preimages of the same guard $e_i$, while $V_i$ and
$W_i$ are preimages of the two original middle elements.
The problem is that the lengths $r_i$ and $s_i$ are generally not equal, so we
must now explain how to build preimages of one common positive length to satisfy
the common-length requirements of $R_{\cmgrel,m^2}$.

Thus, let us build words of one common positive length that are preimages of the
guard and of two new middle elements.
Set $k_i\coloneq r_i+s_i\in\N_+$, the required common length.
Define $E_i\coloneq U_iU_i'$, a preimage for the guard; $X_i\coloneq V_iU_i'$, a
preimage for the first new middle element; and $Y_i\coloneq U_iW_i$, a preimage for the second new middle element.
Since $U_i,V_i$ have length $r_i$ and $U_i',W_i$ have length $s_i$, the words $E_i,X_i,Y_i$ all belong to $\Sigma^{k_i}$.
By the morphism property of $\eta$ and $e_i^2=e_i$, we have $\eta(E_i)=\eta(U_i)\eta(U_i')=e_i^2=e_i$, $\eta(X_i)=x_i'e_i$, and $\eta(Y_i)=e_iy_i'$.
Define the new middle elements $x_i\coloneq\eta(X_i)\in M$ and $y_i\coloneq\eta(Y_i)\in M$.
It follows from these three words that $e_i,x_i,y_i\in\eta(\Sigma^{k_i})$.
This is the common-length condition in \cref{def:matched-guard-decomposition}
for the $i$-th pair, so the common-length condition holds for every $i\in[t]$.

We now check that the new middle elements $x_i,y_i$ give the original pair $(p_i,q_i)$.
Since $x_i=x_i'e_i$, $y_i=e_iy_i'$, and $e_i^2=e_i$, we have $e_ix_ie_i=e_i(x_i'e_i)e_i=e_ix_i'e_i=p_i$ and $e_iy_ie_i=e_i(e_iy_i')e_i=e_iy_i'e_i=q_i$.
Thus $(e_ix_ie_i,e_iy_ie_i)=(p_i,q_i)$.
This equality holds for every $i\in[t]$.

We finally prove that $(p,q)\in R_{\cmgrel,m^2}$, completing the second inclusion.
For $t=0$, this was proved above with $a_0=p$.
For $t\ge1$, we have $0<t\le r<m^2$ and contexts $a_0,\ldots,a_t\in M$.
For every $i\in[t]$, we also have a positive integer $k_i\in\N_+$, an idempotent $e_i\in\Idem(S_\eta)$, and middle elements $x_i,y_i\in M$.
They satisfy $e_i,x_i,y_i\in\eta(\Sigma^{k_i})$.
The equalities for $(p_i,q_i)$ and the products with the contexts give $p=a_0(e_1x_1e_1)a_1\cdots(e_tx_te_t)a_t$ and $q=a_0(e_1y_1e_1)a_1\cdots(e_ty_te_t)a_t$.
These are exactly the conditions of \cref{def:matched-guard-decomposition} with $s=m^2$.
Consequently, $(p,q)\in R_{\cmgrel,m^2}$.
Since $(p,q)\in{\taumg}$ was arbitrary, ${\taumg}\subseteq R_{\cmgrel,m^2}$.
Together with the first inclusion, this proves ${\taumg}=R_{\cmgrel,m^2}$.
\end{proof}

\subsection{Linear Lower Bound}
\label{sec:linlb}
\begin{toappendix}
\subsection{Linear Lower Bound}
\end{toappendix}

We now turn to the linear lower bound of \cref{thm:target-trichotomy},
showing that a \kl[relation shatters]{shattering} \kl{matched-guard tolerance}
forces at least linear \kl{query complexity}:

\begin{propositionrep}[Linear lower bound]
\label[proposition]{prop:lang-linear-lb}
If the \kl{matched-guard tolerance} $\taumg$ is \kl[relation shatters]{shattering}, then $Q_{P,T}(n)$ is in~$\Omega(n)$.
\end{propositionrep}

\begin{toappendix}
  We first show the auxiliary results given in the main text:
\end{toappendix}

We prove this in the rest of the section. We first illustrate our
criterion on $(P_{\lin},T_{\lin})$ (\cref{tab:examples}):

\begin{example}
\label[example]{exa:linear}
Let $\Sigma=\{a,c,d\}$, $P_{\lin}\coloneq a^*(c\mid d)a^*$,
and $T_{\lin}\coloneq a^*ca^*$ (as in \cref{tab:examples}).
Let $(M,\eta,M_P,M_T)$ be any \kl{mutual presentation} of this pair.
Take the integer $\omega\coloneq |M|!\in\N_+$ as an \kl{idempotent power} of $M$.
Put $e\coloneq\eta(a^\omega)$, $x\coloneq\eta(a^{\omega-1}c)$,
and $y\coloneq\eta(a^{\omega-1}d)$, all in $M$.
Then $e\in\Idem(S_\eta)$, and $e\taulen x$ and $e\taulen y$.
The witnesses are respectively the word pairs
$(a^\omega,a^{\omega-1}c)$ and $(a^\omega,a^{\omega-1}d)$,
each consisting of words of length $\omega$.

Thus $(exe,eye)\in R_{\mgrel}$ is a \kl{shattered pair},
as witnessed by $a^{2\omega-1}ca^\omega\in T_{\lin}$ and
$a^{2\omega-1}da^\omega\in P_{\lin}\setminus T_{\lin}$.
This shows that the \kl{matched-guard tolerance} is \kl[relation shatters]{shattering}.
\end{example}

Let us now show the linear lower bound for $P_{\lin}$, $T_{\lin}$:

\begin{lemmarep}[Linear lower bound]
\label[lemma]{lem:linear-block-words}
The pair $(P_{\lin},T_{\lin})$ of \cref{exa:linear} has
$Q_{P_{\lin},T_{\lin}}(n)$ in $\Omega(n)$.
\end{lemmarep}

In fact the \kl{query complexity} of
$(P_{\lin},T_{\lin})$ is precisely $\Theta(n)$, since querying every position
gives a linear upper bound. For now, let us sketch the lower bound:

\begin{proofsketch}
Fix $n\in\N_+$ and answer every query with $a$. Before all $n$ positions are
queried, some $i\in[n]$ remains unknown, so both $a^{i-1}ca^{n-i}\in T_{\lin}$
and $a^{i-1}da^{n-i}\in P_{\lin}\setminus T_{\lin}$ complete the
\kl{transcript}. A \kl{correct} tree therefore cannot decide before \kl{depth} $n$.
\end{proofsketch}

\begin{proof}
Fix a word length $n\in\N_+$, and let $\calT$ be a \kl{correct} \kl{decision tree} for $(P_{\lin},T_{\lin})$ from \cref{exa:linear} at length $n$. We show that $\calT$ has depth at least $n$.

For every position $i\in[n]$, consider the word of $T_{\lin}$ having $c$ at position $i$ and $a$ everywhere else, and the word of $P_{\lin}\setminus T_{\lin}$ having $d$ at position $i$ and $a$ everywhere else. More precisely, define $u_i\coloneq a^{i-1}ca^{n-i}$ and $v_i\coloneq a^{i-1}da^{n-i}$. Then $u_i\in T_{\lin}$ and $v_i\in P_{\lin}\setminus T_{\lin}$.

We follow a path in $\calT$ by answering every query with the letter $a$. Suppose that after fewer than $n$ queries the tree reaches a leaf. Since fewer than $n$ positions have been queried, there is some position $i\in[n]$ which has not been queried.

Both $u_i$ and $v_i$ are then \kl[completion]{completions} of the \kl{transcript} of this leaf. Indeed, every queried position is different from $i$, and both words contain the letter $a$ at every such position. However, $u_i\in T_{\lin}$ whereas $v_i\in P_{\lin}\setminus T_{\lin}$. Hence the \kl{transcript} of this leaf is \kl[partial word shatters]{shattering}.

By \cref{lem:insensitive-leaf}, the \kl{transcript} of a leaf of a \kl{correct} \kl{decision tree} cannot be \kl[partial word shatters]{shattering}.
Therefore the path obtained by answering every query with $a$ cannot reach a leaf before all $n$ positions have been queried.

Consequently, $\calT$ has depth at least $n$. Since $\calT$ was an arbitrary \kl{correct} \kl{decision tree} at length $n$, every \kl{correct} \kl{decision tree} for $(P_{\lin},T_{\lin})$ at length $n$ has depth at least $n$. Therefore $Q_{P_{\lin},T_{\lin}}(n)\geq n$, and in particular $Q_{P_{\lin},T_{\lin}}(n)$ is in~$\Omega(n)$.
\end{proof}

To prove \cref{prop:lang-linear-lb}, as in the previous section,
we will reduce from \cref{lem:linear-block-words}. We will use for
this the rephrasing of the matched-guard tolerance as the contextual
matched-guard relation, along with
the closure lemmas of \cref{sec:constant}
(inclusion, quotient, and inverse length-multiplying morphism).
However,
because the contextual matched-guard relation may perform multiple substitutions
at once (up to $m^2$), we will additionally need a form of closure under
\emph{concatenation}:

\begin{lemmarep}[Concatenation]
\label[lemma]{lem:query-concatenation}
Let $s\in\N_+$, and for each $i\in[s]$ let
$T_i\subseteq P_i\subseteq\Sigma^*$. Define
$T=T_1\cdots T_s$ and
$P=T\cup(P_1\setminus T_1)\cdots(P_s\setminus T_s)$.
Assume that $T\cap(P_1\setminus T_1)\cdots(P_s\setminus T_s)=\varnothing$.
Then, for every $n\in\N$, we have
$Q_{P,T}(sn)\ge\min_{i\in[s]}Q_{P_i,T_i}(n)$.
\end{lemmarep}

\begin{toappendix}
  For this, we first show the auxiliary result:
\end{toappendix}

The disjointness assumption in the lemma ensures that concatenating
target words cannot also give a concatenation of non-target words.
The proof of \cref{lem:query-concatenation} is more complex than that of the
analogous results in \cref{sec:constant}. It uses the notion of the \emph{cost}
of a partial word to keep track of how many queries remain to be done:

\begin{definition}[Cost of a partial word]
\label[definition]{def:partial-word-cost}
Let $T\subseteq P\subseteq\Sigma^*$, let $n\in\N$, and let
$w\in(\Sigma\cup\{\blank\})^n$ be a \kl{partial word}.
Write $L(w)\coloneq\{u\in\Sigma^n\mid u\text{ is a \kl{completion} of }w\}$.
The \intro{cost} of $w$ relative to $(P,T)$ is
$\operatorname{cost}_{P,T}(w)\coloneq Q_{P\cap L(w),\,T\cap L(w)}(n)$.
\end{definition}
Thus $\operatorname{cost}_{P,T}(w)>0$ exactly when $w$
\kl[partial word shatters]{shatters} $(P,T)$, by
\cref{lem:insensitive-leaf}. Further, for the all-$\blank$ partial words, we have
$Q_{P,T}(n)=\max_{0\le\ell\le n}
\operatorname{cost}_{P,T}(\blank^\ell)$.
The next lemma shows that an answer to a query can always be chosen
to decrease the cost by at most one.

\begin{lemmarep}[Cost preservation]
\label[lemma]{lem:cost-preservation}
Let $T\subseteq P\subseteq\Sigma^*$, let $w$ be a \kl{partial word}
of length $n\in\N$, and let $j\in[n]$ such that $w_j=\blank$.
There is a letter $a\in\Sigma$ such that the \kl{partial word} $w'_a$
obtained by placing $a$ at position $j$ satisfies
$\operatorname{cost}_{P,T}(w'_a)\ge\operatorname{cost}_{P,T}(w)-1$.
\end{lemmarep}

\begin{proofsketch}
  If the cost of~$w_a'$ were smaller than the bound across every $a\in \Sigma$,
  as witnessed by decision trees
  $\calT_a$, then a decision tree for~$w$ with root labeled $j$ and child subtrees
  $\calT_a$ would have height less than the cost of~$w$, a contradiction.
\end{proofsketch}

\begin{proof}
Let $k\coloneq\operatorname{cost}_{P,T}(w)\in\N$. For every $a\in\Sigma$, let
$w_a'$ be the \kl{partial word} obtained by placing $a$ at
position $j$, and let $d_a\coloneq\operatorname{cost}_{P,T}(w_a')\in\N$.
Choose a \kl{correct} \kl{decision tree} $\calT_a$ of minimum
\kl{depth} $d_a$ for the promised \kl[completion]{completions} of
$w_a'$, i.e., for the language pair
$P\cap L(w_a')$ and $T\cap L(w_a')$.
If $\calT_a$ queries a position whose letter is specified
by $w_a'$, we can replace this query by the subtree reached after
that letter. Doing so for every such query preserves \kl[correct]{correctness}
on the promised completions and does not increase the depth.
In particular, the resulting tree never queries position $j$.

We construct a \kl{decision tree} which first queries $j$ and,
for every possible answer $a\in\Sigma$, follows the corresponding
resulting tree. No position is queried twice on a branch, and every
promised completion of $w$ for the answer $a \in \Sigma$ follows the tree
$\calT_a$.
Thus the constructed tree is \kl{correct} for
$(P\cap L(w),T\cap L(w))$ at length $n$, and has \kl{depth} at
most $1+\max_{a\in\Sigma}d_a$. By the definition of \kl{cost},
$k\le1+\max_{a\in\Sigma}d_a$. Hence there is some $a\in\Sigma$
with $d_a\ge k-1$, and choosing $w'=w_a'$ proves the claim.
\end{proof}

In particular, when starting from a \kl{partial word} of cost $k\in\N$,
we can answer any $q\in\N$ successive queries so that the cost
remains at least $k-q$. When $q<k$, the resulting \kl{partial word}
is still \kl[partial word shatters]{shattering}. This is used to prove
\cref{lem:query-concatenation} in the appendix, intuitively by choosing a branch where we keep
track of the cost of the different parts of the concatenation.
\begin{toappendix}
  We can now conclude the proof of \cref{lem:query-concatenation}:
\end{toappendix}

\begin{toappendix}
\begin{proof}[Proof of \cref{lem:query-concatenation}]
  We will in fact rephrase the desired result to the following stronger claim
  using the notion of cost (*):
  for any partial words $w_1, \ldots, w_s$
  letting
$d\coloneq\min_{i\in[s]}\operatorname{cost}_{P_i,T_i}(w_i)$, we have
$d \leq \operatorname{cost}_{P,T}(w_1 \cdots w_s)$.
If $d=0$, the cost inequality is immediate. Otherwise let $\calT$
be a correct decision tree for
$(P\cap L(w_1\cdots w_s),T\cap L(w_1\cdots w_s))$
at length $|w_1\cdots w_s|$, and
we show that $\calT$ has depth at least $d$ to establish (*).

We will now inductively define a specific \kl{branch} $\beta = d^{(0)}, \ldots$ of the tree $\calT$ from its root to a leaf. In this paragraph, we describe exactly how to choose this branch. In the next paragraph, we show that it has length at least $d$, witnessing the desired lower bound on the \kl{depth} of~$\calT$.
For this, we will examine the transcripts of nodes along this branch, towards the goal of
applying \cref{lem:cost-preservation} on them.
We initialize $\beta$ with $d^{(0)}$ being the root of the tree, and will now
use the transcripts to explain how we choose the node $d^{(q+1)}$ from the
preceding node $d^{(q)}$ for each $q \geq 0$ until the branch has reached a leaf.
If $d^{(q)}$ is a leaf, the definition of~$\beta$ is finished.
Otherwise, considering the internal node $d^{(q)}$,
let $w_i^{(q)}$ for each $i \in [s]$ be the partial word obtained as the restriction of the
transcript of $d^{(q)}$ to the factor
corresponding to $w_i$ together with the letters already specified in $w_i$: 
we initially have $w_i^{(0)}=w_i$.
Now, there are two cases. First, 
if $d^{(q)}$ queries a position which is already known in the initial partial
word $w = w_1 \cdots w_s$, then the only possible answer is the one
corresponding to~$w$ and we extend the
definition of~$\beta$ by setting $d^{(q+1)}$ to be the child of~$d^{(q)}$ that corresponds to the
only possible answer: we then have 
$w_i^{(q+1)} = w_i^{(q)}$. 
Second, if $d^{(q)}$ queries a position which carries a~$\blank$ in the initial partial
word $w = w_1 \cdots w_s$, it falls into a range corresponding to one factor,
say $w_i$ for $i \in [s]$. We will then have
$w_j^{(q+1)} = w_j^{(q)}$ for each $j \in [s]$ with $j \neq i$, and 
$w_i^{(q+1)}$ will be obtained from~$w_i^{(q)}$ by revealing one additional letter
(i.e., replacing one $\blank$) at the position that $d^{(q)}$ queries.
We are now in a position to apply \cref{lem:cost-preservation}
for the pair $(P_i,T_i)$, the partial word $w_i^{(q)}$, and the queried position
within the part of the unknown word corresponding to~$i$.
This gives us a letter $a\ \in \Sigma$, and we choose the child of $d^{(q)}$ labeled
$a$ to be $d^{(q+1)}$, extending the definition of the branch~$\beta$.

We now show that the branch $\beta$ thus defined cannot finish before depth $d$,
establishing the lower bound on the depth of~$\calT$.
Suppose by way of contradiction that $\beta$
reaches a leaf $d^{(q)}$ after $q<d$ queries. The inequality 
of \cref{lem:cost-preservation} shows that each $w_i^{(q)}$ has positive cost,
because we had 
$\operatorname{cost}_{P_i,T_i}(w_i^{(0)}) \geq d$ by our choice of~$d$ and
because $w_i^{(0)} = w_i$, and for each of the $q$ queries along~$\beta$ we
either had $w_i^{(q+1)} = w_i^{(q)}$ or 
$w_i^{(q+1)}$ obtained from~$w_i^{(q)}$ by choosing a letter achieving the
inequality of \cref{lem:cost-preservation}, i.e., decrementing the cost by at
most~$1$, so we conclude by $q<d$.
As each $w_i^{(q)}$ has positive cost, we know that it shatters $(P_i,T_i)$.
Thus, choose completions $u_i\in T_i$ and $v_i\in P_i\setminus T_i$
of $w_i^{(q)}$ for every $i\in[s]$.
Then $u_1\cdots u_s\in T$ and
$v_1\cdots v_s\in(P_1\setminus T_1)\cdots(P_s\setminus T_s)$.
We assumed that these two sets are disjoint, so $v_1 \cdots v_s$ is in $P\setminus T$.
Both words complete $w_1\cdots w_s$ and reach leaf $d^{(q)}$,
so this witnesses that the transcript of $d^{(q)}$ is shattering,
contradicting the correctness of the tree by
\cref{lem:insensitive-leaf}. Thus the depth of $\calT$ is at least
$d$, proving
$\operatorname{cost}_{P,T}(w_1\cdots w_s)\ge d$.

Having shown (*), we explain why this implies the result claimed in the
statement.
Fix $n\in\N$.
We have 
$Q_{P_i,T_i}(n)=\max_{0\le r\le n}
\operatorname{cost}_{P_i,T_i}(\blank^r)$, so choose
$n_i\le n$ for each $i\in[s]$ as a value of~$r$ that attains the maximum.
Set $\ell\coloneq\sum_{i=1}^s n_i\le sn$.
We have $Q_{P,T}(sn)=\max_{0\le r\le sn}
\operatorname{cost}_{P,T}(\blank^r)$, so in particular
$Q_{P,T}(sn)\ge\operatorname{cost}_{P,T}(\blank^\ell)$, we have
$\operatorname{cost}_{P,T}(\blank^\ell)\ge
\min_{i\in[s]}\operatorname{cost}_{P_i,T_i}(\blank^{n_i})$ by our stronger claim
(*), and we have 
$Q_{P_i,T_i}(n_i)=
\operatorname{cost}_{P_i,T_i}(\blank^{n_i})$
for each $i \in [s]$ by our choice
of~$n_i$.
Combining the three inequalities proves the claim.
\end{proof}
\end{toappendix}

We now apply the inclusion, quotient, and inverse-morphism reductions
of \cref{sec:constant} to a shattering guarded pair in the form of 
\cref{lem:matched-guard-closure}. Together with the concatenation lemma
(\cref{lem:query-concatenation}), this allows us to deduce
\cref{prop:lang-linear-lb} from the lower bound on $P_{\lin},T_{\lin}$
(\cref{lem:linear-block-words}):

\begin{proofsketch}
Since $\taumg$ is \kl[relation shatters]{shattering}, the equality
${\taumg}=R_{\cmgrel,m^2}$ from \cref{lem:matched-guard-closure}
gives a \kl{shattered pair} with at most $m^2$ substitutions.
Let $t\in\N_+$ be the number of non-identity substitutions.
For each $i\in[t]$, choose equal-length preimages
$E_i,X_i,Y_i\in\Sigma^{k_i}$, for some $k_i\in\N_+$,
for its guard and middle elements.
Choose fixed words $b_0,\ldots,b_t\in\Sigma^*$ for the contexts.
By idempotence, the languages
$T_i\coloneq E_i^+X_iE_i^+b_i$ and
$P_i\coloneq T_i\cup E_i^+Y_iE_i^+b_i$, for $i\in[t]$,
give languages $T'\coloneq b_0T_1\cdots T_t\subseteq T$ and
$P'\coloneq T'\cup b_0(P_1\setminus T_1)\cdots(P_t\setminus T_t)$
with $P'\setminus T'\subseteq P\setminus T$.

By the inclusion lemma (\cref{lem:query-inclusion}), $(P,T)$ is at least as hard as $(P',T')$.
By the quotient lemma (\cref{lem:query-fixed-concatenation}) we can remove the fixed prefix $b_0$.
We then use the concatenation lemma (\cref{lem:query-concatenation}) and the inverse length-multiplying
morphism lemma (\cref{lem:query-inverse-morphism})
to argue that it suffices to show lower bounds on the pairs
$(P_i,T_i)$. 
Each of them is handled by reducing from the fixed language pair $P_{\lin},T_{\lin}$
(using \cref{lem:linear-block-words})
via the inclusion and quotient lemmas.
\end{proofsketch}

\begin{toappendix}
\begin{proof}[Proof of \cref{prop:lang-linear-lb}]
Since $\taumg$ is \kl[relation shatters]{shattering},
choose a \kl{shattered pair} $(p,q)\in{\taumg}$ with
$p\in M_T$ and $q\in M_P\setminus M_T$. In particular, $p\ne q$.
By \cref{lem:matched-guard-closure}, we have
$(p,q)\in R_{\cmgrel,m^2}$.
Thus, by \cref{def:matched-guard-decomposition}, there exist
$t\in\N$ with $t\le m^2$, elements $a_0,\ldots,a_t\in M$,
and, for every $i\in[t]$, an idempotent
$e_i\in\Idem(S_\eta)$, elements $x_i,y_i\in M$, and
$k_i\in\N_+$ such that
$p=a_0(e_1x_1e_1)a_1\cdots(e_tx_te_t)a_t$ and
$q=a_0(e_1y_1e_1)a_1\cdots(e_ty_te_t)a_t$, with
$e_i,x_i,y_i\in\eta(\Sigma^{k_i})$ for every $i\in[t]$.

We may assume $(\ast)$: $e_ix_ie_i\ne e_iy_ie_i$ for all $i\in[t]$.
Indeed, if $e_ix_ie_i=e_iy_ie_i$, replace the consecutive
factors $a_{i-1}(e_ix_ie_i)a_i$ and
$a_{i-1}(e_iy_ie_i)a_i$ by their common product in both
coordinates and omit the $i$-th substitution.
After reindexing, the expressions for $p,q$, the bound $t\le m^2$,
and the common-length conditions of the remaining substitutions
are preserved. This proves $(\ast)$. Further, at least one substitution
remains, because if $t=0$ then the defining expressions give
$p=q=a_0$ but we have $p\ne q$. Thus, we must have $t\ge1$.

For every $i\in[t]$, choose
$E_i,X_i,Y_i\in\Sigma^{k_i}$ with
$\eta(E_i)=e_i$, $\eta(X_i)=x_i$, and $\eta(Y_i)=y_i$.
For every $i\in\{0,\ldots,t\}$, choose $b_i\in\Sigma^*$
with $\eta(b_i)=a_i$; these words exist by surjectivity of $\eta$.

For $i\in[t]$, define
$T_i\coloneq E_i^+X_iE_i^+b_i$ and
$P_i\coloneq T_i\cup E_i^+Y_iE_i^+b_i$.
For all $r,s\in\N_+$, the idempotence of $e_i$ gives
$\eta(E_i^rX_iE_i^s)=e_ix_ie_i$ and
$\eta(E_i^rY_iE_i^s)=e_iy_ie_i$.
These two elements of $M$ are different, so
$E_i^+X_iE_i^+$ and $E_i^+Y_iE_i^+$ are disjoint.
Appending the same fixed word $b_i$ preserves disjointness.
Consequently $P_i\setminus T_i=E_i^+Y_iE_i^+b_i$.

Define $T'\coloneq b_0T_1\cdots T_t$ and
$P'\coloneq T'\cup b_0(P_1\setminus T_1)\cdots(P_t\setminus T_t)$.
By $\eta(E_i^rX_iE_i^s)=e_ix_ie_i$ and
$\eta(E_i^rY_iE_i^s)=e_iy_ie_i$ for all $r,s\in\N_+$,
and by $\eta(b_i)=a_i$,
every word in $T'$ has image $p$, whereas every word in
$b_0(P_1\setminus T_1)\cdots(P_t\setminus T_t)$ has image $q$.
Since $p\in M_T$ and $q\in M_P\setminus M_T$,
these two languages are disjoint, further we have $T'\subseteq T$ and
$P'\setminus T'\subseteq P\setminus T$.
By the inclusion lemma (\cref{lem:query-inclusion}), we deduce that
$Q_{P,T}(N)\ge Q_{P',T'}(N)$ for every $N\in\N$, so it suffices to show the lower bound for $P',T'$.

Now, removing the fixed prefix $b_0$, define
$T''\coloneq b_0^{-1}T'=T_1\cdots T_t$ and
$P''\coloneq b_0^{-1}P'=T''\cup
(P_1\setminus T_1)\cdots(P_t\setminus T_t)$.
Since $T'=b_0T''$ and $P'=b_0P''$, the quotient lemma
(\cref{lem:query-fixed-concatenation}) gives
$Q_{P',T'}(N+|b_0|)\ge Q_{P'',T''}(N)$ for every $N\in\N$.
So it suffices to show the lower bound for $P'',T''$.

Fix an integer $k\ge3$ and, for each $i\in[t]$, set
$n_i\coloneq k_i k+|b_i|\in\N$.
We show that the query complexity for  $(P_i,T_i)$
at length $n_i$ is at least $k-2$.
Fix $i\in[t]$. Since $E_i,X_i,Y_i$ all have length $k_i$,
removing the fixed first block $E_i$ and the fixed suffix $E_i b_i$
gives
$T_i'\coloneq E_i^{-1}T_i(E_i b_i)^{-1}=E_i^*X_iE_i^*$ and
$P_i'\coloneq E_i^{-1}P_i(E_i b_i)^{-1}=T_i'\cup E_i^*Y_iE_i^*$.
Indeed, after removing $b_i$, every word in $P_i$ is a
concatenation of blocks of length $k_i$ beginning and ending
with $E_i$. Deleting these two blocks leaves precisely a word
of $E_i^*X_iE_i^*$ or of $E_i^*Y_iE_i^*$, respectively,
and every word in either language arises in this way.
In particular, $T_i=E_iT_i'E_i b_i$ and
$P_i=E_iP_i'E_i b_i$.
The two languages $E_i^*X_iE_i^*$ and $E_i^*Y_iE_i^*$
are also disjoint. Otherwise, for a common word $z$, the word
$E_i z E_i$ would belong to both $E_i^+X_iE_i^+$ and
$E_i^+Y_iE_i^+$, contradicting their disjointness.

Define the morphism $h_i\colon\{a,c,d\}^*\to\Sigma^*$ by
$h_i(a)\coloneq E_i$, $h_i(c)\coloneq X_i$, and $h_i(d)\coloneq Y_i$.
All three images have length $k_i\in\N_+$.
By the definitions of $(P_{\lin},T_{\lin})$ in
\cref{lem:linear-block-words}, we have the following inclusions (**):
$h_i(T_{\lin})\subseteq T_i'$ and
$h_i(P_{\lin}\setminus T_{\lin})\subseteq
E_i^*Y_iE_i^*\subseteq P_i'\setminus T_i'$.
Thus the map $w\mapsto E_ih_i(w)E_i b_i$ sends
$T_{\lin}$ into $T_i$ and
$P_{\lin}\setminus T_{\lin}$ into $P_i\setminus T_i$.

We now transfer the lower bound of \cref{lem:linear-block-words} through the
three reduction lemmas on each factor of the concatenation.
We know by \cref{lem:linear-block-words} that 
$k-2 \le Q_{P_{\lin},T_{\lin}}(k-2)$. 
The two inclusions for $h_i$ given in (**) imply, by the inclusion lemma
(\cref{lem:query-inclusion}),
that
$Q_{P_{\lin},T_{\lin}}(k-2)\le
Q_{h_i^{-1}(P_i'),\,h_i^{-1}(T_i')}(k-2)$.
By the inverse lm-morphism lemma (\cref{lem:query-inverse-morphism}),
this last quantity is at most
$Q_{P_i',T_i'}(k_i(k-2))$. We can now apply the quotient lemma
(\cref{lem:query-fixed-concatenation}) with prefix $E_i$ and suffix
$E_i b_i$ to show an upper bound on the last quantity given by
$Q_{P_i,T_i}(k_i(k-2)+2k_i+|b_i|)=Q_{P_i,T_i}(n_i)$.
Chaining all inequalities together, we get
$Q_{P_i,T_i}(n_i)\ge k-2$ for every $i\in[t]$.

We now combine these bounds using the concatenation lemma (\cref{lem:query-concatenation}).
Put $n_*\coloneq\max_{i\in[t]}n_i$.
We verify the disjointness hypothesis of
\cref{lem:query-concatenation}. If a word $z$ belonged to both
$T''$ and $(P_1\setminus T_1)\cdots(P_t\setminus T_t)$,
then $b_0z$ would have both images $p$ and $q$ under $\eta$,
contradicting $p\ne q$. Hence
$Q_{P'',T''}(t n_*)\ge\min_{i\in[t]}Q_{P_i,T_i}(n_*)\ge k-2$,
where the second inequality uses $n_*\ge n_i$ and monotonicity.

Finally, we transfer the bound to $(P,T)$.
The quotient lemma
(\cref{lem:query-fixed-concatenation})
and inclusion lemma (\cref{lem:query-inclusion}) give
$Q_{P,T}(|b_0|+t n_*)\ge k-2$.
Put $C\coloneq\max_{i\in[t]}k_i\in\N_+$ and
$D\coloneq\max_{i\in[t]}|b_i|\in\N$.
Since $n_i=k_i k+|b_i|\le Ck+D$ for each $i$,
we have $Q_{P,T}(|b_0|+t(Ck+D))\ge k-2$
for every integer $k\ge3$.
As the lower bound holds for infinitely many values,
by monotonicity of $Q_{P,T}$, we finally have
that $Q_{P,T}(n)$ is in~$\Omega(n)$.
\end{proof}
\end{toappendix}

\section{Effective Complexity Classification}\label{sec:effective}

We have concluded the proof of \cref{thm:target-trichotomy} which classifies
possible query complexity regimes.
In this section, we study how we can decide this regime when given finite
descriptions of $P$ and $T$ as \intro[nondeterministic finite
automaton]{nondeterministic finite automata}
(NFAs), for which we use standard definitions~\cite[Chapter~2]{HopcroftMotwaniUllman2001}.

\begin{theoremrep}[Effective classification]
\label[theorem]{thm:effective-classification}
Let $T\subseteq P\subseteq\Sigma^*$ be \kl[regular language]{regular languages}.
Given \kl[nondeterministic finite automaton]{NFAs} $A_P$ and $A_T$ recognizing
$P$ and $T$, respectively, it is PSPACE-complete to decide which of the three regimes of \cref{thm:target-trichotomy} applies to $Q_{P,T}$.
\end{theoremrep}

The PSPACE upper bound therefore also applies when $P$ and $T$ are described by
regular expressions, because they can be 
translated to NFAs in polynomial time~\cite[Section~3.2.3]{HopcroftMotwaniUllman2001}.
The proof uses standard techniques for implicit representations of the transition monoid of input automata:

\begin{proofsketch}
  The challenge to prove the upper bound is that we cannot explicitly build a
  \kl{mutual presentation} for $P$ and~$T$, because its size can be exponential
  in the NFA. Instead, we recall classical constructions on NFAs to show how we
  can implicitly represent the transition monoid of an NFA, and thus a mutual
  presentation for our two NFAs. This relies on the fact that elements of the
  transition monoid can be represented as polynomial-sized matrices, and that
  while there may be exponentially many possible products we have  
  bounds on the maximal useful length of a product (via
  \cref{lem:bounded-preimage,lem:equal-length-tolerance}) which ensure that
  polynomially many bits for the counter suffice. Having an implicit
  representation of the mutual presentation is then enough to test our
  criteria through standard techniques, in particular using nondeterminism thanks to Savitch's theorem~\cite{Savitch1970}.
  The lower bound reduces from NFA universality using standard techniques
  similar to Greibach's theorem~\cite{greibach1968note}.
\end{proofsketch}

\begin{proof}[Proof of \cref{thm:effective-classification}]
  We first show the PSPACE upper bound, before showing PSPACE-hardness at the
  end of the section.

Let $A_P$ and $A_T$ be the input NFAs which accept $P$ and $T$ respectively.
Write $Q_P,Q_T$ for their respective state sets, 
of respective cardinalities $n_P,n_T\in\N$. 
We use the standard notion of \intro[Boolean matrix]{Boolean
matrices}~\cite[Chapter~IV, Section~3.2]{Pin2025} to represent elements of the
transition monoid of~$A_P$ and~$A_T$.
For an NFA with state set $Q$, the \kl[Boolean matrix]{Boolean matrix} $A_w\in\{0,1\}^{Q\times Q}$ of $w\in\Sigma^*$ has $(A_w)_{p,q}=1$ exactly when a path from $p\in Q$ to $q\in Q$ is labeled $w$. For two such matrices $A,B$, define their product by $(AB)_{p,q}=\bigvee_{r\in Q}(A_{p,r}\wedge B_{r,q})$. Thus $A_{uv}=A_uA_v$ and $A_\varepsilon$ is the identity matrix. The map $w\mapsto A_w$ is the \intro{transition-relation morphism}.

For each word $w\in\Sigma^*$, we write
$A_w^P$ and $A_w^T$ for the \kl[Boolean matrix]{Boolean matrices} corresponding
to~$w$ via the \kl{transition-relation morphism}, and we
define $\eta(w)\coloneq(A_w^P,A_w^T)$ and $M\coloneq\eta(\Sigma^*)$, with componentwise multiplication. 
Define $M_P\subseteq M$ as the set of matrix pairs whose first matrix has an entry $1$ from an initial state to a final state of the automaton for $P$. Define $M_T\subseteq M$ analogously using the second matrix and the automaton for $T$. 
It is then easy to notice that we have
$P=\eta^{-1}(M_P)$ and $T=\eta^{-1}(M_T)$. As $T\subseteq P$ and $\eta$ is surjective onto $M$, we have $M_T\subseteq M_P$. Hence $(M,\eta,M_P,M_T)$ is a \kl{mutual presentation}.
Put $s\coloneq n_P^2+n_T^2\in\N$ and $B\coloneq 2^s\in\N_+$. Each element of $M$
is represented by $s$ bits, and we have $|M|\le B$. 
Note that we do not construct the mutual presentation explicitly.
Each individual element of~$M$ can be represented in polynomial space (as a pair
of transition matrices), the
transition matrix pairs for the letters of the alphabet are given in the input, and
given two pairs we can test them for equality, test them for membership
in~$M_T$ and~$M_P$, and multiply them in polynomial
space by multiplying their two constituent matrices. Idempotence can also be tested in polynomial space.

By Savitch's theorem~\cite{Savitch1970}, NPSPACE equals PSPACE. It therefore suffices to give nondeterministic polynomial-space algorithms.
We first explain how to test membership in $M$, in~$S_\eta$, and in~$\taulen$.
Given a pair of matrices $z$, we can test whether it represents an element
of~$M$ and whether it belongs to~$S_\eta$ in the
following way.
We start with the identity pair $\eta(\varepsilon)$, and we guess a word letter by letter, updating its image by multiplying by $\eta(a)$ for each guessed letter $a\in\Sigma$.
We accept if the final image equals $z$, and maintain a polynomially-sized
counter and reject if the sequence of guesses exceeds $B$.
(Note that we do not store the guessed sequence explicitly, because it may have
exponential length.)
This correctly decides 
membership in $M$ by \cref{lem:bounded-preimage}, since every element of $M$ has a preimage of length less than $|M|\le B$.
For membership in $S_\eta$, use the same procedure, but require at least one
letter.
The process is also correct because non-empty preimages witnessing membership
to~$S_\eta$ must also have length at most~$B$. Indeed, take a shortest nonempty
preimage. If its length exceeds $|M|$, two of its first $|M|+1$ nonempty prefix
images coincide, and removing the factor between these prefixes gives a shorter nonempty preimage with the same image, a contradiction.

To test $x\taulen y$ for two candidate matrix pairs $x,y$, start with two
identity pairs. At each step, guess two letters $a,b\in\Sigma$ and multiply the
respective running images by $\eta(a)$ and $\eta(b)$.
Guessing the two words in parallel ensures that they have equal length.
Accept when the two images
are equal to $x$ and $y$ and at least one step has been performed, and reject if the
process has not succeeded in $|M|^2\le B^2$ steps: 
this suffices by \cref{lem:equal-length-tolerance}.

The membership tests for $R_{\grel}$ and $R_{\mgrel}$ are similar, so we present both of them at once for a generic $R\in\{R_{\grel},R_{\mgrel}\}$.
To test whether two given matrix pairs are related by $R=R_{\grel}$, guess any
choice of candidate matrix pairs $e,f,x,y$ and check $e,f\in S_\eta$, $e^2=e$, $f^2=f$, $x\taulen y$, and the two equalities defining $(exf,eyf)$. For $R=R_{\mgrel}$, guess matrix pairs $e,x,y$ and check $e\in S_\eta$, $e^2=e$, $e\taulen x$, $e\taulen y$, and the equalities defining $(exe,eye)$. The preceding tests verify 
in particular that the guessed elements actually belong to $M$. Thus membership in $R$, in its converse $R^{-1}\coloneq\{(y,x)\mid(x,y)\in R\}$, or in the diagonal $\Delta_M\coloneq\{(z,z)\mid z\in M\}$ can be tested in nondeterministic polynomial space. 

Lastly for the upper bound, we explain how to test membership of a pair of candidate matrix pairs in the
\kl{tolerance closure} $C$ of $R \in \{R_{\grel},R_{\mgrel}\}$.
Put $G\coloneq R\cup R^{-1}\cup\Delta_M$. By the bound established after \cref{def:tolerance}, every pair in $C$ is a product of fewer than $|M|^2\le B^2$ elements of $G$. Guess such a product by starting at $(1_M,1_M)$ and then guessing each factor, verifying its membership to~$G$ by
the preceding tests, accumulating the product of the guessed elements, 
and stopping if we exceed $B^2$ steps, which again only requires a polynomially sized counter.
This decides membership in $C$.

Thus, we can test membership in $\taug$ and $\taumg$ in nondeterministic polynomial space. Last, to test whether $C$ \kl[relation shatters]{shatters} $(M_P,M_T)$, guess two matrix pairs, we simply test that one belongs to $M_T$ and the other to $M_P\setminus M_T$, and apply the membership test to check if the pair of the two matrix pairs belongs to $C$. 
Thus, both tolerance membership and shattering are decidable in PSPACE, which establishes that we can determine
the regime of \cref{thm:target-trichotomy} in PSPACE.

\medskip

We last show the PSPACE lower bound, which adapts techniques from Greibach's
theorem~\cite{greibach1968note}. The reduction is from the problem of NFA
universality, which is well-known to be PSPACE-hard. Let $A$ be an input NFA on
which we want to decide universality, and let $\Sigma$ be its alphabet. Let $\#,
a, c, d$ be four fresh letters.
We build in polynomial time from~$A$ two NFAs $A_P$ and $A_T$
recognizing the promise language $P \coloneq (\Sigma^*) \# a^* (c|d) a^*$
and the target language
$T \coloneq (\mathrm{L}(A) \# a^* (c|d) a^*) | (\Sigma^* \# a^* c a^*)$,
where $\mathrm{L}(A)$ denotes the language accepted by~$A$.
Note that $T \subseteq P$.
We now claim the following, which suffices to conclude:
\begin{itemize}
  \item If $A$ is universal, then conditional membership to $P,T$ can be done
    in~$O(1)$ (in fact without making any queries at all);
  \item If $A$ is not universal, then conditional membership to~$P,T$ requires
    $\Theta(n)$ queries.
\end{itemize}
Indeed, if $A$ is universal, then we have $\mathrm{L}(A) = \Sigma^*$, so that
the first term of the union in the definition of~$T$ is a superset of the
second, so we can rewrite equivalently $T = \Sigma^* \# a^* (c|d) a^*$ and we have in
fact $P = T$. Thus, conditional membership for $P,T$ is trivial and can be solved with no
queries at all.

By contrast, if $A$ is not universal, there is a word $w \in \Sigma^* \setminus
\mathrm{L}(A)$. We now consider the languages $P' \coloneq (w\#)^{-1} P$ and $T'
\coloneq (w\#)^{-1} T$. We have $P' = a^* (c|d) a^*$ and $T' = a^* c a^*$,
noting in particular that the first term of the union in the definition of~$T$
contributes no terms to the quotient. We recognize that $P', T'$  is in fact
precisely $P_{\lin},T_{\lin}$ which has a linear query complexity lower bound by
\cref{lem:linear-block-words}, and we know that taking quotients preserve lower
bounds by \cref{lem:query-fixed-concatenation}. Thus, conditional membership
for~$P,T$ requires $\Omega(n)$ queries.
This establishes correctness and
concludes the proof.
\end{proof}

\section{Conclusion}\label{sec:conclusion}

In this paper, we have studied the query complexity of the \kl{conditional membership} problem for
regular languages, i.e., determining whether a word $w$ belongs to a fixed
target~$T$ under the
hypothesis that $w$ belongs to a promise~$P$.
Our main theorem gives
the three possible asymptotic regimes: $O(1)$ (where we can
intuitively decide by querying the left and right endpoints of the word),
$\Theta(\log n)$ (where we can intuitively decide by performing a
generalization of binary search), and $\Theta(n)$ (where we must examine
linearly many positions).
We have further shown that the criterion in our
trichotomy is PSPACE-complete to decide given automata representations of our
two languages $P$ and~$T$.

Our work opens several directions for further research. 
One first question would be to translate our algebraic criteria to 
language-theoretic conditions on $P$ and $T$ (independent
from the mutual presentation). Another question is computational complexity: we
do not know if
the PSPACE upper bound can be lowered
for $P$ and $T$ given as DFAs. 
We also do not know the combined complexity of our
upper bound algorithms when $P$ and $T$ are given as input.
To improve them, we could in particular avoid the Ramsey-type argument
from \cite{Jecker2021}  to find idempotent subfactors;
we may be able to use traditional pumping arguments instead, at the cost of complicating the
proofs. This could ensure block sizes that are linear in the mutual
presentation, though this is still superexponential in the size of DFAs
representing $P$ and $T$ (see, e.g., \cite{holzer2004deterministic}).

Another natural question is to fix a regular promise $P$ and determine which of
the three complexity regimes can occur across all possible regular
targets $T\subseteq P$. 
This would generalize the constant/linear dichotomy for 
non-conditional membership~\cite{AaronsonGrierSchaeffer2019}, 
which corresponds to $P = \Sigma^*$, and we suspect that it would
relate to~\cite{Moshkov2000}.
We also leave open the question of the \emph{nondeterministic} 
or \emph{randomized} query complexity in our models. For \emph{randomized} query complexity, 
we suspect our methodology goes through so the bounds would be unchanged. By contrast, the 
nondeterministic case is different:
for $P_{\log},T_{\log}$ or even
$P_{\lin},T_{\lin}$
from \cref{tab:examples}
the
nondeterministic
query complexity is constant because one single observed letter suffices to answer
positively. We suspect that the constant class
matches with $\Sigma_1[<, +1, \text{Mod}]$ of langages defined by the existential
fragment of first order logic with order and modular predicates~\cite{Chaubard06}. 

\nosectionappendix
\begin{toappendix}
\section{Additional Examples}\label[appendix]{app:additional-examples}

This section collects examples illustrating successive binary searches in
\cref{prop:lang-log-ub} and the need for multiple substitutions in
\cref{def:matched-guard-decomposition}.

\begin{example}
\label[example]{exa:log-two}
Let $\Sigma=\{a,b,c,d,e\}$, $T_{\mathrm{log}2}\coloneq a^*cb^*ce^*$, and $P_{\mathrm{log}2}\coloneq a^*(c\mid d)b^*(c\mid d)e^*$. Here, finding only the boundary between the $a$ and $b$ regions is not enough: we also need to find the boundary between the $b$ and $e$ regions. This corresponds to performing two binary searches, with each letter indicating in which region we fall.

\end{example}

\begin{example}
\label[example]{exa:log-blocks}
Let $\Sigma=\{a,b,c,d\}$ be the alphabet, $T_{\mathrm{log}3}\coloneq (aab)^*c(aab)^*c(aab)^*$, and $P_{\mathrm{log}3}\coloneq (aab)^*(c\mid d)(aab)^*(c\mid d)(aab)^*$. 
This is a variant of the previous example, where querying blocks of size~$3$ indicates in which $(aab)^*$ factor we are, depending on the remainder of the position of the~$b$ modulo~$3$.
\end{example}

\begin{example}
\label[example]{exa:log-nested}
Let $\Sigma=\{a,b,c,d,e,f\}$, $T_{\mathrm{log}4}\coloneq e^*(a^*cb^*\mid b^*ca^*)f^*$, and $P_{\mathrm{log}4}\coloneq e^*(a^*(c\mid d)b^*\mid b^*(c\mid d)a^*)f^*$. Here we illustrate a nested binary search. All $a$'s precede all $b$'s or vice versa, so until both an $a$ and a $b$ have been queried, an answer $a$ alone does not tell us which direction to search. Alternatively, a first binary search can locate the end of the initial $e$ region: querying the next position either reveals $c$ or $d$ and settles membership, or reveals $a$ or $b$ and determines the order. Thus, the direction indicated by an answer in the inner binary search may only become clear once we have completed the outer binary search.
\end{example}

\begin{example}
\label[example]{exa:linear-two}
Let $\Sigma=\{a,b,c,d\}$, $T_{\mathrm{lin}2}\coloneq a^*ca^*b^*cb^*$ and $P_{\mathrm{lin}2}\coloneq T_{\mathrm{lin}2}\cup a^*da^*b^*db^*$. The \kl{promise} requires the two occurrences from $\{c,d\}$ to agree: both are $c$, or both are $d$.
Changing only one occurrence of $c$ to $d$ violates the \kl{promise}, so the two replacements must be coordinated.

Take the \kl{mutual presentation} that identifies two words when they have the same membership in both languages within every left and right context.
Identify each word with its image in the monoid $M$.
Then $a^2=a$ and $b^2=b$, since repeating an $a$ or $b$ does not change membership in either language.
The pairs $(aca,ada)$ and $(bcb,bdb)$ belong to $R_{\cmgrel,1}$: the guards are respectively $a$ and $b$, and all guards and middle words have length $1$.
Their product is $(acabcb,adabdb)$.

This product does not belong to $R_{\cmgrel,1}$.
Indeed, one replacement must change both $c$'s to $d$'s, so its two copies of the same nonempty guard must surround both positions.
The left copy would consist only of $a$'s and the right copy only of $b$'s, which is impossible for the same nonempty word.
Thus $R_{\cmgrel,1}$ is not \kl{compatible with multiplication}.
\end{example}
\end{toappendix}

\bibliographystyle{plainurl}
\bibliography{conditional-membership-icalp2027}

\begin{thebibliography}{10}

\bibitem{AaronsonGrierSchaeffer2019}
Scott Aaronson, Daniel Grier, and Luke Schaeffer.
\newblock A quantum query complexity trichotomy for regular languages.
\newblock In {\em FOCS}, 2019.
\newblock \href {https://doi.org/10.1109/FOCS.2019.00061}
  {\path{doi:10.1109/FOCS.2019.00061}}.

\bibitem{AlonKrivelevichNewmanSzegedy2001}
Noga Alon, Michael Krivelevich, Ilan Newman, and Mario Szegedy.
\newblock Regular languages are testable with a constant number of queries.
\newblock {\em {SIAM} Journal on Computing}, 30(6), 2001.
\newblock \href {https://doi.org/10.1137/S0097539700366528}
  {\path{doi:10.1137/S0097539700366528}}.

\bibitem{AmarilliJachietPaperman2021}
Antoine Amarilli, Louis Jachiet, and Charles Paperman.
\newblock Dynamic membership for regular languages.
\newblock In {\em ICALP}, volume 198, 2021.
\newblock \href {https://doi.org/10.4230/LIPIcs.ICALP.2021.116}
  {\path{doi:10.4230/LIPIcs.ICALP.2021.116}}.

\bibitem{AmarilliLabbePaperman2026}
Antoine Amarilli, Sebastien Labbe, and Charles Paperman.
\newblock {Out-of-order} membership in regular languages.
\newblock In {\em ICALP}, volume 374, 2026.
\newblock \href {https://doi.org/10.4230/LIPIcs.ICALP.2026.161}
  {\path{doi:10.4230/LIPIcs.ICALP.2026.161}}.

\bibitem{BCST92}
David A.~Mix Barrington, Kevin~J. Compton, Howard Straubing, and Denis
  Th{\'{e}}rien.
\newblock Regular languages in {NC}{\({^1}\)}.
\newblock {\em J. Comput. Syst. Sci.}, 44(3), 1992.
\newblock \href {https://doi.org/10.1016/0022-0000(92)90014-A}
  {\path{doi:10.1016/0022-0000(92)90014-A}}.

\bibitem{BathieFijalkowMascle2025}
Gabriel Bathie, Nathana\"{e}l Fijalkow, and Corto Mascle.
\newblock The trichotomy of regular property testing.
\newblock In {\em ICALP}, volume 334, 2025.
\newblock \href {https://doi.org/10.4230/LIPIcs.ICALP.2025.141}
  {\path{doi:10.4230/LIPIcs.ICALP.2025.141}}.

\bibitem{BuhrmanDeWolf2002}
Harry Buhrman and Ronald de~Wolf.
\newblock Complexity measures and decision tree complexity: A survey.
\newblock {\em Theoretical Computer Science}, 288(1), 2002.
\newblock \href {https://doi.org/10.1016/S0304-3975(01)00144-X}
  {\path{doi:10.1016/S0304-3975(01)00144-X}}.

\bibitem{Chaubard06}
Laura Chaubard, Jean-Éric Pin, and Howard Straubing.
\newblock First order formulas with modular predicates.
\newblock In {\em LICS}, 2006.
\newblock \href {https://doi.org/10.1109/LICS.2006.24}
  {\path{doi:10.1109/LICS.2006.24}}.

\bibitem{chen2009learning}
Yu-Fang Chen, Azadeh Farzan, Edmund~M Clarke, Yih-Kuen Tsay, and Bow-Yaw Wang.
\newblock Learning minimal separating {DFA}'s for compositional verification.
\newblock In {\em International Conference on Tools and Algorithms for the
  Construction and Analysis of Systems}, 2009.

\bibitem{dell2024dfaminer}
Daniele Dell’Erba, Yong Li, and Sven Schewe.
\newblock Dfaminer: Mining minimal separating {DFA}s from labelled samples.
\newblock In {\em International Symposium on Formal Methods}, 2024.

\bibitem{Eilenberg1976}
Samuel Eilenberg.
\newblock {\em Automata, languages, and machines. Volume B}.
\newblock Academic Press, 1976.

\bibitem{DBLP:journals/theoretics/GanardiHLMS25}
Moses Ganardi, Danny Hucke, Markus Lohrey, Konstantinos Mamouras, and Tatiana
  Starikovskaya.
\newblock Regular languages in the sliding window model.
\newblock {\em TheoretiCS}, 4, 2025.
\newblock URL: \url{https://doi.org/10.46298/theoretics.25.8}, \href
  {https://doi.org/10.46298/THEORETICS.25.8}
  {\path{doi:10.46298/THEORETICS.25.8}}.

\bibitem{geffert2015classical}
Viliam Geffert and Abuzer Yakaryilmaz.
\newblock Classical automata on promise problems.
\newblock {\em Discrete Mathematics \& Theoretical Computer Science},
  17(Automata, Logic and Semantics), 2015.

\bibitem{goldreich2006promise}
Oded Goldreich.
\newblock On promise problems: A survey.
\newblock In {\em Theoretical Computer Science: Essays in Memory of Shimon
  Even}. 2006.

\bibitem{goralvcik1986discerning}
Pavel Goral{\v{c}}{\'\i}k and V{\'a}clav Koubek.
\newblock On discerning words by automata.
\newblock In {\em ICALP}, 1986.

\bibitem{greibach1968note}
Sheila Greibach.
\newblock A note on undecidable properties of formal languages.
\newblock {\em Mathematical systems theory}, 2(1), 1968.

\bibitem{holzer2004deterministic}
Markus Holzer and Barbara K{\"o}nig.
\newblock On deterministic finite automata and syntactic monoid size.
\newblock {\em Theoretical Computer Science}, 327(3), 2004.

\bibitem{HopcroftMotwaniUllman2001}
John~E. Hopcroft, Rajeev Motwani, and Jeffrey~D. Ullman.
\newblock {\em Introduction to automata theory, languages, and computation}.
\newblock Addison-Wesley, 2nd edition, 2001.

\bibitem{Jecker2021}
Isma{\"e}l Jecker.
\newblock A {R}amsey theorem for finite monoids.
\newblock In {\em STACS}, volume 187, 2021.
\newblock \href {https://doi.org/10.4230/LIPIcs.STACS.2021.44}
  {\path{doi:10.4230/LIPIcs.STACS.2021.44}}.

\bibitem{martens2023trichotomy}
Wim Martens, Matthias Niewerth, and Tina Popp.
\newblock A trichotomy for regular trail queries.
\newblock {\em Logical Methods in Computer Science}, 19, 2023.

\bibitem{Moshkov2000}
Mikhail Moshkov.
\newblock Decision trees for regular language word recognition.
\newblock {\em Fundamenta Informaticae}, 41(4), 2000.
\newblock \href {https://doi.org/10.3233/FI-2000-41405}
  {\path{doi:10.3233/FI-2000-41405}}.

\bibitem{Pin1986}
Jean-{\'{E}}ric Pin.
\newblock {\em Varieties of formal languages}.
\newblock North Oxford Academic, 1986.

\bibitem{Pin2025}
Jean-{\'E}ric Pin.
\newblock Mathematical foundations of automata theory, 2025.
\newblock Version of March 24, 2025.
\newblock URL: \url{https://www.irif.fr/~jep/PDF/MPRI/MPRI.pdf}.

\bibitem{PlaceZeitoun2016}
Thomas Place and Marc Zeitoun.
\newblock Separating regular languages with first-order logic.
\newblock {\em Logical Methods in Computer Science}, 12(1), 2016.
\newblock \href {https://doi.org/10.2168/LMCS-12(1:5)2016}
  {\path{doi:10.2168/LMCS-12(1:5)2016}}.

\bibitem{Radelecki1991}
S{\'a}ndor Radelecki.
\newblock Compatible tolerances on groupoids.
\newblock {\em Czechoslovak Mathematical Journal}, 41(3), 1991.
\newblock \href {https://doi.org/10.21136/CMJ.1991.102478}
  {\path{doi:10.21136/CMJ.1991.102478}}.

\bibitem{Savitch1970}
Walter~J. Savitch.
\newblock Relationships between nondeterministic and deterministic tape
  complexities.
\newblock {\em Journal of Computer and System Sciences}, 4(2), 1970.
\newblock \href {https://doi.org/10.1016/S0022-0000(70)80006-X}
  {\path{doi:10.1016/S0022-0000(70)80006-X}}.

\bibitem{Steinberg2016}
Benjamin Steinberg.
\newblock {\em Representation theory of finite monoids}.
\newblock Springer, Cham, 2016.
\newblock \href {https://doi.org/10.1007/978-3-319-43932-7}
  {\path{doi:10.1007/978-3-319-43932-7}}.

\bibitem{TessonTherien2005}
Pascal Tesson and Denis Th{\'{e}}rien.
\newblock Complete classifications for the communication complexity of regular
  languages.
\newblock {\em Theory of Computing Systems}, 38(2), 2005.
\newblock \href {https://doi.org/10.1007/s00224-004-1190-2}
  {\path{doi:10.1007/s00224-004-1190-2}}.

\end{thebibliography}

\end{document}